\documentclass[a4paper,onecolumn,11pt,accepted=2022-06-28]{quantumarticle}
\pdfoutput=1
\usepackage[utf8]{inputenc}
\usepackage[english]{babel}
\usepackage[T1]{fontenc}
\usepackage{hyperref}
\usepackage{tikz}
\usepackage{lipsum}
\usepackage{graphicx}
\usepackage{dcolumn}
\usepackage{bm}
\usepackage{braket}
\usepackage{amsmath}
\usepackage{bbold}
\usepackage{xcolor}
\usepackage{amsthm}
\usepackage{physics}
\newtheorem{theorem}{Theorem}
\newtheorem{proposition}{Proposition}
\newtheorem{corollary}{Corollary}[theorem]
\newtheorem{lemma}{Lemma}
\theoremstyle{definition}
\newtheorem{definition}{Definition}
\theoremstyle{remark}
\newtheorem*{remark}{Remark}
\newcommand{\Cov}{\mathrm{Cov}}
\newcommand{\Var}{\mathrm{Var}}

\begin{document}


\title{Work and Fluctuations: \newline Coherent vs. Incoherent Ergotropy Extraction}

\author{Marcin {\L}obejko}
\affiliation{Institute of Theoretical Physics and Astrophysics, Faculty of Mathematics, Physics and Informatics, University of Gda\'nsk, 80-308 Gdansk, Poland}
\affiliation{International Centre for Theory of Quantum Technologies, University of Gda\'nsk, 80-308 Gda\'nsk, Poland}
 \orcid{0000-0002-7159-5502}
\email{marcin.lobejko@ug.edu.pl}

\maketitle

\begin{abstract}
We consider a quasi-probability distribution of work for an isolated quantum system coupled to the energy-storage device given by the ideal weight. Specifically, we analyze a trade-off between changes in average energy and changes in weight's variance, where work is extracted from the coherent and incoherent ergotropy of the system. Primarily, we reveal that the extraction of positive coherent ergotropy can be accompanied by the reduction of work fluctuations (quantified by a variance loss) by utilizing the non-classical states of a work reservoir. On the other hand, we derive a fluctuation-decoherence relation for a quantum weight, defining a lower bound of its energy dispersion via a dumping function of the coherent contribution to the system's ergotropy. Specifically, it reveals that unlocking ergotropy from coherences results in high fluctuations, which diverge when the total coherent energy is unlocked. The proposed autonomous protocol of work extraction shows a significant difference between extracting coherent and incoherent ergotropy: The former can decrease the variance, but its absolute value diverges if more and more energy is extracted, whereas for the latter, the gain is always non-negative, but a total (incoherent) ergotropy can be extracted with finite work fluctuations. Furthermore, we present the framework in terms of the introduced quasi-probability distribution, which has a physical interpretation of its cumulants, is free from the invasive nature of measurements, and reduces to the two-point measurement scheme (TPM) for incoherent states. Finally, we analytically solve the work-variance trade-off for a qubit, explicitly revealing all the above quantum and classical regimes.  
\end{abstract}

\section{Introduction}
Miniaturization of energy-information processing devices is indisputably a cornerstone of nowadays technology. The growing ability to control and design systems ``at the bottom'' makes it possible to construct micro machines that transfer heat and work similarly to well-known macroscopic heat engines. However, thermodynamics at this scale faces two significant phenomena unobserved in the macroscopic reality. Firstly, the law of big numbers does not hold for small systems, which results in fluctuations of thermodynamic quantities comparable to the average values. Secondly, in the micro-world, quantum effects may play a significant role. In particular, the non-classical contributions from the quantum interference can appear in statistics of measured quantities.

Quantum thermodynamics is a program of adapting those two features into one universal framework.
The most significant milestone to incorporate the intrinsic randomness was the formulation of \emph{fluctuation theorems} \cite{Bochkov1977, Jarzynski1997, Crooks1999, Esposito2009, Campisi2011}, which are manifestations of the Laws of Thermodynamics expressed in terms of probability distributions. For quantum systems, the best-known concept of how to construct such distribution is the \emph{two-point measurement scheme} (TPM), leading to fluctuation theorems for work \cite{Talkner2007, Campisi2011, Esposito2009}, heat \cite{Jarzynski2004} or entropy production \cite{deffner2011, Manzano2018} (see also another proposals beyond the TPM, e.g., \cite{Ito2019, Debarba_2019, Micadei2020}). However, regarding the work distribution, quantum coherences in this approach are fundamentally rejected due to the invasive nature of the projective measurement. Hence, in order to incorporate quantum interference, other frameworks have been proposed like an idea of the \emph{work operator} \cite{Yukava2000, Allahverdyan2004work}. These attempts indeed can include quantum coherences; nevertheless, the obtained probability distributions are inconsistent with the fluctuation theorems if applied for incoherent states. Moreover, the work operator also has another drawback, i.e., it predicts zero work fluctuations for non-zero energy exchange for some initial states \cite{Allahverdyan2004work}. 

Thus, we observe some fundamental trade-offs in different definitions of work in a quantum domain. Later, this incompatibility has been rigorously formulated in the form of the no-go theorem \cite{Perarnau2017, Baumer2018}: Any generalized measurement of the work probability distribution that reproduces the average energy for all states (including those with coherences) is incompatible with fluctuation theorems (i.e., the TPM distribution) if applied to (initial) incoherent states. In the light of this result, another proposed solution, to quantitatively describe fluctuating quantities for coherent states, is to abandon the positivity of probability distributions and replace them with possibly negative-valued \emph{quasi-probabilities} \cite{Allahverdyan2014, Solinas2015, Solinas2016, Miller2017, Lostaglio2018, Levy2020}. In principle, these should recover the statistics enriched by interference terms, and for classical states, they should converge to distributions obeying classical fluctuations theorems (see the comprehensive review \cite{Baumer2018}).  

Despite the above obstacles, there is another fundamental problem in reconciling fluctuation theorems with quantum coherence. In all of the frameworks mentioned above, attempting to formulate the (quasi) distribution of work, the system on which the work is done (e.g., a load) is implicit and assumed, in general, to be an external classical field. As long as it is a good approximation for many cases, it is generally not adequate in plenty of experimental situations, where the energy-storage device should be treated autonomously within a fully quantum-mechanical framework. It turns out that treating the energy-storage device explicitly changes the physics of the work extraction process significantly. In particular, it leads to the so-called \emph{work-locking} \cite{Korzekwa2016, Lobejko2021} (i.e., an inability of work extraction from coherences), which non-autonomous frameworks cannot even capture. 

The problem of a proper model of the work reservoir is as hard as the problem of a correct work probability distribution. Nevertheless, we possess an essential hint since, in the first place, we need to make it consistent with (classical) fluctuation theorems. Here, the most crucial insight is realizing that the probability distribution constructed from the TPM solely relies on the energy differences rather than its absolute values \cite{Aberg2018, Alhambra2016}. For an explicit work reservoir, this is precisely equivalent to the translational invariance within the space of its energy states. The concept of such a device is called the \emph{quantum weight} and was first proposed in \cite{Brunner2012, Skrzypczyk2014}. Consequently, by this symmetry, we possess a fully quantum framework that recreates all of the fluctuation theorems (cf. \cite{Bartosik2021} where are discussed corrections if the symmetry is violated). Moreover, it was also shown that the optimal work extracted by the weight is limited by the \emph{ergotropy} \cite{Lobejko2020, Lobejko2021}, which, with the notion of passivity, is another building block of quantum thermodynamics (see, e.g., \cite{Pusz1978, Allahverdyan2004}). Hence, taking the weight as a proper model for fluctuation theorems allows us to analyze the quantum coherences from an autonomous perspective. So far, the work-locking for a quantum weight has been completely described as a decoherence process of a coherent contribution to the ergotropy, described by an effective, the so-called \emph{control-marginal state} \cite{Lobejko2021}. 

In this paper, we take a step further, and instead of analyzing just the averages (i.e., the extracted work), we also formulate general results for the work fluctuations: the formula for changes of the weight's energy variance and bounds for its absolute values. Although we formulate our main results in a framework with an explicit work reservoir (i.e., the weight), firstly, we propose a general quasi-distribution constructed as a convolution of the one-point probability functions. The introduced quasi-distribution avoids the problem of the measurement invasiveness, and its statistical cumulants are related to changes of (one-point) cumulants (e.g., the first one is equal to work and the second to a change in the energy variance). Then, we adopt the concept to the quantum weight model, showing especially that in the classical (incoherent) limits, it converges to the TPM or work operator probability distribution. 

The main result of this research is the formula for change of the energy-storage variance (during the work extraction process). As it was shown in previous papers \cite{Lobejko2020, Lobejko2021}, the first cumulant of the quasi-distribution (i.e., change in the average energy of the weight) is equal to the first cumulant of the work operator but calculated for the control-marginal state. In the formula for the second cumulant (i.e., change in weight's variance), similarly appears an analogous term given by the second cumulant of the work operator; however, an additional term is also present, which is a fully quantum contribution that cannot be captured in non-autonomous frameworks. It is crucial since it may take negative values compared to the first one, and, as a consequence, it is possible to decrease the variance due to quantum interference. Moreover, the most surprising is that the second term can even overcome the first one, which results in a net decrease in energy dispersion. This entirely quantum feature, which leads to a new qualitative phenomenon (i.e., squeezing of work fluctuations), shows the importance of analyzing work distribution within a fully quantum framework. 

Secondly, we provide bounds for the weight's energy dispersion absolute value. In particular, we relate the bound for a final (or initial) dispersion with the dumping function that characterizes the aforementioned ergotropy decoherence process. Due to this interplay between work-locking caused by a decoherence process and dispersion of the energy, we call the proved inequality as the \emph{fluctuation-decoherence relation}. The main conclusion from this relation is that unlocking the system's total (coherent) ergotropy always results in a divergence of the weight's energy dispersion. Again, we stress that it is solely a quantum effect caused by the Heisenberg uncertainty principle. 

Finally, we apply the introduced general formulas to a qubit, and we derive an entire phase space of possible work-variance pairs for a quantum weight coupled to the arbitrary initial state of the system. Then, we go through a few particular examples and explicitly compare the classical and quantum regimes. In particular, we present a non-classical process of reducing the energy fluctuations for a `cat-state' of the weight (i.e., two energetically separate Gaussian wave packets), where both peaks collapse to each other (hence reducing the variance). In contrast, the classical protocol can only broaden each of them.

The paper is organized as follows. In Section \ref{work_definitions_section} we quickly review work distributions coming from the work operator and the TPM scheme, and then we propose a new quasi-distribution based on the convolution of one-point measurements. Section \ref{work_extraction_and_weight_section} is an introduction to the weight model, work extraction protocol, and all of the used mathematical methods. Next, in Section \ref{work_quasi_distribution_section} we apply a definition of the introduced quasi-distribution to a weight model and analyze its statistical properties and classical limits. In Section \ref{work_variance_gain_section} and \ref{bounds_section} we present our main results. First, we introduce the formulas for the average energy and variance changes and then the bounds for its absolute value (expressed as the fluctuation-decoherence relation). Finally, in Section \ref{qubit_section} we solve the problem of the work-variance trade-off for a qubit. We discuss quantum and classical regimes when coherent or incoherent ergotropy is extracted. We summarize the results with a discussion in Section \ref{summary_section}. 

\section{Work definitions} \label{work_definitions_section}

We start with a quick review of the work operator concept and the TPM probability distribution, which shall be used as our reference points. After that, in the subsection \ref{double_subsection}, we introduce a new definition of a quasi-probability based on the idea of the convolution of one-point distributions. 

\subsection{Preliminaries}
We shall start with a discussion of the non-autonomous closed system. In accordance, we assume that the initial state is described by the density matrix $\hat \rho_i$, and we consider a thermodynamic protocol given by the following unitary map, i.e.,
\begin{equation}\label{unitary_evolution}
\hat \rho_i \to \hat V \hat \rho_i \hat V^\dag.    
\end{equation} 
In general, the unitary $\hat V$ comes from an integration of the time-dependent Hamiltonian (describing the driving force), starting from the initial $\hat H_i = \sum_k \epsilon_k^i \Pi_k^i$, and resulting in the final operator $\hat H_f = \sum_k \epsilon_k^f \Pi_k^f$.

Throughout the paper we use a following definition of the $n$-th moment:
\begin{eqnarray}
\langle w^n \rangle_{\text{X}} = \int dw \ w^n \ P_X(w),
\end{eqnarray}
and the $n$-th cumulant:
\begin{equation} \label{cumulant_definition}
    \langle \langle w^n \rangle \rangle_{X} = \frac{d^n}{d(it)^n} \log  [\langle e^{i t w} \rangle_{X}] \Big{|}_{t=0},
\end{equation} 
for the probability distribution $P_X$. 

\subsection{Work operator}
Firstly, we review the idea of the work operator, which is defined as follows: 
\begin{equation} \label{averaged_work}
    \hat W = \hat H_i - \hat V^\dag \hat H_f \hat V.
\end{equation}
According to the introduced protocol \eqref{unitary_evolution}, since the system is isolated from the thermal environment, the mean change of its energy is identified with the average work, namely 
\begin{equation}
    \langle w \rangle_{\text{W}} = \Tr[\hat W \hat \rho_i].
\end{equation}
The important point is that the above definition includes the quantum contributions coming from the initial coherences  in a state $\hat \rho_i$. Furthermore, via the spectral decomposition of the work operator, i.e.,
\begin{equation}
    \hat W = \sum_i w_i \dyad{w_i},
\end{equation}
one can define the following probability distribution: 
\begin{equation} \label{work_operator_dist}
    P_{\text{W}}(w) = \sum_i \delta(w - w_i) \ \bra{w_i} \hat \rho_i \ket{w_i}. 
\end{equation}
The formula suggests that eigenvalues $w_i$ with associated probabilities $\bra{w_i} \hat \rho_i \ket{w_i}$ can be interpreted as the outcomes of the fluctuating work; however, according to the no-go theorem \cite{Perarnau2015} they do not satisfy the fluctuation theorems for classical systems (see e.g. \cite{Allahverdyan2014}).  

\subsection{(single) Two-point measurement (TPM)}
In order to construct the probability distribution satisfying fluctuation theorems, the TPM scheme was introduced \cite{Talkner2007, Campisi2011, Esposito2009}. In this approach, we perform the measurement on the initial state $\hat \rho_i$ (in its initial energy basis), which results with the outcome $\epsilon_n^i$, observed with the probability $p_n = \Tr[\hat \Pi_n^i \hat \rho_i]$. After that, the system is projected into the state: $\hat \rho_i \to \hat \Pi_n^i$. Next, we apply the protocol, i.e., $\Pi_n^i \to \hat V \hat \Pi_n^i \hat V^\dag$, and then the second measurement is performed, giving the outcome $\epsilon_m^f$ with the (conditional) probability $p_{m|n} = \Tr[\hat \Pi_m^f \hat V \hat \Pi_n^i \hat V^\dag]$. Finally, the fluctuating work is defined as the outcomes difference: $w = \epsilon_m^f - \epsilon_n^i$ with the associated (joint) probability $p_{m,n} = p_{m|n} p_n$.

According to the proposed two-point measurement scheme, the probability distribution of the fluctuating work is equal to: 
\begin{equation} \label{TPM_distribution}
    P_{\text{TPM}}(w) = \sum_{m,n} \delta(w - \epsilon_m^f + \epsilon_n^i) p_{m|n} p_n.
\end{equation}
The main problem with this approach is the invasive nature of the first measurement, such that all of the initial coherences are destroyed. As a consequence, in general the first moment (i.e., the average work) is incompatible with the value calculated via the work operator, namely 
\begin{equation}
    \langle w \rangle_{\text{TPM}}  \neq  \langle w \rangle_{\text{W}}.
\end{equation}

\subsection{(double) One-point measurement} \label{double_subsection}
To overcome the problem with the invasive nature of the first measurement, we introduce a quasi-probability distribution. Contrary to the TPM, here, the basic idea is to measure independently initial and final energy distributions and then construct a quasi-probability via the convolution operation.  

In accordance, firstly, we perform the initial measurement on the state  $\hat \rho_i$, which gives us outcomes $\epsilon_n^i$ with associated probabilities $p_n^i = \Tr[\hat \Pi_n^i \hat \rho_i]$. It is precisely the same procedure as for the TPM; however, in this case, instead of evolving the resulting (projected) state, it is evolved the initial (unperturbed) state $\hat \rho_i$. This results in a new statistical ensemble described by a density matrix $\hat \rho_f = \hat V \hat \rho_i \hat V^\dag$, and once again the statistics of final energy outcomes $\epsilon_n^f$ with probabilities $p_n^f = \Tr[\hat \Pi_n^f \hat V \hat \rho_i \hat V^\dag]$ is measured. Finally, we construct the initial and final energy distributions, i.e.,
\begin{equation}
    P_{i,f} (\epsilon) = \sum_n \delta(\epsilon - \epsilon_n^{i,f}) p_n^{i,f},
\end{equation}
and we introduce a real function $P_{\text{QP}}$, such that
\begin{equation}
    P_i (\epsilon_i) = \int d \epsilon_f P_{\text{QP}}(\epsilon_i - \epsilon_f) P_f (\epsilon_f).
\end{equation}
Finally, we interpret the kernel $P_{\text{QP}}(w)$ as the quasi-distribution of fluctuating work $w$, which formally is given by: 
\begin{equation} \label{quasi_probability}
    P_{\text{QP}} (w) = \frac{1}{2\pi} \int \ dt \  e^{-i w t} \frac{\Tr[e^{i \hat H_i t} \hat \rho_i]}{\Tr[e^{i \hat H_f t} \hat \rho_f]},
\end{equation}
where 
\begin{eqnarray}
\Tr[e^{i \hat H_{i,f} t} \hat \rho_{i,f}] = \int d \epsilon \ e^{i \epsilon t} P_{i,f} (\epsilon).
\end{eqnarray}
As was mentioned, the function $P_{\text{QP}}$ can have negative values, such that it cannot be interpreted as the proper probability distribution. Nevertheless, its $n$-th cumulant, i.e., $\langle \langle w^n \rangle \rangle_{\text{QP}}$ has a very straightforward interpretation as a difference of cumulants of the initial $P_i(\epsilon)$ and final $P_f(\epsilon)$ energy distributions, namely
\begin{equation}
\begin{split}
    \log[\langle e^{i t w} \rangle_{\text{QP}}] &= \log[\frac{1}{2\pi} \int ds \int d\omega \  e^{-i w (s-t)} \frac{\Tr[e^{i \hat H_i s} \hat \rho_i]}{\Tr[e^{i \hat H_f s} \hat \rho_f]}] \\
    &=\log[\Tr[e^{i \hat H_i t}]] - \log[\Tr[e^{i \hat H_f t}]]
\end{split}
\end{equation}
such that, according to the definition \eqref{cumulant_definition},
\begin{equation} \label{cumulant_difference}
    \langle \langle w^n \rangle \rangle_{\text{QP}} = \langle \langle \epsilon^n \rangle \rangle_i - \langle \langle \epsilon^n \rangle \rangle_f.
\end{equation}
In particular, for the first order ($n=1$), which is equal to the first moment, we have:
\begin{equation}
    \langle w \rangle_{\text{QP}}  = \langle w \rangle_{\text{W}},
\end{equation}
i.e., we recover the averaged work value given by the work operator \eqref{averaged_work}. 

The above results show that the quasi-distribution $P_{\text{QP}}$ encodes the proper work statistic enriched with quantum interference effects. Despite that we consider the work distribution here, we stress that definition \eqref{quasi_probability} can be used in general to quantify other thermodynamic quantities (like the heat flow). In the following sections, we will apply the introduced idea within the framework with an explicit work reservoir given by the quantum weight. In particular, we will prove that it reduces to the TPM scheme for incoherent states.

\section{Work extraction and quantum weight} \label{work_extraction_and_weight_section}
This section introduces a work extraction protocol for which we later define a quasi-probability distribution. In particular, the model is based on the First Law of Thermodynamics, i.e., strict energy conservation between coupled subsystems, and an explicit model of the energy-storage device known as the quantum weight \cite{Skrzypczyk2014, Alhambra2016, Lobejko2021}. 

We consider a composite Hilbert space $\mathcal{H}_{\mathcal{S}} \otimes \mathcal{H}_{\mathcal{W}}$ of the system $\mathcal{S}$ and the quantum weight $\mathcal{W}$. The former (system $\mathcal{S}$) is assumed to be finite-dimensional with a discrete energy spectrum, whereas the quantum weight $\mathcal{W}$ has a continuous and unbounded spectrum. Namely, the free Hamiltonians are given by:
\begin{equation}
    \hat H_S = \sum_i \epsilon_i \dyad{\epsilon_i}_S, \ \hat  H_W = \int d E \ E \dyad{E}_W.
\end{equation}

The work extraction protocol is understood as the energy transfer between subsystems, where a gain of the weight's average energy is interpreted as positive work. Throughout the paper, we assume an initial state given by a product $\hat \rho = \hat \rho_S \otimes \hat \rho_W$, and the unitary evolution, i.e., 
\begin{eqnarray}
\hat \rho_S \otimes \hat \rho_W \to \hat U \hat \rho_S \otimes \hat \rho_W \hat U^\dag,
\end{eqnarray}
where the operator $\hat U$  is specified in the next subsection. 

Throughout the paper, we denote by $S$ or $W$  subscript the operators that act entirely on the system or the weight Hilbert space, respectively, whereas operators without subscript act on both.

\subsection{Energy-conserving and translationally-invariant unitaries}
The crucial point of the whole model is to define a particular class of unitary operators such that the quantum weight can be interpreted as the work reservoir (i.e., the energy flowing from the system to the weight has no ``heat-like'' contribution). The weight model (defined below) overcame this work definition problem in the quantum regime. Essentially, firstly it was proven that it cannot decrease the entropy of the system \cite{Skrzypczyk2014, Alhambra2016}, and later, more precisely, that the optimal work is given by the system's ergotropy \cite{Lobejko2020, Lobejko2021}. 

The class of unitaries is defined by two symmetries: (i) the energy-conservation and (ii) the translational-invariance, such that the following commutation relations are satisfied: 
\begin{equation} \label{commutation_relations}
     [\hat U, \hat H_S + \hat  H_W ] = 0, \ [\hat U, \hat \Delta_W ] = 0,
\end{equation}
where $\hat \Delta_W$ is a generator of the displacement operator $\hat \Gamma_W(\epsilon) = e^{-i \hat \Delta_W \epsilon}$, obeying the canonical commutation relation $[ \hat H_W, \hat \Delta_W] = i$, such that $\hat \Gamma_W(\epsilon)^\dag \hat  H_W \hat \Gamma_W(\epsilon) = \hat  H_W + \epsilon$. We call the conjugate observable $\hat \Delta_W$ the time operator, and its eigenstates $\ket{t}_W$ (i.e., $\hat \Delta_W \ket{t}_W = t \ \ket{t}_W$), the time states. Then, we introduce the following probability density functions:
\begin{equation} \label{energy_time_dist}
    f(E) = \bra{E} \hat \rho_W \ket{E}, \ \ g(t) = \bra{t} \hat \rho_W \ket{t},
\end{equation}
with the variances:
\begin{equation} \label{variance_time_energy}
    \begin{split}
        \sigma_E^2 &= \int dx \ x^2 \ f(x) - \left(\int dx \ x \ f(x) \right)^2, \\
        \sigma_t^2 &= \int dx \ x^2 \ g(x) - \left( \int dx \ x \ g(x) \right)^2.
    \end{split}
\end{equation}

Finally, any unitary operator $\hat U$ obeying commutation relations \eqref{commutation_relations} is given in the form  \cite{Alhambra2016, Lobejko2020}: 
\begin{equation} \label{unitary_with_S}
    \hat U = \hat S^\dag \hat V_S \hat S,
\end{equation}
with 
\begin{equation} \label{S_operator}
    \hat S = e^{-i \hat H_S \otimes \hat \Delta_W},
\end{equation}
and $\hat V_S$ is the unitary operator acting solely on the system Hilbert space $\mathcal{H}_\mathcal{S}$. 

\subsection{Coherent and incoherent work extraction}
In the following sections, we will see that the unitary $\hat V_S$ can be associated with the evolution operator introduced in Section \ref{work_definitions_section} for the non-autonomous framework with an implicit work reservoir. In analogy, we define the following work operator:
\begin{equation} \label{work_operator}
    \hat W_S = \hat H_S - \hat V_S^\dag \hat H_S \hat V_S.
\end{equation}
However, in contrast to Eq. \eqref{averaged_work}, here, the protocol is cyclic, such that the initial and final Hamiltonian is the same (i.e., given by $\hat H_S$). Consequently, the maximum change of the average energy with respect to all unitaries $\hat V_S$ is given by the ergotropy of the state $\hat \rho_S$ \cite{Allahverdyan2004}:
\begin{eqnarray} \label{ergotropy_definition}
R(\hat \rho_S) = \max_{\hat V_S} \Tr[\hat W_S \hat \rho_S].
\end{eqnarray}
Compared to Eq. \eqref{averaged_work}, one can get an impression that the autonomous framework presented here is less general since it is constrained only to the cyclic protocols. However, it can be easily generalized by adding the additional subsystem (i.e., the so-called `clock'), which controls changes of the system's Hamiltonian (see, e.g., \cite{Horodecki2013, Alhambra2016, Aberg2018}), or, another way around, one can assume that the $\mathcal{S}$ is a composite system (involving the proper one and the clock). 

In this framework, $\hat V_S$ is an arbitrary unitary operator acting on the system's Hilbert space. However, one can consider a subset of the so-called incoherent unitaries, denoted by $\hat V_S^I$, which members correspond to operations that permute the energy states (up to irrelevant phase factors) and thus preserve coherences \cite{Streltsov2017}. Within the energetic context, the \emph{incoherent work operator}, i.e., $\hat W_S^I = \hat H_S - \hat{V}_S^I{}^\dag \hat H_S \hat V_S^I$, satisfies the following commutation relation:
\begin{equation} \label{incoherent_work}
    [\hat W_S^I, \hat H_S] = 0,
\end{equation}
such that the average work is extracted solely from the diagonal part. Conversely, any operator \eqref{work_operator} that does not commute with the Hamiltonian $\hat H_S$ is the \emph{coherent work operator}, i.e., it affects coherences through the process of work extraction. 

According to this, the ergotropy \eqref{ergotropy_definition} can be divided into incoherent and coherent contribution, i.e., $R(\hat \rho_S) = R_I(\hat \rho_S) + R_C(\hat \rho_S)$. The former, being a part of energy extracted solely from a diagonal, is defined as:
\begin{eqnarray}
R_I(\hat \rho_S) = \max_{\hat V_S^I} \Tr[\hat W_S^I \hat \rho_S],
\end{eqnarray}
and then the coherent contribution is introduced by the formula:
\begin{eqnarray}
R_C(\hat \rho_S) = R(\hat \rho_S) - R_I(\hat \rho_S).
\end{eqnarray}

\subsection{Wigner function and control-marginal state}
To get a better insight of later calculations, we describe the weight state in terms of the Wigner function:
\begin{equation} \label{wigner_function}
    W(E,t) = \frac{1}{2\pi} \int d\omega \ e^{i \omega t} \Tr[\hat \rho_W \dyad{E + \frac{\omega}{2}}{E - \frac{\omega}{2}}_W],
\end{equation}
such that the probability density functions for energy and time states \eqref{energy_time_dist} are given by the marginals:
\begin{equation}
    f(E) = \int dt \ W(E,t), \ \ g(t) = \int dE \ W(E,t).
\end{equation}

Next, the crucial part of the introduced here framework is a definition of the effective density operator, which we call the \emph{control state} and \emph{control-marginal state} (see \cite{Lobejko2020, Lobejko2021}).
\begin{definition}[Control and control-marginal state] \label{control__marginal_state_def}
We define the \emph{control state}, according to the unitary transformation $\hat S$ \eqref{S_operator}, as:
\begin{equation} \label{control_state}
    \hat \sigma = \hat S \hat \rho \hat S^\dag.
\end{equation}
In accordance, an effective state of the system is the \emph{control-marginal state}:
\begin{equation} \label{control__marginal_state}
    \hat \sigma_S = \Tr_W[\hat S \hat \rho \hat S^\dag].
\end{equation}
\end{definition}
As we will see later, those states incorporates the crucial information on the work extraction process.

Furthermore, for the product state $\hat \rho = \hat \rho_S \otimes \hat \rho_W$, the control-marginal state is given by the mixture of free-dynamics unitaries averaged over a distribution of the weight's time states, namely
\begin{equation} \label{control_marginal}
    \hat \sigma_S = \int dt \ g(t) \ e^{-i \hat H_S t} \hat \rho_S e^{i \hat H_S t}.
\end{equation}
Since the channel is given as a convex combination of unitaries, it corresponds to the pure decoherence. 

\section{Work quasi-distribution} \label{work_quasi_distribution_section}
\subsection{Quasi-probability density function}
We are ready to apply the definition \eqref{quasi_probability} to the composite system with an explicit work reservoir. Based on the unitary protocol $\hat \rho_S \otimes \hat \rho_W \to \hat U \hat \rho_S \otimes \hat \rho_W \hat U^\dag$, we calculate the initial and final expected value of the operator $e^{i \hat H_W t}$, such that for the initial state we have
\begin{eqnarray}
\Tr[e^{i \hat H_W t} \hat \rho_S \otimes \hat \rho_W] = \Tr[e^{i \hat H_W t} \hat \rho_W],
\end{eqnarray}
whereas for the final:
\begin{eqnarray}
\Tr[e^{i \hat H_W t} \hat U \hat \rho_S \otimes \hat \rho_W \hat U^\dag]. 
\end{eqnarray}
According to definition \eqref{quasi_probability}, the quasi-probability distribution is given by the Fourier transform of the ratio of those two, i.e., we formally obtain the expression:
\begin{equation} \label{weight_quasi_probability}
    P_{\text{QP}}(w) = \frac{1}{2\pi} \int ds \ e^{-i w s} \frac{\Tr[\hat U^\dag e^{i \hat H_W s} \hat U \hat \rho_S \otimes \hat \rho_W]} {\Tr[ e^{i \hat H_W s} \hat \rho_W]}.
\end{equation}
Notice that in contrast to Eq. \eqref{quasi_probability}, the definition for the explicit energy storage has switched nominator with the denominator, such that the positive work refers to the energy gain of the weight. Next, since we assume that $\hat U$ obeys the commutation relations \eqref{commutation_relations}, we put the expression \eqref{unitary_with_S} and we get: 
\begin{equation} \label{quasi_dist_weight}
    P_{\text{QP}}(w)  = \frac{1}{2\pi} \int ds \ e^{-i w s} \Tr[\hat{\mathcal{M}}_S(s) \hat \xi_S (s)],
\end{equation}
where 
\begin{equation}
    \hat{\mathcal{M}}_S(s) = e^{\frac{1}{2} i \hat H_S s} e^{-i \hat V_S^\dag \hat H_S \hat V_S s} e^{\frac{1}{2} i \hat H_S s},
\end{equation}
and
\begin{equation} \label{xi_state}
    \hat \xi_S (s) = \frac{\int dt \int dE \ e^{i E s} \ W(E,t) \ e^{-i \hat H_S t} \hat \rho_S e^{i \hat H_S t}}{\Tr[ e^{i \hat H_W s} \hat \rho_W]},
\end{equation}
where the Wigner function \eqref{wigner_function} was introduced.

\subsection{Semi-classical and incoherent states}
Let us consider a family of operators given by Eq. \eqref{xi_state} (for $s \in \mathbb{R}$). The structure of the trace average in Eq. \eqref{quasi_dist_weight} suggests that $\hat \xi(s)$ is an effective density matrix; however, even though $\Tr[\hat \xi_S (s)] = 1$, the operator $\hat \xi_S (s)$ is not necessarily positive semi-definite. 

Nevertheless, according to the operator $\hat \xi(s)$, we define the class of the \emph{semi-classical states}. 

\begin{definition}[Semi-classical states] \label{semi_clasical_definiton_box}
We characterize the initial density operator $\hat \rho_S \otimes \hat \rho_W$ as the semi-classical if the state $\hat \xi_S (s)$ does not depend on the variable $s$ (i.e., $\hat \xi_S'(s) = 0$). Moreover, since we have $\hat \xi(0) = \hat \sigma_S$, then the semi-classical state can be simply defined by the condition:
\begin{equation} \label{semi_classical_def}
    \hat \xi_S (s) = \hat \sigma_S,
\end{equation}
where $\hat \sigma_S$ is the introduced control-marginal state \eqref{control_state}.
\end{definition}

We call the class of states obeying Eq. \eqref{semi_classical_def} the semi-classicals because it includes the set of Gaussian wave-packets (with non-negative Wigner function) as well as the incoherent states. Later, we will see that those states also predict non-negative changes in the work variance, contrary to the non-classical states, for which the negative contribution appears due to quantum interference. One should notice that the ``semi-classical state'', in general, refers to the composite state $\hat \rho_S \otimes \hat \rho_W$. However, sometimes it is highlighted which system is responsible for obeying the condition \eqref{semi_classical_def}; hence, we refer to either ``system'' or ``weight'' semi-classical state, respectively.

\emph{Gaussian wave-packets.} Let us start with an important example of the weight semi-classical states $\hat \rho_W$. First, notice that if the Wigner function \eqref{wigner_function} of the weight factorizes into the product of the energy and time states distributions, i.e., 
\begin{equation} \label{product_E_t}
    W(E,t) = f(E) g(t),
\end{equation}
then 
\begin{eqnarray}
    \int dt \int dE \ e^{i E s} \ W(E,t) \ e^{-i \hat H_S t} \hat \rho_S e^{i \hat H_S t} &=& \int dE \ e^{i E s} f(E) \int dt \ g(t) \ e^{-i \hat H_S t} \hat \rho_S e^{i \hat H_S t} \nonumber \\
     &=& \Tr[ e^{i \hat H_W s} \hat \rho_W] \hat \sigma_S,
\end{eqnarray}
such that the condition \eqref{semi_classical_def} is satisfied if the expression is put into the definition \eqref{xi_state}. Moreover, according to Hudson's theorem for pure states \cite{Hudson1974}, the separability condition \eqref{product_E_t} is satisfied if and only if the wave function of the weight has the Gaussian form, namely
\begin{equation} \label{gaussian_states}
    \psi_{\mu, \nu, \sigma} (E) = (2 \pi \sigma^2)^{-\frac{1}{4}} e^{-\frac{1}{4 \sigma^2} (E-\mu)^2 + i \nu E}
\end{equation}
with real $\mu$ and $\nu$. Thus, the Gaussian wave packets belong to the class of semi-classical states. Another example is the set of states with uniform wave functions.

\emph{Incoherent states of the system.} These states are simply defined by the following commutation relation:
\begin{equation} \label{incoherent_states}
    [\hat \rho_S, \hat H_S] = 0. 
\end{equation}
Since those commute also with the unitary $e^{-i \hat H_S t}$, we get
\begin{multline}
    \int dt \int dE \ e^{i E s} \ W(E,t) \ e^{-i \hat H_S t} \hat \rho_S e^{i \hat H_S t} = \int dE \ e^{i E s} \ f(E) \ \hat \rho_S = \Tr[ e^{i \hat H_W s} \hat \rho_W] \hat \rho_S,
\end{multline}
and according to Eq. \eqref{control_marginal} we have $\hat \sigma_S = \hat \rho_S$, i.e., an arbitrary incoherent state of the system $\hat \rho_S$ is a semi-classical in accordance to Eq. \eqref{semi_classical_def}.

\emph{Incoherent states of the weight.} The definition of the incoherent state of the weight is less evident since here we consider the system with a continuous energy spectrum, and therefore the incoherent state of the weight is non-normalizable. Nevertheless, one can still consider the limit of a Gaussian wave packet \eqref{gaussian_states} with vanishing variance, i.e., $\sigma \to 0$,
which, as it is proved above, is the semi-classical state. Notice that for a Gaussian wave packet, the dispersion of time and energy states \eqref{variance_time_energy} obey the minimal uncertainty relation, i.e., $\sigma_t \sigma_E = \frac{1}{2}$, such that in the limit of vanishing dispersion of energies, the dispersion of time states diverges. As a consequence, the channel given by Eq. \eqref{control_marginal} is fully depolarizing, such that 
\begin{eqnarray} \label{weight_incoherent_dephasing}
\lim_{\sigma \to 0} \psi_{\mu, \nu, \sigma} \implies \hat \sigma_S \to D[\hat \rho_S],
\end{eqnarray} 
where $D[\cdot]$ is a dephasing in the energy basis of the system. The generalization to mixed (incoherent) states is straightforward by taking the initial convex combination of the Gaussian states. 

\subsection{Classical limits}

We are ready to analyse the work statistics encoded in the quasi-distribution $P_{\text{QP}}$ \eqref{quasi_dist_weight} in the classical limit of incoherent states or/and unitary evolutions. In particular, in the following proposition we relate the quasi-distribution $P_{\text{QP}}$ with the TPM $P_{\text{TPM}}$ \eqref{TPM_distribution} and the work-operator distribution $P_{\text{W}}$ \eqref{work_operator_dist}.

\begin{proposition} \label{TPM_limit}
Let us consider the following work distributions: 
\begin{enumerate}
    \item $P_{\text{QP}}$ for the unitary $\hat U = \hat S^\dag \hat V_S \hat S$ and state $\hat \rho_S \otimes \hat \rho_W$,
    \item $P_{\text{W}}$ for the work operator $\hat W_S = \hat H_S - \hat V_S^\dag \hat H_S \hat V_S$ and state $\hat \rho_S$,
    \item $P_{\text{TPM}}$ for the unitary $\hat V_S$ and state $\hat \rho_S$.
\end{enumerate}
Then,
\begin{equation}
    P_{\text{QP}} = P_{\text{TPM}}
\end{equation}
if the initial state is incoherent, and 
\begin{equation} \label{QP_TPM_W}
    P_{\text{QP}} = P_{\text{TPM}} = P_{\text{W}}
\end{equation}
if the work operator is incoherent. 
\end{proposition}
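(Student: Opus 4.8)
The plan is to prove both identities at the level of characteristic functions (Fourier transforms), since each of the three distributions is specified through one: $P_{\text{QP}}$ carries the integrand $\Tr[\hat{\mathcal{M}}_S(s)\hat\xi_S(s)]$ from \eqref{quasi_dist_weight}, while the Fourier transforms of \eqref{TPM_distribution} and \eqref{work_operator_dist} are $\sum_{m,n} e^{i(\epsilon_n-\epsilon_m)s}\,p_{m|n}\,p_n$ and $\sum_i e^{iw_i s}\bra{w_i}\hat\rho_S\ket{w_i}$, respectively (I read the TPM work in the weight's convention, where positive $w$ is energy delivered to the weight, consistent with the numerator/denominator swap noted below \eqref{weight_quasi_probability}). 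It then suffices to show that $\Tr[\hat{\mathcal{M}}_S(s)\hat\xi_S(s)]$ collapses to the correct one of these characteristic functions under each hypothesis, after which the inverse transform reproduces the stated delta-sums termwise.

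For the incoherent-state case, I would first invoke the fact---established just above the proposition---that an incoherent $\hat\rho_S$ is semi-classical with $\hat\xi_S(s)=\hat\sigma_S=\hat\rho_S$, removing all $s$-dependence from the state factor. The computation then reduces to $\Tr[\hat{\mathcal{M}}_S(s)\hat\rho_S]$. Using $[\hat\rho_S,\hat H_S]=0$ I would commute $\hat\rho_S$ through the two half-exponentials $e^{\frac{i}{2}\hat H_S s}$ and fuse them into $e^{i\hat H_S s}\hat\rho_S$, then insert the spectral resolution $\hat V_S^\dag\hat H_S\hat V_S=\sum_m \epsilon_m\,\hat V_S^\dag\dyad{\epsilon_m}\hat V_S$. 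This yields $\sum_{m,n} p_n\,e^{i(\epsilon_n-\epsilon_m)s}\,\abs{\bra{\epsilon_m}\hat V_S\ket{\epsilon_n}}^2$, whose inverse Fourier transform is exactly $\sum_{m,n}\delta(w-\epsilon_n+\epsilon_m)\,p_{m|n}\,p_n$ with $p_{m|n}=\abs{\bra{\epsilon_m}\hat V_S\ket{\epsilon_n}}^2$ and $p_n=\bra{\epsilon_n}\hat\rho_S\ket{\epsilon_n}$, i.e. $P_{\text{TPM}}$.

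For the incoherent-work-operator case I would exploit $[\hat W_S,\hat H_S]=0$ twice. First, this commutation makes the three exponentials in $\hat{\mathcal{M}}_S(s)$ mutually commuting, so $\hat{\mathcal{M}}_S(s)=e^{i(\hat H_S-\hat V_S^\dag\hat H_S\hat V_S)s}=e^{i\hat W_S s}$. Second, inside the numerator of $\hat\xi_S(s)$ from \eqref{xi_state}, the same commutation lets me push $e^{i\hat W_S s}$ through the free-evolution dressing $e^{\pm i\hat H_S t}$, which then cancels under the trace; the $t$-integral collapses $W(E,t)$ to its energy marginal $f(E)$, and $\int dE\,e^{iEs}f(E)=\Tr[e^{i\hat H_W s}\hat\rho_W]$ cancels the denominator. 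What remains is $\Tr[\hat{\mathcal{M}}_S(s)\hat\xi_S(s)]=\Tr[e^{i\hat W_S s}\hat\rho_S]$, the characteristic function of $P_{\text{W}}$, proving $P_{\text{QP}}=P_{\text{W}}$ for an arbitrary (possibly coherent) $\hat\rho_S$. To close the chain with $P_{\text{TPM}}$, I would note that an incoherent work operator is diagonal in the energy eigenbasis, so its eigenvectors $\ket{w_i}$ may be taken to be the energy eigenstates $\ket{\epsilon_n}$ with eigenvalues $w_n=\epsilon_n-\epsilon_{\pi(n)}$ for the permutation $\pi$ implemented by $\hat V_S$; hence $\bra{w_n}\hat\rho_S\ket{w_n}=\bra{\epsilon_n}\hat\rho_S\ket{\epsilon_n}=p_n$ reads only the diagonal of $\hat\rho_S$, and $P_{\text{W}}$ coincides termwise with the permutation form of $P_{\text{TPM}}$.

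I expect the main obstacle to be the second step of Part 2: justifying that the full $s$- and $t$-dependence of $\hat\xi_S(s)$---which in general encodes the weight's Wigner function and hence all the interference between system and reservoir---cleanly cancels once paired with $e^{i\hat W_S s}$. The cancellation hinges entirely on commuting $e^{i\hat W_S s}$ past $e^{\pm i\hat H_S t}$, so the delicate point is to verify that $[\hat W_S,\hat H_S]=0$ (equivalently $[\hat V_S^\dag\hat H_S\hat V_S,\hat H_S]=0$) is precisely the condition licensing every commutation used, and to track the overall sign convention so that the equalities hold as written rather than up to $w\mapsto -w$.
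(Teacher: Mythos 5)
Your proposal is correct and follows essentially the same route as the paper's proof: Part 1 is the paper's computation of $\Tr[\hat{\mathcal{M}}_S(s)D[\hat \rho_S]]$ verbatim, and Part 2 uses the same key identity $\hat{\mathcal{M}}_S(s)=e^{i\hat W_S^I s}$, with your cyclic commutation of $e^{i\hat W_S^I s}$ past $e^{\pm i\hat H_S t}$ actually supplying the justification that the paper leaves implicit when it replaces $\hat \xi_S(s)$ by $D[\hat \rho_S]$. The only omission is the sub-case in which the incoherence resides in the weight rather than the system (the paper also treats $\hat \rho_W$ tending to an incoherent state, giving $\hat \xi_S(s)\to D[\hat \rho_S]$), but your Part 1 computation covers that case unchanged once $\hat \rho_S$ is replaced by $D[\hat \rho_S]$.
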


\begin{proof}
Firstly, let us remind that for an incoherent state of the system (i.e., obeying \eqref{incoherent_states}) we have $\hat \xi_S (s) = \hat \rho_S = D[\hat \rho_S]$. Similarly, in the limit of the incoherent weight state, we get $\hat \xi_S (s) = \hat \sigma_S \to D[\hat \rho_S]$ (see Eq. \eqref{weight_incoherent_dephasing}). Taking it into account, for an arbitrary incoherent state, the quasi-distribution is equal to:
\begin{equation}
\begin{split}
     P_{\text{QP}}(w) &= \frac{1}{2\pi} \int ds \ e^{-i w s} \Tr[\hat{\mathcal{M}}_S(s) D[\hat \rho_S]] \\
     &= \frac{1}{2\pi} \int ds \ e^{-i w s} \Tr[e^{i \hat H_S s} e^{-i \hat V_S^\dag \hat H_S \hat V_S s}  D[\hat \rho_S]] \\
     &= \frac{1}{2\pi} \int ds \sum_n e^{-i (w - \epsilon_n) s} \bra{\epsilon_n} \hat V_S^\dag e^{-i \hat H_S s} \hat V_S  \ket{\epsilon_n} p_n\\
     &= \frac{1}{2\pi} \int ds \sum_{n,m} e^{-i (w - \epsilon_n + \epsilon_m) s} |\bra{\epsilon_m} \hat V_S \ket{\epsilon_n}|^2 p_n\\
     &= \sum_{m,n} \delta (w + \epsilon_m - \epsilon_n) p_{m|n} p_n = P_{\text{TPM}} (w), 
\end{split}
\end{equation}
where $p_{m|n} = |\bra{\epsilon_m} \hat V_S \ket{\epsilon_n}|^2$ and $p_n = \bra{\epsilon_n} D[\hat \rho_S] \ket{\epsilon_n}$.

Next, let us consider the incoherent work operator with a spectral decomposition: $\hat W_S^I = \sum_n w_n \ \dyad{w_n}_S$. We observe that:
\begin{equation} \label{incoherent_exponent}
    \hat{\mathcal{M}}_S(s) = e^{i \hat W_S^I s}, 
\end{equation}
and then we have
\begin{equation}
\begin{split}
     P_{\text{QP}}(w) &= \frac{1}{2\pi} \int ds \ e^{-i w s} \Tr[e^{i \hat W_S^I s} \hat \xi_S(s)] \\
     &= \frac{1}{2\pi} \int ds \ e^{-i w s} \Tr[e^{i \hat W_S^I s} D[\hat \rho_S]] \\
     &= \sum_n \frac{1}{2\pi} \int ds \ e^{-i (w-w_n) s} \bra{w_n} D[\hat \rho_S] \ket{w_n} \\
     &= \sum_n \delta(w-w_n) P_n = P_{\text{W}}(w),
\end{split}
\end{equation}
where $P_n = \bra{w_n} D[\hat \rho_S] \ket{w_n}$.
\end{proof} 

We stress that convergence to the TPM for incoherent states is one of the requirements imposed on quasi-distributions (see \cite{Perarnau2015, Baumer2018}) since it recovers the classical fluctuations theorems. 

\subsection{Moments}
Secondly, we analyse moments of the quasi-distribution $P_{\text{QP}}$. Its characteristic function is given by:
\begin{equation}
\begin{split}
    \langle e^{i w t} \rangle_{\text{QP}} &= \int dw \ e^{iwt}  P_{\text{QP}}(w) = \Tr[\hat{\mathcal{M}}_S(t) \hat \xi_S (t)] \\
\end{split}
\end{equation}
such that statistical moments can be calculated via the expression: 
\begin{equation} \label{moment_def}
    \langle w^n \rangle_{\text{QP}} = \frac{1}{i^n} \frac{d^n}{dt^n} \Tr[\hat{\mathcal{M}}_S(t) \hat \xi_S (t)] \Big{|}_{t=0}.
\end{equation}
In general, the derivative in the above expression is calculated for a product of the operator $\hat{\mathcal{M}}_S(t)$ and the effective operator $\hat \xi_S (t)$, which, as we will see later, has important implications for the work extraction from coherences. 

In the light of Proposition \ref{TPM_limit}, for incoherent states or work operators, we have 
\begin{equation}
    \langle e^{i w t} \rangle_{\text{QP}} = \Tr[\hat{\mathcal{M}}_S(t) D[\hat \rho_S]],
\end{equation}
such that the derivative in Eq. \eqref{moment_def} is only calculated with respect to the operator $\hat{\mathcal{M}}_S(t)$. As expected, the initial coherences (even if present) do not contribute. Additionally, if the work operator is incoherent (i.e., commutes with the Hamiltonian $\hat H_S$), we get the following simplified expression:
\begin{equation}
    \langle w^n \rangle_{\text{QP}} = \Tr[(\hat W_S^I)^n  \hat \rho_S],
\end{equation}
which, in this purely classical regime, relates the $n$-th moment of the quasi-distribution $P_{\text{QP}}$ to the $n$-th power of the incoherent work operator (averaged over the initial density matrix).

In general, for the coherent work operator (i.e., with the non-vanishing commutator \eqref{incoherent_work}), the expansion of the operator $\hat{\mathcal{M}}_S(t)$ is equal to:
\begin{equation} \label{moments_expansion}
\begin{split}
     \hat{\mathcal{M}}_S(t) &= \mathbb{1} + i t \hat W_S + \frac{(it)^2}{2} \hat W_S^2 +\frac{(it)^3}{3!} \hat W_S^3\\
     &+\frac{(it)^3}{3!} \frac{1}{2} \left[[\hat H_S, \hat V_S^\dag \hat H_S \hat V_S], \hat V_S^\dag \hat H_S \hat V_S \right] \\
     &+ \frac{(it)^3}{3!} \frac{1}{4} \left[\hat H_S, [\hat H_S, \hat V_S^\dag \hat H_S \hat V_S] \right] + \dots,
\end{split}
\end{equation}
which shows that the non-commuting contributions appears only in the third and higher orders. Consequently, we arrive with simple formulas for the first two moments, namely
\begin{equation} \label{moments}
\begin{split}
    \langle w \rangle_{\text{QP}} &= \Tr[\hat W_S \hat \xi_S (0)], \\
    \langle w^2 \rangle_{\text{QP}} &= \Tr[\hat W_S^2 \hat \xi_S (0)] - 2i \Tr[\hat W_S \hat \xi'_S (0)],
\end{split}
\end{equation}
where $\hat \xi_S (0) = \hat \sigma_S$ and $\xi'_S (0)$ is the derivative evaluated in the point $s=0$. 

As we will see in the next section, the second term appearing in the expression for the second moment has very interesting implications: its non-vanishing value is the signature of the non-classical state of the work reservoir.

\section{Work vs. variance gain} \label{work_variance_gain_section}
In the previous section, we have discussed moments of the distribution $P_{\text{QP}}$ and its relation to the work operator \eqref{work_operator}. However, based on a definition of the double one-point distribution \eqref{quasi_probability}, the cumulants are more interesting due to their physical interpretation \eqref{cumulant_difference}. For the explicit work reservoir, these are given by a difference between cumulants calculated for the initial and final state of the energy storage. We introduce the following notation:
\begin{equation}
\begin{split}
    \Cov_{\hat \rho}[\hat A, \hat B] &= \frac{1}{2} \langle \hat A \hat B + \hat B \hat A \rangle_{\hat \rho} - \langle \hat A \rangle_{\hat \rho} \langle \hat B \rangle_{\hat \rho}, \\
    \Var_{\hat \rho}[\hat A] &= \Cov[\hat A, \hat A]_{\hat \rho}, \ \ \langle \hat A \rangle_{\hat \rho} = \Tr[\hat A \hat \rho],
\end{split}
\end{equation}    
and concentrate on first two cumulants. 

The first is given as a change in average energy of the weight (i.e., the extracted work):
\begin{equation} \label{Ew}
\begin{split}
    \langle\langle w \rangle\rangle_{\text{QP}} &= \langle w \rangle_{\text{QP}} = \langle \hat U^\dag \hat  H_W \hat U \rangle_{\hat \rho}  - \langle \hat  H_W \rangle_{\hat \rho} \equiv \Delta E_W,
\end{split}
\end{equation}
whereas the second is equal to a change of the weight variance: 
\begin{equation} \label{sigmaW}
\begin{split}
    \langle\langle w^2 \rangle\rangle_{\text{QP}} &= \langle w^2 \rangle_{\text{QP}} - \langle w \rangle_{\text{QP}}^2 =\Var_{\hat \rho} [\hat U^\dag \hat  H_W \hat U] - \Var_{\hat \rho} [\hat  H_W ] \equiv \Delta \sigma_W^2.
\end{split}
\end{equation}

We are now ready to formulate our first main result, which characterizes changes in the weight's energy and variance through the work extraction protocol. In the following we put $\hat W_S(t) = e^{i \hat H_S t} \hat W_S e^{-i \hat H_S t}$. We refer the reader to previously introduced Definitions \ref{control__marginal_state_def} and \ref{semi_clasical_definiton_box}, that are crucial for further considerations.
\begin{theorem} \label{work_variance_theorem}
The change of the average energy $\Delta E_W$ and energy variance $\Delta \sigma_W^2$ of the quantum weight is equal to:
\begin{equation} \label{work_variance_equations}
\begin{split}
    \Delta E_W &=  \langle \hat W_S \rangle_{\hat \sigma_S}, \\
    \Delta \sigma_W^2 &= \Var_{\hat \sigma_S}[\hat W_S] + 2 F,
\end{split}
\end{equation}
where  
\begin{equation}
\begin{split}
    F &= \Cov_{\hat \sigma} [\hat H_S - \hat H_W, \hat V_S^\dag \hat H_S \hat V_S] = -i \Tr[\hat W_S \hat \xi'_S (0)] \\
    &= \int dt \Tr[\hat W_S(t) \hat \rho_S] \int dE \ [E - \langle \hat H_W \rangle_{\hat \rho}] \ W(E,t).
\end{split}
\end{equation}
\end{theorem}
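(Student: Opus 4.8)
The plan is to prove the two equations in \eqref{work_variance_equations} separately and then verify that the correction term $F$ admits all three stated representations, tying them together with one key cancellation. For the average energy, I would start directly from the first-moment formula in \eqref{moments}, $\langle w \rangle_{\text{QP}} = \Tr[\hat W_S \hat \xi_S(0)]$. Since $\hat \xi_S(0) = \hat \sigma_S$ and $\langle\langle w \rangle\rangle_{\text{QP}} = \langle w \rangle_{\text{QP}} = \Delta E_W$ by \eqref{Ew}, the first line follows immediately as $\Delta E_W = \langle \hat W_S \rangle_{\hat \sigma_S}$. For the variance I would combine the second-moment formula in \eqref{moments} with the identity \eqref{sigmaW},
\begin{equation}
\Delta \sigma_W^2 = \langle w^2 \rangle_{\text{QP}} - \langle w \rangle_{\text{QP}}^2 = \Tr[\hat W_S^2 \hat \sigma_S] - (\Tr[\hat W_S \hat \sigma_S])^2 - 2i\,\Tr[\hat W_S \hat \xi'_S(0)],
\end{equation}
where the first two terms assemble into $\Var_{\hat \sigma_S}[\hat W_S]$ and the last is exactly $2F$ with $F = -i\,\Tr[\hat W_S \hat \xi'_S(0)]$. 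This already establishes the structure of the theorem and the second representation of $F$; what remains is to re-express $F$ in the other two forms.

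For the integral representation I would differentiate $\hat \xi_S(s)$ from \eqref{xi_state} written as a quotient $\hat \xi_S(s) = N(s)/Z(s)$, with $Z(s) = \Tr[e^{i\hat H_W s}\hat \rho_W] = \int dE\, e^{iEs} f(E)$ and $N(s) = \int dt \int dE\, e^{iEs} W(E,t)\, e^{-i\hat H_S t}\hat \rho_S e^{i\hat H_S t}$. Applying the quotient rule at $s=0$ with $Z(0)=1$, $Z'(0) = i\langle \hat H_W \rangle_{\hat \rho}$, and $N(0) = \hat \sigma_S$ (using the marginal $g(t) = \int dE\, W(E,t)$ together with \eqref{control_marginal}) gives $\hat \xi'_S(0) = N'(0) - i\langle \hat H_W \rangle_{\hat \rho}\,\hat \sigma_S$. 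Substituting into $F$, invoking cyclicity and the Heisenberg notation $\hat W_S(t) = e^{i\hat H_S t}\hat W_S e^{-i\hat H_S t}$, collapses the two contributions into the single weighted integral $F = \int dt\, \Tr[\hat W_S(t)\hat \rho_S] \int dE\, [E - \langle \hat H_W \rangle_{\hat \rho}] W(E,t)$, which is the third representation.

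For the covariance representation I would take an independent route through the symmetries of $\hat S$ and $\hat U$. From $[\hat H_S, \hat S] = 0$ and $\hat S^\dagger \hat H_W \hat S = \hat H_W + \hat H_S$ (a two-term BCH expansion using $[\hat H_W, \hat \Delta_W] = i$), conjugation by $\hat S$ maps $\hat U^\dagger \hat H_W \hat U \mapsto \hat H_W - \hat V_S^\dagger \hat H_S \hat V_S$ and $\hat H_W \mapsto \hat H_W - \hat H_S$. Because $\hat \sigma = \hat S \hat \rho \hat S^\dagger$, this converts the weight statistics in $\hat \rho$ into system-and-weight statistics in $\hat \sigma$, yielding $\Delta \sigma_W^2 = \Var_{\hat \sigma}[\hat H_W - \hat V_S^\dagger \hat H_S \hat V_S] - \Var_{\hat \sigma}[\hat H_W - \hat H_S]$. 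Expanding both variances by bilinearity of the covariance and collecting terms, the claim $F = \Cov_{\hat \sigma}[\hat H_S - \hat H_W, \hat V_S^\dagger \hat H_S \hat V_S]$ reduces to the single residual identity $\Cov_{\hat \sigma}[\hat H_W - \hat H_S, \hat H_S] = 0$.

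The main obstacle, and the crux of the whole theorem, is precisely this residual identity. I expect to settle it using $[\hat H_S, \hat S] = 0$ together with the product form $\hat \rho = \hat \rho_S \otimes \hat \rho_W$: conjugation by $\hat S$ leaves all $\hat H_S$-moments unchanged, so $\Var_{\hat \sigma}[\hat H_S] = \Var_{\hat \rho_S}[\hat H_S]$, while $\Cov_{\hat \sigma}[\hat H_W, \hat H_S]$, after inserting $\hat S^\dagger \hat H_W \hat S = \hat H_W + \hat H_S$ and using factorization, evaluates to the same $\Var_{\hat \rho_S}[\hat H_S]$; their difference vanishes. This cancellation is exactly what permits the full-space covariance of representation (1) to coexist with the marginal variance $\Var_{\hat \sigma_S}[\hat W_S]$, and it is the step where both the product initial state and the translational symmetry encoded in $\hat S$ enter essentially.
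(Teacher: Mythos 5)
Your proof is correct, and its core coincides with the paper's own: the appendix proof likewise works in the picture $\hat \sigma = \hat S \hat \rho \hat S^\dag$ using $[\hat H_S, \hat S]=0$ and $[\hat H_W,\hat S]=\hat S \hat H_S$, and the decisive cancellation is exactly your residual identity, $\Cov_{\hat \sigma}[\hat H_S,\hat H_W]=\Var_{\hat \sigma}[\hat H_S]$ (Eq.~\eqref{Cov_HS_HW_appendix}), established precisely as you propose from the product form of $\hat \rho$ and $\hat S^\dag \hat H_W \hat S = \hat H_W + \hat H_S$. Where you genuinely differ is in how the three representations of $F$ are tied together. The paper derives the covariance form first, purely from the operator identities, and then in a separate Part~II computes $\langle \hat H_S' \hat H_W\rangle_{\hat \sigma}$ explicitly in the Wigner representation (Eq.~\eqref{VS_HS_HW_appendix}) to reach the integral form, and finally differentiates $\hat \xi_S(s)$ to reach $-i\Tr[\hat W_S \hat \xi_S'(0)]$. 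You instead start from the main-text moment formulas \eqref{moments} together with \eqref{Ew}--\eqref{sigmaW}, which hand you the theorem's structure and the $\hat \xi_S'$-form of $F$ in one step; the quotient-rule differentiation then gives the integral form (the same computation as the paper's), and the covariance form follows by comparing your two independent expressions for $\Delta\sigma_W^2$. This triangulation buys you a shortcut past the paper's longest calculation (the matrix-element evaluation of $\langle \hat H_S'\hat H_W\rangle_{\hat \sigma}$ via the identity \eqref{Wigner_iden}), at the cost of leaning on \eqref{moments}, whose own derivation via the expansion \eqref{moments_expansion} is only sketched in the main text; the appendix proof is self-contained precisely because it avoids that dependence. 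Both routes are sound.
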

Firstly, let us observe that the extracted work $\Delta E_W$ is equal to the average of the work operator $\hat W_S$ with respect to the control-marginal state $\hat \sigma_S$, such that the optimal value is precisely given by its ergotropy \eqref{ergotropy_definition}, i.e., $\Delta E_W \le R(\hat \sigma_S)$. In particular, the replacement of the proper marginal state by the control-marginal has huge implications on work extraction from coherences (see \cite{Lobejko2020, Lobejko2021}), which is discussed in more detail in Section \ref{ergotropy_decoherence}.  

Secondly, we see that there are two contributions to the change of the variance. The first one, similarly to the average energy, is given by the variance of the work operator and calculated with respect to the control-marginal state. One should notice that those two terms (i.e., $\langle \hat W_S \rangle_{\hat \sigma_S}$ and $\Var_{\hat \sigma_S}[\hat W_S]$) are solely calculated within the system's Hilbert space, and in this sense, they could be compared to the non-autonomous protocols like the TPM measurements. However, they still might be influenced by coherences (present in the control-marginal state), which are affected by the weight via a decoherence process \eqref{control_marginal}. Thus, they generally involve additional quantum effects and information from a work reservoir absent in non-autonomous frameworks.  

Finally, the last $F$-term, i.e., the second contribution to the variance gain, is the most complex, primarily because it is evaluated in the total Hilbert space (e.g., via the control state $\hat \sigma$ \eqref{control_state}). For this reason, in principle, it cannot be captured by the frameworks treating the energy-storage device implicitly, and as we will see in the next section, it has solely quantum origin. The $F$-term is presented in three different forms. The first relates it to the covariance difference between initial Hamiltonians (i.e., $\hat H_S$ and $\hat H_W$) and the final one $\hat V_S^\dag \hat H_S \hat V_S$ calculated for the control state $\hat \sigma$ \footnote{In fact, the whole right-hand side of Eq. \eqref{work_variance_equations} is evaluated with respect to $\hat \sigma$, but terms other than the $F$-term are calculated for local operators.}. The next formula involves a definition of $\hat \xi(s)$ operator given by Eq. \eqref{xi_state}, and it straightforwardly follows from the expression for the moments \eqref{moments} and definition of the second cumulant \eqref{sigmaW}. Finally, the third expression is derived from Wigner's representation of the weight state \eqref{wigner_function}. It is especially interesting since the parameter $t$ refers here to a `time' of a system's free evolution. The $F$-term is given by a product of an expected value of the work operator in a Heisenberg picture (i.e., $\text{Tr} [\hat W_S(t) \hat \rho_S]$) and the weight's energy deviations (i.e., $E - \langle \hat H_W \rangle_{\hat \rho_W}$), averaged over a Wigner quasi-distribution $W(E,t)$. 

\begin{remark}
Notice that Theorem \ref{work_variance_theorem} can be expressed in the following form:
\begin{eqnarray}
\langle \langle w  \rangle \rangle_{\text{QP}, \hat \rho} &=& \langle \langle w  \rangle \rangle_{\text{W}, \hat \sigma_S}, \\
\langle \langle w^2  \rangle \rangle_{\text{QP}, \hat \rho} &=& \langle \langle w^2  \rangle \rangle_{\text{W}, \hat \sigma_S} + 2F,
\end{eqnarray}
where on the left-hand side we have cumulants of quasi-distribution \eqref{quasi_dist_weight} (defined for a composite state $\hat \rho$), whereas on the right-hand side there are cumulants of the work operator distribution \eqref{work_operator_dist} (defined for a control-marginal state $\hat \sigma_S$). However, despite this nice correspondence, due to a presence of the $F$-term, it is apparent that probability density function of the work operator $P_W$ cannot alone properly described the energy fluctuations. Another subtlety is that cumulants on the right-hand side are calculated for a control-marginal state, which is different than the initial state $\hat \rho_S$ (i.e., it is affected by a decoherence process depending on the initial state of the weight).  
\end{remark}

\subsection{Variance changes}
Let us now discuss some of the consequences of Theorem \ref{work_variance_theorem}.
According to the definition \eqref{semi_classical_def}, for semi-classical states we have $\hat \xi_S' (0) = 0$, such that $F=0$, and we conclude:

\begin{corollary} \label{positive_variance}
For semi-classical states, the variance change is always non-negative and equal to:
\begin{equation}
    \Delta \sigma_W^2 = \Var_{\hat \sigma_S}[\hat W_S] \ge 0.
\end{equation}  
\end{corollary}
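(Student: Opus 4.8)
The plan is to read the result directly off Theorem \ref{work_variance_theorem} by showing that the $F$-term vanishes for semi-classical states, and then to verify that the surviving term is a genuine (hence non-negative) variance.

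First I would invoke the defining property \eqref{semi_classical_def} of a semi-classical state, namely $\hat \xi_S(s) = \hat \sigma_S$ for all $s \in \mathbb{R}$. Differentiating in $s$ gives $\hat \xi_S'(s) = 0$, in particular $\hat \xi_S'(0) = 0$. Using the middle expression for the $F$-term in Theorem \ref{work_variance_theorem}, i.e. $F = -i\,\Tr[\hat W_S \hat \xi_S'(0)]$, this immediately yields $F = 0$. Substituting back into the second line of \eqref{work_variance_equations} leaves $\Delta \sigma_W^2 = \Var_{\hat \sigma_S}[\hat W_S]$, which is the claimed equality.

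The only remaining point is non-negativity. Here I would use that $\hat W_S = \hat H_S - \hat V_S^\dag \hat H_S \hat V_S$ is Hermitian, and that the control-marginal state $\hat \sigma_S$ is a bona fide density operator: by \eqref{control_marginal} it equals the convex combination $\int dt\, g(t)\, e^{-i \hat H_S t} \hat \rho_S e^{i \hat H_S t}$ of unitary conjugations of $\hat \rho_S$ weighted by the probability density $g(t)$, hence positive semi-definite with unit trace. For a Hermitian observable evaluated in a genuine state the variance $\Var_{\hat \sigma_S}[\hat W_S] = \langle \hat W_S^2 \rangle_{\hat \sigma_S} - \langle \hat W_S \rangle_{\hat \sigma_S}^2 \ge 0$ by Cauchy--Schwarz, which closes the argument.

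I do not anticipate a genuine obstacle: Theorem \ref{work_variance_theorem} already isolates the purely quantum contribution into the single term $F$, and the semi-classical condition is precisely the statement that this contribution disappears. The one subtlety worth flagging is that, although a generic $\hat \xi_S(s)$ need not be positive semi-definite (as noted after \eqref{xi_state}), its value at $s=0$ coincides with $\hat \sigma_S$ and is a valid state; it is this fact that guarantees the surviving quantity is an honest variance rather than a merely formal quadratic expression, and hence that it is non-negative.
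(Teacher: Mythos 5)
Your proposal is correct and follows the paper's own route: the paper likewise deduces $F=0$ from the semi-classical condition $\hat \xi_S'(0)=0$ via $F=-i\Tr[\hat W_S \hat \xi_S'(0)]$ and then reads off $\Delta \sigma_W^2 = \Var_{\hat \sigma_S}[\hat W_S]\ge 0$ from Theorem \ref{work_variance_theorem}. Your added remark that $\hat \sigma_S$ is a bona fide density operator (so the surviving term is an honest variance) is a useful explicit justification of a step the paper leaves implicit, but it does not change the argument.
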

We interpret it as one of the main features of the semi-classical states. On the contrary, for the non-classicals we have:
\begin{corollary}
For non-classical states, the $F$-term can be negative and, in particular, can lead to the negative change in variance (i.e., $\Delta \sigma_W^2 < 0$). 
\end{corollary}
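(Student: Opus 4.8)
Since the statement is an existence claim --- that $F<0$ and even $\Delta\sigma_W^2<0$ are \emph{achievable} --- the natural strategy is proof by explicit construction rather than a general argument. The starting point is the third, Wigner-representation form of the $F$-term,
\begin{equation}
F = \int dt \, \Tr[\hat W_S(t)\hat\rho_S] \int dE \, [E - \langle \hat H_W\rangle_{\hat\rho}] \, W(E,t),
\end{equation}
which makes the mechanism transparent: the inner $E$-integral is a conditional energy deviation of the weight at fixed time $t$, and it vanishes identically whenever the Wigner function factorizes, $W(E,t)=f(E)g(t)$ (recovering Corollary \ref{positive_variance}). Thus only non-classical weight states carrying genuine energy--time correlations can produce $F\neq0$, and the sign of $F$ is governed by the correlation between this conditional deviation and the oscillating Heisenberg average $\Tr[\hat W_S(t)\hat\rho_S]$.

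First I would fix a minimal coherent setup: a qubit with $\hat H_S=\epsilon\dyad{1}_S$, a coherent unitary $\hat V_S$ that fails to commute with $\hat H_S$, and an initial system state $\hat\rho_S$ carrying coherences, so that $\Tr[\hat W_S(t)\hat\rho_S]$ is a genuine sinusoid of $t$ with tunable amplitude and phase. For the weight I would take a correlated (squeezed/chirped) Gaussian wave packet, for which the inner integral $\int dE\,[E-\langle\hat H_W\rangle_{\hat\rho}]\,W(E,t)$ is a linear-in-$t$ function times a Gaussian, with a prefactor set by the energy--time covariance. The $F$-term then reduces to the overlap of a sinusoid against this weight, an integral that is nonzero and whose sign can be fixed negative by choosing the sign of the weight's correlation together with the phase of the system oscillation.

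Having secured $F<0$, the remaining task is to make $2|F|$ exceed the non-negative term $\Var_{\hat\sigma_S}[\hat W_S]$ in Eq.~\eqref{work_variance_equations}, so that $\Delta\sigma_W^2=\Var_{\hat\sigma_S}[\hat W_S]+2F<0$. The plan is to exploit the fact that the magnitude of $F$ scales with the weight's energy--time covariance, whereas the control-marginal variance $\Var_{\hat\sigma_S}[\hat W_S]$ is bounded --- it is computed on a density operator in the finite-dimensional system space --- and is itself controlled by the dephasing strength through Eq.~\eqref{control_marginal}. One then identifies a parameter regime in which the correlation-driven $F$ dominates.

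The main obstacle I anticipate is precisely this competition between the two contributions: the same weight parameters that enhance $|F|$ also act, via the decoherence channel \eqref{control_marginal}, on $\hat\sigma_S$ and hence on $\Var_{\hat\sigma_S}[\hat W_S]$, so the two terms cannot be varied wholly independently. Disentangling them and verifying that a net negative variance survives is cleanest if one simply computes everything in closed form for the qubit; I would therefore defer the quantitative verification to the explicit solution of the work--variance trade-off in Section~\ref{qubit_section}, where the full phase space of admissible $(\Delta E_W,\Delta\sigma_W^2)$ pairs is obtained analytically and the region $\Delta\sigma_W^2<0$ is exhibited directly, thereby confirming the corollary by construction.
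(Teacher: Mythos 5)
Your strategy coincides with the paper's: the corollary is proved by explicit construction, and the paper itself simply defers to the qubit analysis of Section~\ref{qubit_section}, where a non-classical weight gives $R>1$ and hence $\min[\Delta \sigma_W^2]=\frac{\omega}{2}(1-R)<0$ at $w=0$ for the coherent ``plus'' state --- exactly the deferral you make in your final paragraph. One caveat on your candidate weight state: a merely squeezed or displaced Gaussian of the form \eqref{gaussian_states} remains semi-classical (its Wigner function factorizes), so the quadratic chirp --- i.e., a genuine energy--time correlation --- is the essential non-classical ingredient; with it your state is reflection-symmetric with purely imaginary $\xi\neq 0$, so it lands in the paper's $R>1$ class just as the cat state \eqref{cat_state} (with $\mu>0$ and $\nu>0$) does.
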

We will show that by analyzing a particular example discussed in Section \ref{qubit_section}. Due to this feature, we see that non-classical states are qualitatively different from semi-classical ones. We stress that the squeezing of energy fluctuations is a pure quantum effect that involves interference between the coherent state of the system and the weight. 

Going back to the semi-classical states, we further ask the conditions for a null change of the variance.
\begin{corollary} \label{corollary_zero_variance}
For semi-classical states, a change of the weight variance is zero (i.e., $\Delta \sigma_W^2 =0$) if and only if the control-marginal state is pure $\hat \sigma_S = \dyad{w}_S$, and $\ket{w}_S$ is an eigenstate state of the work operator $\hat W_S$.
\end{corollary}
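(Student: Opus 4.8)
The plan is to reduce the claim, via the semi-classical hypothesis, to the elementary fact that a Hermitian observable has vanishing variance in a state precisely when that state is supported inside one of its eigenspaces, and then to determine when this support condition forces purity. First I would invoke Corollary~\ref{positive_variance}: for a semi-classical state $F=0$, so the second line of Eq.~\eqref{work_variance_equations} collapses to $\Delta\sigma_W^2 = \Var_{\hat\sigma_S}[\hat W_S]$. The statement therefore becomes a characterization of when $\Var_{\hat\sigma_S}[\hat W_S]=0$, a condition living entirely on the system Hilbert space.

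For the sufficiency direction, if $\hat\sigma_S = \dyad{w}_S$ with $\hat W_S\ket{w}_S = w\ket{w}_S$, then $\langle\hat W_S\rangle_{\hat\sigma_S}=w$ and $\langle\hat W_S^2\rangle_{\hat\sigma_S}=w^2$, so the variance vanishes and hence $\Delta\sigma_W^2=0$. For the necessity direction I would set $\mu=\langle\hat W_S\rangle_{\hat\sigma_S}$, which is real since $\hat W_S$ is Hermitian, put $\hat A=\hat W_S-\mu\mathbb{1}$, and diagonalize the control-marginal state as $\hat\sigma_S=\sum_k p_k\dyad{\phi_k}$ with strictly positive weights $p_k>0$ on its support. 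Then
\begin{equation}
0=\Var_{\hat\sigma_S}[\hat W_S]=\Tr[\hat A^2\hat\sigma_S]=\sum_k p_k\,\norm{\hat A\ket{\phi_k}}^2,
\end{equation}
and since every summand is non-negative with $p_k>0$ each must vanish, giving $\hat W_S\ket{\phi_k}=\mu\ket{\phi_k}$ for all $k$. Thus the support of $\hat\sigma_S$ sits inside a single eigenspace of $\hat W_S$, and any vector of that support is an eigenstate of $\hat W_S$.

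The hard part will be the last step, from \emph{supported in one eigenspace} to \emph{pure}. The argument above only confines $\hat\sigma_S$ to one eigenspace of $\hat W_S$; were that eigenspace degenerate, a genuinely mixed $\hat\sigma_S$ would also yield zero variance and the stated equivalence with purity would break. I would close this gap by arguing that the operative work operator $\hat W_S=\hat H_S-\hat V_S^\dag\hat H_S\hat V_S$ is non-degenerate on the relevant subspace for any nontrivial protocol---note in particular that $\hat W_S$ is traceless, so a degenerate qubit work operator would force $\hat W_S=0$, i.e. $\hat V_S$ commuting with $\hat H_S$ and no work extracted at all. Under this nondegeneracy the eigenspace is one-dimensional, $\hat\sigma_S=\dyad{w}_S$ is forced to be pure with $\ket{w}_S$ an eigenstate of $\hat W_S$, and the equivalence holds as stated; I would flag explicitly that the trivial degenerate case must be set aside for the ``only if'' direction to hold verbatim.
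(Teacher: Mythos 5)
Your core reduction is exactly the one the paper relies on: the paper offers no explicit proof of this corollary, presenting it as an immediate consequence of Corollary~\ref{positive_variance} (for semi-classical states $F=0$, so $\Delta\sigma_W^2=\Var_{\hat\sigma_S}[\hat W_S]$) combined with the elementary fact that a Hermitian observable has vanishing variance in a state precisely when the state is supported in a single eigenspace. Your sufficiency argument and your decomposition $\sum_k p_k\norm{\hat A\ket{\phi_k}}^2=0$ for necessity are both correct.

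The caveat you flag at the end is genuine, but your proposed repair does not fully close it. Tracelessness of $\hat W_S=\hat H_S-\hat V_S^\dag\hat H_S\hat V_S$ excludes a degenerate spectrum only in dimension two; for $d\ge 3$ a traceless Hermitian work operator can have a repeated eigenvalue (e.g.\ spectrum $\{w,w,-2w\}$), and a rank-two $\hat\sigma_S$ supported on that two-dimensional eigenspace then gives $\Delta\sigma_W^2=0$ without being pure, so the ``only if'' direction fails verbatim. The paper states the corollary for a general finite-dimensional system and does not address this degenerate case; its surrounding discussion and the explicit analysis in Section~\ref{qubit_section} are confined to qubits, where your tracelessness observation does apply. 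So your argument is complete for qubits and for non-degenerate $\hat W_S$, and you have correctly identified that the general statement tacitly requires a non-degeneracy hypothesis (or should be weakened to ``$\hat\sigma_S$ is supported in a single eigenspace of $\hat W_S$'').
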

According to the above statement, we can have the following scenarios. First, it is always true for a trivial identity process (since $\hat W_S = 0$), but then $\Delta E_W = \Delta \sigma_W^2 = 0$, i.e., there is no extraction of work neither. Next, for the incoherent work operator $\hat W_S^I$, the zero change of the variance is observed if the initial state is an energy eigenstate, i.e., $\hat \rho_S = \hat \sigma_S = \dyad{\epsilon}_S$. Here, the energy transfer refers to a shift of the weight's energy distribution (the so-called \emph{deterministic work}). Finally, the most interesting is the work extraction from coherences, such that the control-marginal state $\hat \sigma_S = \dyad{w}_S$, but $\ket {w}_S \neq \ket{\epsilon}_S$. It is the problematic case previously encountered in \cite{Allahverdyan2004work} (cf. \cite{Baumer2018}) since the non-zero work can be extracted with zero gain of the variance, and it does not correspond to a simple shift. However, within a framework with explicit energy storage, this cannot be realized in practice exactly since the channel \eqref{control_marginal} is a decoherence process (within the energy basis), such that $\hat \sigma_S$ generally is not a pure state. Only in the limit when weight tends to the time state, i.e., $\hat \rho_W \to \dyad{t}_W$, then  $\hat \rho_S \to \hat \sigma_S = e^{-i\hat H_S t} \hat \rho_S e^{i\hat H_S t}$, and the initial purity is preserved. 

Here, we observe another difference between incoherent and coherent work extraction. In principle, the incoherent work extraction can be deterministic (with $\Delta \sigma_W^2 =0$), and it is independent of the weight state at all. On the contrary, for the work extraction from coherences, we can only consider the limit where $\Delta \sigma_W^2 \approx 0$ and, more importantly, the vanishing variance strongly constrains the initial state of the weight. In particular, to preserve the purity of the control-marginal state, one can consider the limit when weight tends to the time state; however, this specific state also has an infinite variance! The point is that even if we can achieve the small gain of the variance, the final state of the weight would still have substantial energy fluctuations. In other words, the work extraction from coherences depends significantly on the state of the work reservoir (either implicitly through the control-marginal state or explicitly via the $F$-term) and, in particular, one should consider not only the change of the variance but also an absolute (initial or final) dispersion. It is the main subject of the following section.  

\section{Bounds on energy dispersion} \label{bounds_section}
In this section, we want to derive the fundamental bounds for energy dispersion of the energy-storage device within the context of the work extraction process. For this let us define the initial standard deviation $\sigma_E^{(i)}$ and the final one $\sigma_E^{(f)}$ (see Eq. \eqref{variance_time_energy}) of a weight state $\hat \rho_W$ and $\Tr_S[\hat U \hat \rho_S \otimes \hat \rho_W \hat U^\dag]$, respectively. Let us start with the incoherent work extraction process.

\subsection{Incoherent work extraction}
Suppose we consider an incoherent work extraction (either because of the particular incoherent form of the unitary or a presence of the incoherent state). In that case, the control-marginal state is given by $\hat \sigma_S = D[\hat \rho_S]$ and $F = 0$. Then, according to Theorem \ref{work_variance_theorem}, we succeed with the following conclusion.

\begin{corollary} \label{incoherent_work_corollary}
For the incoherent work extraction, i.e., if $\hat W_S = \hat W_S^I$ or $\hat \rho_S = D[\hat \rho_S]$, the extracted work $\Delta E_W$ and change of the variance $\Delta \sigma_W^2$ is independent of the weight state at all. Consequently, there is no fundamental constraint on the initial dispersion $\sigma_E^{(i)}$ and the final dispersion is bounded as follows:
\begin{equation}
    \sigma_E^{(f)} \ge \Var_{\hat \rho_S}[\hat W_S].
\end{equation}
\end{corollary}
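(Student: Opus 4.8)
The plan is to feed the incoherent specialization of Theorem~\ref{work_variance_theorem} into the cumulant identities \eqref{Ew} and \eqref{sigmaW}, and then translate the resulting variance change into a statement about the absolute dispersions $\sigma_E^{(i)}$ and $\sigma_E^{(f)}$.

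First I would establish the weight-independence of $\Delta E_W$ and $\Delta\sigma_W^2$. In the incoherent regime $F=0$, so Theorem~\ref{work_variance_theorem} collapses to $\Delta E_W = \langle \hat W_S\rangle_{\hat\sigma_S}$ and $\Delta\sigma_W^2 = \Var_{\hat\sigma_S}[\hat W_S]$. The crucial observation is that, although $\hat\sigma_S$ generically depends on the weight through the time distribution $g(t)$ in \eqref{control_marginal}, the relevant moments of $\hat W_S$ do not. If $\hat\rho_S = D[\hat\rho_S]$ then $\hat\rho_S$ commutes with $e^{-i\hat H_S t}$ and \eqref{control_marginal} gives $\hat\sigma_S = \hat\rho_S$ outright. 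If instead $\hat W_S = \hat W_S^I$, then $[\hat W_S^I, \hat H_S]=0$ by \eqref{incoherent_work}, so $\hat W_S^I$ is invariant under conjugation by $e^{\pm i\hat H_S t}$; inserting this invariance into \eqref{control_marginal} shows $\Tr[\hat W_S^I\,\hat\sigma_S] = \Tr[\hat W_S^I\hat\rho_S]$, and likewise for $(\hat W_S^I)^2$, so the $g(t)$-dependence drops out after the trace. Either way $\Delta E_W = \langle \hat W_S\rangle_{\hat\rho_S}$ and $\Delta\sigma_W^2 = \Var_{\hat\rho_S}[\hat W_S]$, both manifestly independent of $\hat\rho_W$.

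Next I would convert this variance change into the dispersion bound. By definition \eqref{sigmaW}, $\Delta\sigma_W^2 = \Var_{\hat\rho}[\hat U^\dag \hat H_W\hat U] - \Var_{\hat\rho}[\hat H_W]$. Since $\hat H_W$ acts only on the weight, $\Var_{\hat\rho}[\hat H_W] = (\sigma_E^{(i)})^2$ is the initial energy dispersion, while $\Var_{\hat\rho}[\hat U^\dag \hat H_W\hat U] = \Var_{\hat U\hat\rho\hat U^\dag}[\hat H_W] = (\sigma_E^{(f)})^2$ is the dispersion of the reduced final weight state $\Tr_S[\hat U\hat\rho\hat U^\dag]$. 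Combining,
\begin{equation}
(\sigma_E^{(f)})^2 = (\sigma_E^{(i)})^2 + \Var_{\hat\rho_S}[\hat W_S].
\end{equation}
Because $\Var_{\hat\rho_S}[\hat W_S]$ is fixed by the system alone, $\sigma_E^{(i)}$ remains a free parameter that may be chosen arbitrarily small (hence no lower bound on the initial dispersion), and dropping the non-negative term $(\sigma_E^{(i)})^2$ yields $(\sigma_E^{(f)})^2 \ge \Var_{\hat\rho_S}[\hat W_S]$, saturated in the limit $\sigma_E^{(i)}\to 0$.

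The step I expect to be the delicate one is the weight-independence in the second subcase $\hat W_S = \hat W_S^I$, where $\hat\sigma_S$ itself does depend on the weight: one must use the commutation \eqref{incoherent_work} to argue that the dependence cancels only after tracing against the powers of the incoherent work operator, rather than at the level of the state. Everything else is a direct substitution into the already-established cumulant formulas.
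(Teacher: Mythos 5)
Your proof is correct and follows essentially the same route as the paper: specialize Theorem~\ref{work_variance_theorem} to the incoherent case (where $F=0$ and the $\hat\sigma_S$-moments of $\hat W_S$ reduce to $\hat\rho_S$-moments), then use $(\sigma_E^{(f)})^2=(\sigma_E^{(i)})^2+\Delta\sigma_W^2$ — and your version $(\sigma_E^{(f)})^2\ge\Var_{\hat\rho_S}[\hat W_S]$ is in fact the intended reading of the corollary's displayed bound, consistent with Eq.~\eqref{incoherent_bound_qubit}. The one step you flag but leave implicit, $F=0$ in the subcase $\hat W_S=\hat W_S^I$ with coherent $\hat\rho_S$ and a non-classical weight, closes by exactly the invariance you already invoke: $[\hat W_S^I,\hat H_S]=0$ makes $\Tr[\hat W_S(t)\hat\rho_S]$ independent of $t$ in the third form of the $F$-term, so $F\propto\int dE\,\bigl(E-\langle\hat H_W\rangle_{\hat\rho}\bigr)f(E)=0$.
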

The main conclusion from this corollary is that the extraction of an incoherent ergotropy is independent of the initial state of the weight, such that the energy can be stored in a state with a finite energy dispersion. As we will see in the next, this is not true for extracting a coherent part.  

\subsection{Coherent work extraction}
\subsubsection{Ergotropy decoherence} \label{ergotropy_decoherence}
In order to define similar bounds for work extraction involving coherences, first, we want to explain the idea of the ergotropy decoherence. According to Theorem \ref{work_variance_theorem}, work is defined as the ergotropy change of the control-marginal state $\hat \sigma_S$, instead of the proper marginal state $\hat \rho_S$. Essentially, the control-marginal state is the effective density matrix, representing the full statistical knowledge regarding the (average) work extraction. The crucial point is that the map  $\hat \rho_S \to \hat \sigma_S$ \eqref{control_marginal}  is a decoherence process, such that it preserves a diagonal part and decays off-diagonal elements as follows:
\begin{equation} \label{offdiagonal_decay}
    \dyad{\epsilon_i}{\epsilon_j} \to \gamma(\omega_{ij}) \dyad{\epsilon_i}{\epsilon_j}, \ \ |\gamma(\omega_{ij})| \le 1,
\end{equation}
where $\omega_{ij} = \epsilon_j - \epsilon_i$. As it was said, the incoherent part of the ergotropy is unaffected by this process, i.e., $R_I(\rho_S) = R_I(\hat \sigma_S)$, and thus the work extraction from the diagonal is independent of the weight state at all (see Corollary \ref{incoherent_work_corollary}). On the contrary, the coherent part is dumped by the work reservoir, such that $R_C(\hat \rho_S) \ge R_C(\hat \sigma_S)$. That is why we call this phenomenon the \emph{ergotropy decoherence}, which leads to the so-called work-locking \cite{Korzekwa2016, Lobejko2021}, i.e., an inability of work extraction from coherences. The process of ergotropy decoherence \eqref{offdiagonal_decay} is the generalization of the work-locking observed in \cite{Korzekwa2016} (see also \cite{Lostaglio2015, Lostaglio2015_time}), where only the incoherent state of the work reservoir was discussed, and the process is then fully depolarizing (see Eq.  \eqref{weight_incoherent_dephasing}). 

Notice that the dumping function $\gamma(\omega)$ fully characterizes a loss of the coherences and, consequently, the loss of the ergotropy. Although it is straightforward to quantitatively connect the dumping function with some measure of the coherences, it is not so easy to do the same thing with the ergotropy. Recently, some bounds were proposed relating both measures together \cite{Francica2020}. 
 
\subsubsection{Fluctuation-decoherence relation}

Now, we are ready to formulate the fluctuation-decoherence relation, connecting the dumping function $\gamma(\omega)$ with an energy dispersion of the weight $\sigma_E^{(i,f)}$ (initial or final). Firstly, let us notice that $\gamma(\omega)$ is a characteristic function of the time states probability density function (see Eq. \eqref{energy_time_dist}), i.e., according to the definition \eqref{control_marginal} we have:
\begin{equation} \label{gamma_function}
    \gamma(\omega) = \int dt \ g(t) \ e^{i\omega t}.
\end{equation}
On the other hand, time states and energy states are canonically conjugate, such that they satisfy the Heisenberg uncertainty relation (HUR):
\begin{equation} \label{HUR}
    \sigma_t \sigma_E \ge \frac{1}{2},
\end{equation}
for arbitrary state $\hat \rho_W$, where $\sigma_t$ and $\sigma_E$ are square roots of the introduced variances \eqref{variance_time_energy}. Finally, according to the general property of a characteristic function, we have:
\begin{equation} \label{gamma_derivative}
    \sigma_t^2 = -\frac{d^2}{d\omega^2} |\gamma(\omega)| \Big{|}_{\omega=0},
\end{equation}
since $g(t)$ is a probability density function with dispersion $\sigma_t$. 
It proves the relation between the dumping function $\gamma(\omega)$, related to the ergotropy decoherence, and the initial energy dispersion in terms of the HUR \eqref{HUR}. However, this relation is implicit, and the formula \eqref{gamma_derivative} only involves the behavior of a characteristic function close to the origin. 

In the following, we make it more practical, such that the dumping function $\gamma(\omega)$ will define a lower bound for the initial or final energy dispersion. To achieve that goal we need to bound from above the characteristic function by a function that is monotonically decreasing with a dispersion $\sigma_t$, i.e., $|\gamma(\omega)| \le h(\omega, \sigma_t)$, such that $\frac{\partial}{\partial \sigma_t}h(\omega, \sigma_t) < 0$. Then, using HUR we would obtain $|\gamma(\omega)| \le h(\omega, \sigma_t) \le h(\omega, \frac{1}{2 \sigma_E})$, and after optimization over $\omega$ the expected bound would be attained. In the mathematical literature, one can find results for lower and upper bounds for a characteristic function, but none of them is in the required form \cite{Ushakov1997}. Fortunately, recently it was derived the uncertainty relation for the characteristic functions (ChUR) in the form \cite{Rudnicki2016}:  
\begin{equation}
\begin{split}
    &|\gamma(\omega_t)|^2 + |\lambda(\omega_E)|^2 \le \beta(\omega_t \omega_E), \\
    & \beta(x) = 2 \sqrt{2} \frac{\sqrt{2} - \sqrt{1 - \cos x}}{1+\cos x},
\end{split}
\end{equation}
where 
\begin{equation}
    \lambda(\omega) = \int dE \ e^{i \omega E} f(E)
\end{equation}
is an analogous characteristic function for the energy distribution $f(E)$. Using ChUR instead of HUR, we achieve the mentioned goal in the form of the following lemma. 

\begin{lemma} \label{sigmaE_inequality}
For arbitrary $\omega>0$, the following inequality is satisfied:
\begin{equation}
    \sigma_E \ge \frac{\omega |\gamma(\omega)|}{\pi}.
\end{equation}
\end{lemma}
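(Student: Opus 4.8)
The plan is to feed the characteristic-function uncertainty relation (ChUR) a single, well-chosen pair of conjugate frequencies and combine it with one elementary inequality for the energy characteristic function $\lambda$. First I would record the classical bound
\begin{equation}
|\lambda(\omega)|^2 \ge 1 - \omega^2 \sigma_E^2,
\end{equation}
which follows by writing $|\lambda(\omega)|^2 = \int dE \int dE' \cos[\omega(E-E')] f(E) f(E')$ (the imaginary part cancels by symmetry of the double integral), applying $1 - \cos u \le u^2/2$, and using $\int dE \int dE' (E-E')^2 f(E) f(E') = 2\sigma_E^2$, with $f$ the energy marginal from \eqref{energy_time_dist}. This is the only place where the dispersion $\sigma_E$ enters, and it converts a statement about $\lambda$ near the origin into a usable quadratic bound valid for all real arguments.

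Second, I would evaluate the ChUR envelope $\beta$ at the argument $\pi$. Since $\beta(x) = 2\sqrt{2}\,(\sqrt{2}-\sqrt{1-\cos x})/(1+\cos x)$ is of the indeterminate form $0/0$ there, a short expansion around $x=\pi$ (writing $x = \pi - \delta$, so that $\cos x \simeq -1 + \delta^2/2$) gives numerator $\sim \sqrt{2}\,\delta^2/8$ and denominator $\sim \delta^2/2$, hence $\beta(\pi) = 1$. This value is the crux of the argument: it exactly matches the baseline $|\lambda(0)|^2 = 1$, so that at the product frequency $\pi$ the ``trivial'' unit contribution is cancelled and only the genuine dispersion term survives.

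Finally, I would fix $\omega_t = \omega$ in the ChUR and choose the conjugate frequency $\omega_E = \pi/\omega$, so that $\omega_t \omega_E = \pi$. Substituting the two ingredients above into $|\gamma(\omega_t)|^2 + |\lambda(\omega_E)|^2 \le \beta(\omega_t\omega_E)$ gives
\begin{equation}
|\gamma(\omega)|^2 \le \beta(\pi) - |\lambda(\pi/\omega)|^2 = 1 - |\lambda(\pi/\omega)|^2 \le \left(\frac{\pi}{\omega}\right)^{2} \sigma_E^2 ,
\end{equation}
and taking square roots yields $\sigma_E \ge \omega |\gamma(\omega)| / \pi$, with $\gamma$ the dumping function \eqref{gamma_function}. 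Note that no optimization over $\omega$ is actually required: the statement is proved for each $\omega>0$ separately, the only free auxiliary choice being $\omega_E = \pi/\omega$.

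The main obstacle---or rather the single idea the whole argument hinges on---is recognizing that evaluating $\beta$ \emph{precisely} at $\pi$ is what turns ChUR into a clean bound: any other product would leave a residual $\beta(\cdot) - 1 \neq 0$ term and force a genuine (and considerably messier) optimization over $\omega_E$, spoiling the exact $\pi$ prefactor. A secondary point to verify is merely that both the ChUR and the quadratic bound on $\lambda$ apply at the comparatively large frequency $\omega_E = \pi/\omega$; the $\lambda$-inequality holds for all real arguments, so this is unproblematic, and the HUR \eqref{HUR} is never invoked directly---it is superseded entirely by the sharper characteristic-function relation.
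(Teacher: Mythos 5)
Your proposal is correct and follows essentially the same route as the paper: both combine the ChUR with the lower bound $|\lambda(\omega)|^2 \ge 1-\sigma_E^2\omega^2$ and exploit the special value $\beta(\pi)=1$ (the paper keeps $x$ general in an intermediate inequality and then takes $x\to\pi$, while you substitute $\omega_E=\pi/\omega$ directly, which is the same computation). Your added verifications of $\beta(\pi)=1$ and of the quadratic bound on $|\lambda|^2$ are accurate and merely make explicit what the paper cites.
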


\begin{proof}
Let us introduce two arbitrary real variables $\omega$ and $x$, such that from ChUR we have: 
\begin{equation}
|\gamma(\omega)|^2 \le \beta(x) - |\lambda(x/\omega)|^2.
\end{equation}
Next, we incorporate a lower bound for the modulus of the characteristic function \cite{Ushakov1997, Rudnicki2016}: 
\begin{equation}
|\lambda(\omega)|^2 \ge 1 - \sigma_E^2 \omega^2  
\end{equation}
which brings us the formula:
\begin{equation} \label{pre_lemma_bound}
\sigma_E^2 \ge \frac{\omega^2}{x^2} \left(1- \beta(x) + |\gamma(\omega)|^2 \right).
\end{equation}
Finally, taking the limit $x \to \pi$, we have $f(x) \to 1$, such that
\begin{equation} \label{lemma_bound}
    \sigma_E^2 \ge \frac{\omega^2 |\gamma(\omega)|^2}{\pi^2}.
\end{equation}
\end{proof}

Now, let us come back to the work extraction process $\hat \rho_S \otimes \hat \rho_W \to \hat U \hat \rho_S \otimes \hat \rho_W \hat U^\dag$, for which we define the initial $\sigma_E^{(i)}$ and final $\sigma_E^{(f)}$ standard deviations of the energy distribution. Notice that despite the change of the energy distribution $f_i(E) \to f_f(E)$, the time states probability density function is conserved via the unitary \eqref{unitary_with_S}, namely
\begin{equation}
    g(t) = \Tr[\hat \rho_W \dyad{t}_W] = \Tr[\hat U \hat \rho_W \hat U^\dag \dyad{t}_W], 
\end{equation}
since $\hat S$ and $\hat V_S$ commutes with $\dyad{t}_W$ (see Eq. \eqref{S_operator}). Consequently, the characteristic function $\gamma(\omega)$ is invariant under the work extraction process. 

According to the Lemma \ref{sigmaE_inequality}, this leads us to the second main theorem, defining the lower bound for the work reservoir's initial and final energy fluctuations in terms of the dumping function of the ergotropy. 

\begin{theorem} \label{fluctuation_decoherence_theorem}
The initial $\sigma_E^{(i)}$ and final $\sigma_E^{(f)}$ dispersion of the weight energy satisfy:
\begin{equation}
    \sigma_E^{(i,f)} \ge \frac{1}{\pi} \max_{\omega > 0} \big[ \omega |\gamma(\omega)| \big],
\end{equation}
where $\gamma(\omega)$ is the dumping function of the ergotropy decoherence process.
\end{theorem}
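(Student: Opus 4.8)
The plan is to derive Theorem~\ref{fluctuation_decoherence_theorem} as an essentially immediate consequence of Lemma~\ref{sigmaE_inequality} combined with the invariance of the dumping function $\gamma(\omega)$ established immediately above the statement. All of the analytic content—an uncertainty-type lower bound on the energy spread expressed through $|\gamma(\omega)|$—has already been isolated in the lemma; what remains is to apply that bound at both endpoints of the protocol and then optimize over the free parameter $\omega$.

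First I would apply Lemma~\ref{sigmaE_inequality} separately to the initial weight state $\hat \rho_W$ and to the final reduced state $\Tr_S[\hat U \hat \rho_S \otimes \hat \rho_W \hat U^\dag]$. For each of these states the lemma supplies, for every $\omega>0$,
\begin{equation}
\sigma_E^{(i)} \ge \frac{\omega\,|\gamma_i(\omega)|}{\pi}, \qquad \sigma_E^{(f)} \ge \frac{\omega\,|\gamma_f(\omega)|}{\pi},
\end{equation}
where $\gamma_i$ and $\gamma_f$ are the characteristic functions \eqref{gamma_function} of the time-state distributions $g_i(t)$ and $g_f(t)$ of the initial and final weight states, respectively.

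The key step is to invoke the invariance argument preceding the theorem: since $\hat S$ and $\hat V_S$ commute with $\dyad{t}_W$, the unitary \eqref{unitary_with_S} conserves the time-state distribution, so that $g_i(t)=g_f(t)\equiv g(t)$ and hence $\gamma_i(\omega)=\gamma_f(\omega)\equiv\gamma(\omega)$. Consequently both inequalities involve the \emph{same} function $\omega\,|\gamma(\omega)|/\pi$, and each of $\sigma_E^{(i)}$ and $\sigma_E^{(f)}$ is bounded below by $\omega\,|\gamma(\omega)|/\pi$ for every $\omega>0$. Because each bound holds pointwise in $\omega$, I would conclude by taking the supremum over $\omega>0$ of the right-hand side, yielding
\begin{equation}
\sigma_E^{(i,f)} \ge \frac{1}{\pi}\max_{\omega>0}\big[\,\omega\,|\gamma(\omega)|\,\big].
\end{equation}

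I do not expect a serious obstacle here, since Lemma~\ref{sigmaE_inequality} carries all the analytic weight; the only point that genuinely requires care is the legitimacy of applying the lemma to the \emph{final} state and identifying its dumping function with that of the initial state, which rests entirely on the conservation of $g(t)$ under the energy-conserving and translationally-invariant unitary. The optimization over $\omega$ is then unproblematic precisely because the inequality is valid for each $\omega$ separately, so passing to the maximum preserves it.
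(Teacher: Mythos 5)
Your proposal is correct and follows exactly the paper's own route: the paper likewise obtains the theorem by combining Lemma~\ref{sigmaE_inequality} (applied to an arbitrary weight state, hence to both endpoints) with the observation that $g(t)$, and therefore $\gamma(\omega)$, is invariant under $\hat U = \hat S^\dag \hat V_S \hat S$, and then optimizing the pointwise-in-$\omega$ bound. Nothing is missing.
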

\begin{remark}
Notice that, in general, one could obtain a better bound if instead of inequality \eqref{lemma_bound} it is optimized (over $x$ and $\omega$) the right-hand side of the inequality \eqref{pre_lemma_bound}. However, since our primary goal is to present the general idea of the interplay between fluctuations and dumping of coherences, we use here the simplified formula.     
\end{remark}

Theorem \ref{fluctuation_decoherence_theorem} reveals that the lower bound for (initial or final) dispersion of the weight energy depends on how fast the characteristic function $\gamma(\omega)$ vanishes for large $\omega$, where a slower decay results in higher fluctuations. On the other hand, the $\gamma(\omega)$ is precisely the dumping function of the off-diagonal element for frequency $\omega$. Hence, we observe a trade-off between maintaining the coherent ergotropy and final energy fluctuations. In particular, to unlock more ergotropy, the dumping function has to be close to one for all frequencies appearing in the spectrum of the Hamiltonian $\hat H_S$. But from a property of a characteristic function, if exist $\omega_0 > 0$ such that $|\gamma(\omega_0)| = 1$, then the function is identically equal to one (i.e., for all $\omega$), which implies $\max_{\omega}[ \omega |\gamma(\omega)|] = \infty$. Consequently, we have the following corollary.  

\begin{corollary} \label{dispersion_divergence_corollary}
If the total ergotropy is unlocked, i.e., $R(\hat \rho_S) = R(\hat \sigma_S)$, then $\sigma_E^{(i,f)} = \infty$. 
\end{corollary}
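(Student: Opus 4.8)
The plan is to show that the hypothesis forces the dumping function $\gamma(\omega)$ to saturate its maximal modulus $|\gamma|=1$ at some nonzero frequency, and then to feed this into Theorem \ref{fluctuation_decoherence_theorem}. First I would reduce the assumption. Since the decoherence map $\hat\rho_S\to\hat\sigma_S$ leaves the incoherent ergotropy untouched, $R_I(\hat\sigma_S)=R_I(\hat\rho_S)$, and since $R=R_I+R_C$, the hypothesis $R(\hat\sigma_S)=R(\hat\rho_S)$ is equivalent to full preservation of the coherent part, $R_C(\hat\sigma_S)=R_C(\hat\rho_S)$. I also read the statement as tacitly assuming there is coherent ergotropy to unlock, i.e. $R_C(\hat\rho_S)>0$; otherwise the hypothesis is vacuous and $\sigma_E^{(i,f)}$ need not diverge.

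The crux is the implication that preserving the full coherent ergotropy forces $|\gamma(\omega_0)|=1$ for some transition frequency $\omega_0>0$. Recall the action $\dyad{\epsilon_i}{\epsilon_j}\to\gamma(\omega_{ij})\dyad{\epsilon_i}{\epsilon_j}$ with $|\gamma(\omega_{ij})|\le 1$, together with the given monotonicity $R_C(\hat\sigma_S)\le R_C(\hat\rho_S)$. I would argue the contrapositive: if every transition frequency carrying a nonzero coherence in $\hat\rho_S$ were strictly damped, $|\gamma(\omega_{ij})|<1$, then every contributing off-diagonal element of $\hat\sigma_S$ would be strictly contracted and the coherent ergotropy would drop strictly, $R_C(\hat\sigma_S)<R_C(\hat\rho_S)$, contradicting equality. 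Hence at least one contributing frequency $\omega_0>0$ must satisfy $|\gamma(\omega_0)|=1$. Making this strict-monotonicity step rigorous is the main obstacle, since the passage from coherences to ergotropy runs through the non-smooth optimization defining $R$ and its passive state; what is needed is a quantitative sharpening of the bound $R_C(\hat\sigma_S)\le R_C(\hat\rho_S)$, establishing that the inequality is \emph{strict} as soon as a coherence contributing to $R_C$ is strictly contracted.

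Once a frequency $\omega_0>0$ with $|\gamma(\omega_0)|=1$ is secured, the conclusion is immediate. Since $\gamma(\omega)=\int dt\,g(t)\,e^{i\omega t}$ is the characteristic function of the time-state distribution $g(t)$ (Eq.~\eqref{gamma_function}), the standard fact that $|\gamma(\omega_0)|=1$ at a nonzero argument forces $g$ to be supported on a lattice $\{a+2\pi n/\omega_0\}$ applies, whence $|\gamma(n\omega_0)|=1$ for every integer $n$; in the limiting time-state case $g\to\delta$ one has $|\gamma|\equiv 1$ outright. Therefore $\max_{\omega>0}[\omega|\gamma(\omega)|]\ge\sup_{n\ge 1} n\,\omega_0=\infty$, and Theorem \ref{fluctuation_decoherence_theorem} gives $\sigma_E^{(i,f)}\ge\frac{1}{\pi}\max_{\omega>0}[\omega|\gamma(\omega)|]=\infty$, i.e. $\sigma_E^{(i,f)}=\infty$, as claimed.
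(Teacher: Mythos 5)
Your proposal follows essentially the same route as the paper: the text preceding the corollary argues that unlocking the full coherent ergotropy forces the dumping function to satisfy $|\gamma(\omega_0)|=1$ at some frequency $\omega_0>0$ appearing in the spectrum of $\hat H_S$, then invokes a property of characteristic functions to get $\max_{\omega>0}\big[\omega|\gamma(\omega)|\big]=\infty$, and concludes via Theorem \ref{fluctuation_decoherence_theorem}. If anything, you are more careful than the original on two points: the paper claims that $|\gamma(\omega_0)|=1$ for some $\omega_0>0$ implies $|\gamma(\omega)|=1$ for \emph{all} $\omega$, which is an overstatement — the correct statement is the lattice property you use, namely $|\gamma(n\omega_0)|=1$ for all integers $n$, and that already gives the divergence of $\sup_n n\omega_0|\gamma(n\omega_0)|$ — and the paper does not flag the tacit hypothesis $R_C(\hat\rho_S)>0$, without which the corollary fails (for incoherent states $\hat\sigma_S=\hat\rho_S$, so the hypothesis holds trivially while $\sigma_E$ stays finite). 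The step you honestly identify as the main obstacle — that strict contraction of every contributing coherence forces a strictly smaller coherent ergotropy — is also left at the informal level in the paper, so your argument is no less complete than the published one.
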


The basic idea of the presented here fluctuation-decoherence relation is that in the function $\gamma(\omega)$ (which represents the initial state of the weight $\hat \rho_W$) is encoded both the information about the locked (coherent) ergotropy of the system and the initial and final dispersion of the energy. Specifically, it provides knowledge about the maximal extractable work and the minimal value of its fluctuations.

\section{Exact Solution: Qubit} \label{qubit_section}
This section explicitly illustrates the presented results by analyzing a two-dimensional system (a qubit) interacting with the weight. In particular, we derive a phase space of possible combinations of the energy changes $\Delta E_W$ and variance changes $\Delta \sigma_W^2$ that can be observed for the arbitrary energy-preserving and translationally-invariant work extraction protocol. 

\subsection{Work-variance phase space}

\begin{figure}[h]
    \centering
    \includegraphics[height = 0.31 \textwidth] {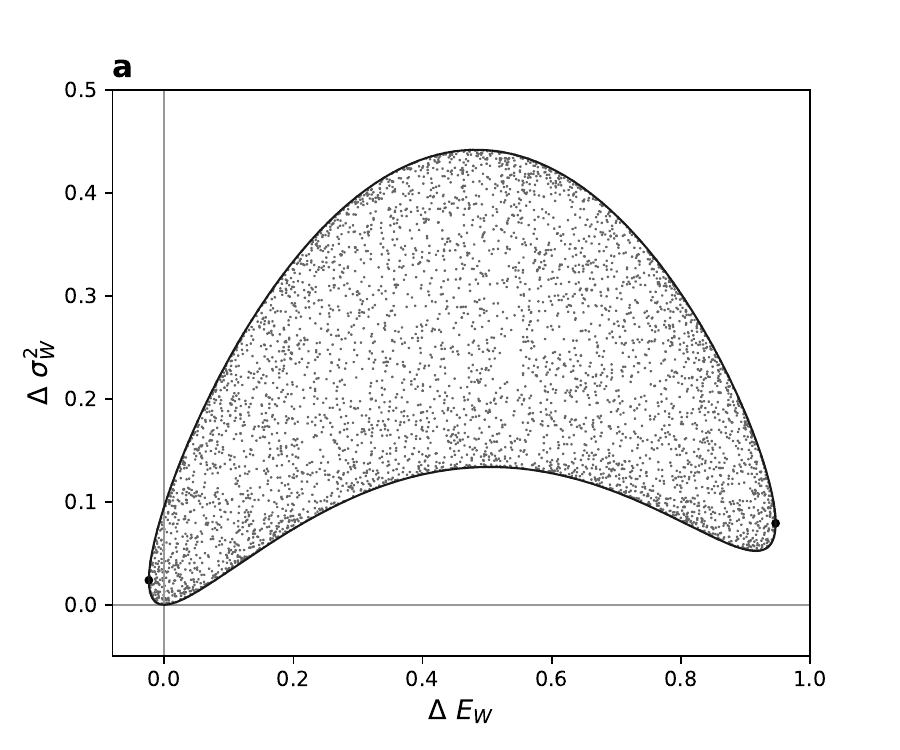}
    \includegraphics[height = 0.31 \textwidth] {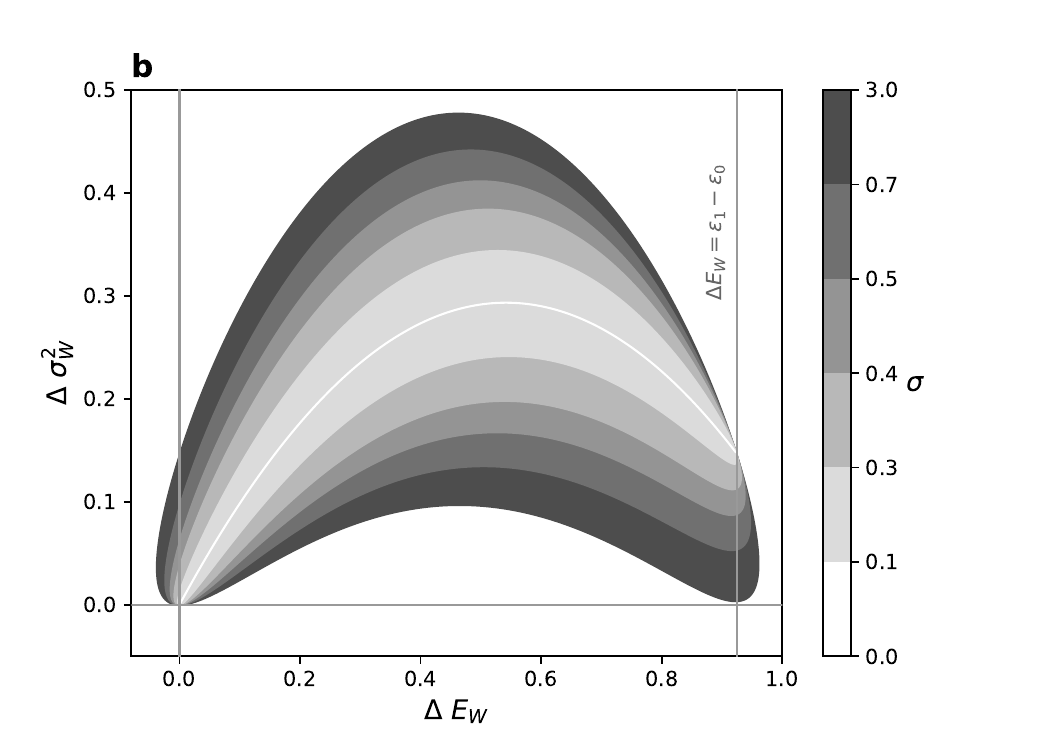}
    \includegraphics[height = 0.31 \textwidth] {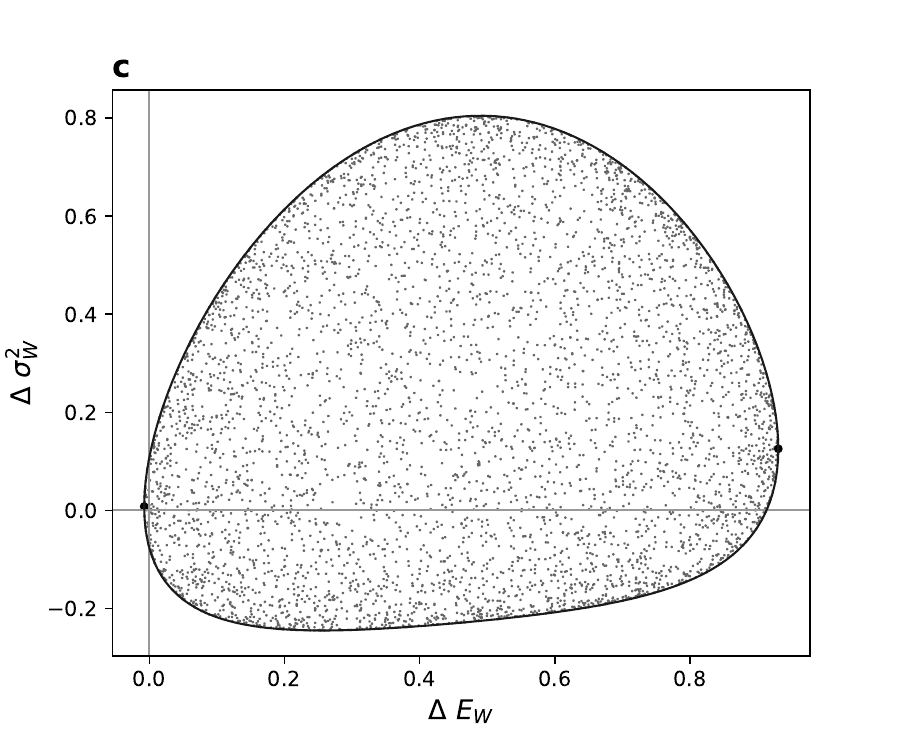}
    \includegraphics[height = 0.31 \textwidth] {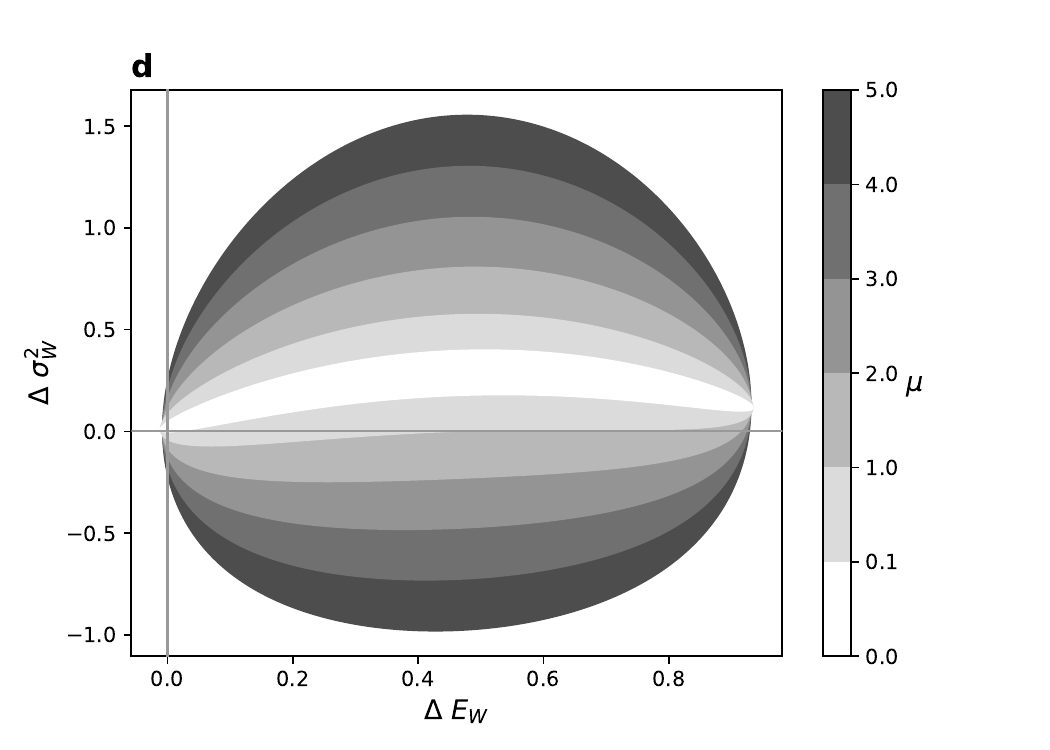} 
    \caption{\emph{Work-variance phase space.} Sets of possible points $(\Delta E_W, \Delta \sigma_W^2)$ for a pure state of the system $\hat \rho_S = \dyad{\psi}_S$, $\ket{\psi}_S \sim \ket{0}_S + 5 \ket{1}_S $. Top panel (\textbf{a}, \textbf{b}) corresponds to a Gaussian wave packet of the weight $\psi_{0, 0, \sigma}$ (semi-classical state) \eqref{gaussian_states}, and a bottom panel (\textbf{c}, \textbf{d}) corresponds to a cat state $\phi_{\mu, 1}$ (non-classical state) \eqref{cat_state}. In the left figures (\textbf{a}, \textbf{c}) (made for a particular values of $\sigma=1/\sqrt{2}$ and $\mu=2$), we plot boundaries of the set coming from Proposition \ref{qubit_proposition}, and we numerically sample over different unitaries $\hat V_S$ (points). The extreme points of the set, corresponding to the minimal and maximal extracted work, are marked by the black dots. For semi-classical state the variance gain is always non-negative, i.e., the set is tangential and above the line $\Delta \sigma_W^2= 0$, whereas for non-classical states there exist a subset with negative changes of the variance. In the plot \textbf{b}, we present boundaries of the phase space for different values of a dispersion $\sigma$. We observe that for $\sigma \to 0$, i.e., in the limit of an incoherent state of the weight, the set reduces to the line. Oppositely, for large values of $\sigma$, the control-state converges to the initial one, i.e., $\hat \sigma_S \to \hat \rho_S$, and due to the initial purity, the set touches the line $\Delta \sigma_W^2 = 0$ for a non-zero work $w = \varepsilon_1 - \varepsilon_0$ (see Eq. \eqref{roots_of_work}).  In the subfigure \textbf{d}, we present the expansion of a phase space with growing value of $\mu$ of the cat state, where $2\mu$ corresponds to a distance between its peaks. Results are presented in energy units given by a multiple of the qubit's gap $\omega$.
    }
    \label{phase_space}
\end{figure}

Let us consider an arbitrary product state of the qubit and the weight $\hat \rho_S \otimes \hat \rho_W$, with the Hamiltonian $\hat H_S = \omega \dyad{1}_S$. The energy gap of the qubit defines a natural energy scale, thus, throughout this section, we work with dimensionless quantities with energetic units given as a multiple of a frequency $\omega$. 

We start with a diagonal representation of the control-marginal state:
\begin{equation}
    \hat \sigma_S = p \dyad{\psi_0}_S + (1-p) \dyad{\psi_1}_S.
\end{equation}
Without loss of generality we assume that $p\le \frac{1}{2}$ and $\langle \hat H_W \rangle_{\hat \rho_W} = 0$. Next, we introduce the (dimensionless) energies: 
\begin{eqnarray}
\varepsilon_i = \frac{1}{\omega} \bra{\psi_i} \hat H_S \ket{\psi_i},
\end{eqnarray}
(for $i=0,1$) such that 
\begin{eqnarray} \label{sum_of_epsilons}
\varepsilon_0 + \varepsilon_1 = \frac{1}{\omega} \Tr[\hat H_S (\dyad{\psi_0}_S + \dyad{\psi_1}_S)] = 1,
\end{eqnarray}
and the following integrals:
\begin{equation}
    \begin{split}
        \eta &= \frac{1}{\omega} \int dt \int dE \ E \ W(E,t) \ (e^{i \omega t} \frac{1}{\gamma} + e^{-i \omega t} \frac{1}{\gamma^*}), \\
        \xi  &= \frac{1}{\omega} \int dt \int dE \ E \ W(E,t) \ (e^{i \omega t} \frac{\varepsilon_1}{\gamma } - e^{-i \omega t} \frac{\varepsilon_0}{\gamma^*} ), \\
        \gamma &= \int dt \int dE \ W(E,t) \ e^{i \omega t}.
    \end{split}
\end{equation}
Then, we propose the following classification of the work extraction protocol:    
\begin{proposition} \label{qubit_proposition}
For an arbitrary work extraction protocol $\hat \rho_S \otimes \hat \rho_W \to \hat U \hat \rho_S \otimes \hat \rho_W \hat U^\dag$ (where $\hat U = \hat S^\dag \hat V_S \hat S$), the corresponding change of the weight's energy and variance $(\Delta E_W, \Delta \sigma_W^2)$ belongs to a set:  
\begin{equation} \label{work_variance_set}
\begin{split}
    \Delta E_W &= w, \ w \in [-\varepsilon_0 (1-2p), \varepsilon_1 (1-2p)], \\
    \Delta \sigma_W^2 &\in \left[f(w) - h(w), f(w) + h(w) \right],
\end{split}
\end{equation}
where 
\begin{equation}
        f(w) = - w^2 + \left(\frac{\varepsilon_1 - \varepsilon_0}{1-2p} + 4 \varepsilon_0 \varepsilon_1 \eta \right) w
        + 2 \varepsilon_0\varepsilon_1 \Big[1 - (1-2p) (\varepsilon_1 - \varepsilon_0) \eta \Big]
\end{equation}
and
\begin{equation}
\begin{split}
    h(w) &= 2 R \sqrt{\varepsilon_0\varepsilon_1 (\varepsilon_0 + \frac{w}{1-2p})(\varepsilon_1 - \frac{w}{1-2p})}, \quad R = \left|1- 2(1-2p) \xi\right|.
\end{split}
\end{equation}
Conversely, for an arbitrary point $(\Delta E_W, \Delta \sigma_W^2)$ within a set \eqref{work_variance_set}, there exist a protocol (i.e., unitary $\hat V_S$) leading to corresponding changes of the weight's cumulants. 
\end{proposition}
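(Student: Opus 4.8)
The plan is to parametrize all admissible unitaries $\hat V_S$ on the qubit, compute the three ingredients $\Delta E_W$, $\Var_{\hat\sigma_S}[\hat W_S]$, and the $F$-term from Theorem~\ref{work_variance_theorem} explicitly, and then show that as $\hat V_S$ ranges over $SU(2)$ the pair $(\Delta E_W,\Delta\sigma_W^2)$ sweeps out exactly the stated region. First I would fix the control-marginal state in its eigenbasis $\hat\sigma_S = p\dyad{\psi_0} + (1-p)\dyad{\psi_1}$ and write the final Hamiltonian $\hat V_S^\dag \hat H_S \hat V_S$ as a general qubit observable with the same spectrum $\{0,\omega\}$ as $\hat H_S$. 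Since the trace $\Tr[\hat V_S^\dag \hat H_S \hat V_S]=\omega$ is fixed, the free parameters are effectively the diagonal entries and one coherence of $\hat V_S^\dag \hat H_S \hat V_S$ in the $\{\ket{\psi_0},\ket{\psi_1}\}$ basis; I would encode these through a Bloch-vector picture so that the constraint ``same eigenvalues as $\hat H_S$'' becomes a fixed-length Bloch vector whose orientation is free.

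Next I would evaluate $\Delta E_W = \langle \hat W_S\rangle_{\hat\sigma_S} = \omega - \Tr[\hat V_S^\dag \hat H_S \hat V_S\,\hat\sigma_S]$, which is linear in the diagonal matrix elements of the rotated Hamiltonian and hence ranges over an interval; matching endpoints to $[-\varepsilon_0(1-2p),\varepsilon_1(1-2p)]$ fixes the normalization and shows $w=\Delta E_W$ is the natural parameter. With $w$ held fixed, the remaining freedom is the \emph{phase} of the off-diagonal element of $\hat V_S^\dag \hat H_S \hat V_S$. I would then compute $\Var_{\hat\sigma_S}[\hat W_S]$ and the $F$-term as functions of $w$ and this residual phase. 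Using the integral forms of $\eta$, $\xi$, $\gamma$ defined just before the statement, the $F$-term splits into a phase-independent piece (contributing to $f(w)$ via $\eta$) and a phase-dependent piece of fixed modulus (contributing the amplitude $R=|1-2(1-2p)\xi|$ to $h(w)$). The claim is that the phase-dependent contributions to $\Delta\sigma_W^2$ enter as $\cos\theta$ times $h(w)$, so that varying $\theta\in[0,2\pi)$ traces the full interval $[f(w)-h(w),\,f(w)+h(w)]$; the square-root structure of $h(w)$ should emerge from the Bloch-sphere constraint relating the coherence magnitude to the diagonal entries (i.e. to $w$), which forces a relation of the type $|{\rm coherence}|^2 \propto (\varepsilon_0 + \tfrac{w}{1-2p})(\varepsilon_1 - \tfrac{w}{1-2p})$.

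For the converse I would simply invert this parametrization: given any target $(w,\Delta\sigma_W^2)$ in the region, the value of $w$ selects the diagonal of $\hat V_S^\dag \hat H_S \hat V_S$, and the position of $\Delta\sigma_W^2$ within $[f(w)-h(w),f(w)+h(w)]$ selects the phase $\theta$ via $\Delta\sigma_W^2 = f(w)+h(w)\cos\theta$; reconstructing $\hat V_S$ as the rotation realizing this rotated Hamiltonian completes the argument. The main obstacle I anticipate is the bookkeeping in the $F$-term: because $F$ lives on the full system-weight Hilbert space and is expressed through the Wigner function $W(E,t)$, I must carefully show that its dependence on $\hat V_S$ factors into exactly the combination captured by $\eta$, $\xi$, and $\gamma$, and in particular that the $\xi$-dependent part contributes only to the oscillation amplitude $h(w)$ (shifting the center only through $\eta$). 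Verifying that the interplay between $\Var_{\hat\sigma_S}[\hat W_S]$ (which is quadratic and phase-dependent) and the $F$-term collapses into the clean $-w^2 + (\cdots)w + (\cdots)$ form of $f(w)$, with no leftover phase dependence outside the $\cos\theta$ term, is the delicate computation where sign and normalization errors are most likely; I would organize it by treating the phase-even and phase-odd parts separately and checking the $\theta$-independent terms reproduce $f(w)$ at $\cos\theta=0$.
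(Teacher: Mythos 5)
Your plan is correct and follows essentially the same route as the paper's proof: you parametrize $\hat V_S^\dag \hat H_S \hat V_S$ in the eigenbasis of $\hat\sigma_S$ with fixed spectrum $\{0,\omega\}$, identify $w$ and the residual off-diagonal phase as the two free parameters, exploit the rank-one identity $|\bra{\psi_0}\hat H_S'\ket{\psi_1}|^2=\bra{\psi_0}\hat H_S'\ket{\psi_0}\bra{\psi_1}\hat H_S'\ket{\psi_1}$ to obtain the square-root in $h(w)$, and split the $F$-term into a phase-even piece (the $\eta$ shift of $f$) and a phase-odd piece of amplitude $R$, with surjectivity obtained by sweeping the phase at fixed $w$ — exactly the structure of Parts I--IV in the appendix. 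The only ingredient you leave implicit is the explicit change of basis relating $\ket{\psi_{0,1}}$ to the energy eigenbasis and to $\hat\rho_S$ (needed to reduce the $g_{ij}$ coefficients to the stated integrals $\eta$, $\xi$, $\gamma$), but you correctly flag this bookkeeping as the delicate step.
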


In the following, we compare the phase space of possible values $(\Delta E_W, \Delta \sigma_W^2)$ of the semi-classical and non-classical states. The graphical illustration is presented in Fig. \ref{phase_space}. 

\subsubsection{Semi-classical states}
For semi-classical states, the Proposition \ref{qubit_proposition} significantly simplifies since from the condition $\hat \xi'(s) = 0$ (and the assumption $\Tr[\hat H_W \hat \rho_W] = 0$) one can easily show that: 
\begin{equation}
    \int dt \int dE \ E \ e^{i \omega t} \ W(E,t) = 0,
\end{equation}
which further implies $\eta = \xi = 0$. Hence, characterization of semi-classical phase space is given by the functions: 
\begin{equation} \label{semi_classical_phase_space}
    \begin{split}
        f(w) &= - w^2 + \frac{\varepsilon_1 - \varepsilon_0}{1-2p} w + 2 \varepsilon_0\varepsilon_1, \\
        h(w) &= 2 \sqrt{\varepsilon_0 \varepsilon_1 (\varepsilon_0 + \frac{w}{1-2p})(\varepsilon_1 - \frac{w}{1-2p})}.
    \end{split}
\end{equation}

As it follows from Corollary \ref{positive_variance} due to $\Var_{\hat \sigma_S}[\hat W_S] \ge 0$ we have here $f(w) \ge h(w)$ (see Fig. \ref{phase_space}a). Next, we consider the solutions of the equation $f(w_0) = h(w_0)$, which correspond to the energy transfer $w_0$ with a null change of the variance (i.e., $\Delta \sigma_W^2 = 0$). We get the following roots:
\begin{equation} \label{roots_of_work}
    w_0 = 
    \begin{cases}
    0, &  \frac{1}{2} \ge p > 0\\
    0, \varepsilon_1 - \varepsilon_0, & p=0
\end{cases}
\end{equation}
As it was discussed in the previous section, the only positive solution $w_0 = \varepsilon_1 - \varepsilon_0$ is for a pure state ($p=0$) and corresponds to the protocol $\hat V_S$ such that the control-marginal state $\hat \sigma_S$ is the eigenstate of the work operator $\hat W_S$. In Fig. \ref{phase_space}b we present how the point $(\Delta E_W = \varepsilon_1 - \varepsilon_0, \Delta \sigma_W^2 = 0)$ is achieved for a Gaussian wave packet of the weight \eqref{gaussian_states} with an increasing width $\sigma$ (see also comment below Corollary \ref{corollary_zero_variance}). 

Notice that the extreme point $w_{max} = \varepsilon_1 (1-2p)$, for which $h(w_{max}) = 0$, corresponds to the maximal extracted work, i.e., the ergotropy of the state $\hat \sigma_S$. The variance gain if the maximal work is extracted is equal to:
\begin{equation}
    \Delta \sigma_W^2 = f(w_{max}) =  4p (1 - p)\varepsilon_1^2+\varepsilon_0\varepsilon_1.
\end{equation}

\emph{Incoherent states.} Incoherent states form a particular subclass of semi-classicals with $\hat \sigma_S = D[\hat \rho_S]$. Consequently, $\varepsilon_0 = 0$ and $\varepsilon_1 = 1$ (i.e., the eigenstates $\ket{\psi_i}$ are equal to the energy eigenstates $\ket{i}$), such that $h(w)=0$ and the set $(\Delta E_W, \Delta \sigma_W^2)$ reduces to the line:
\begin{equation} \label{parabola}
    \Delta \sigma_W^2 = f(w) = \frac{w}{1-2p} - w^2,
\end{equation}
i.e., for a particular extracted work $\Delta E_W = w$ we have the unique change of the variance. This is also illustrated in Fig. \ref{phase_space}b in the limit of vanishing dispersion of the wave packet $\sigma \to 0$ (such that wave function tends to the Dirac delta). 

\begin{figure}[ht]
    \centering
    \includegraphics[width = 0.3 \textwidth] {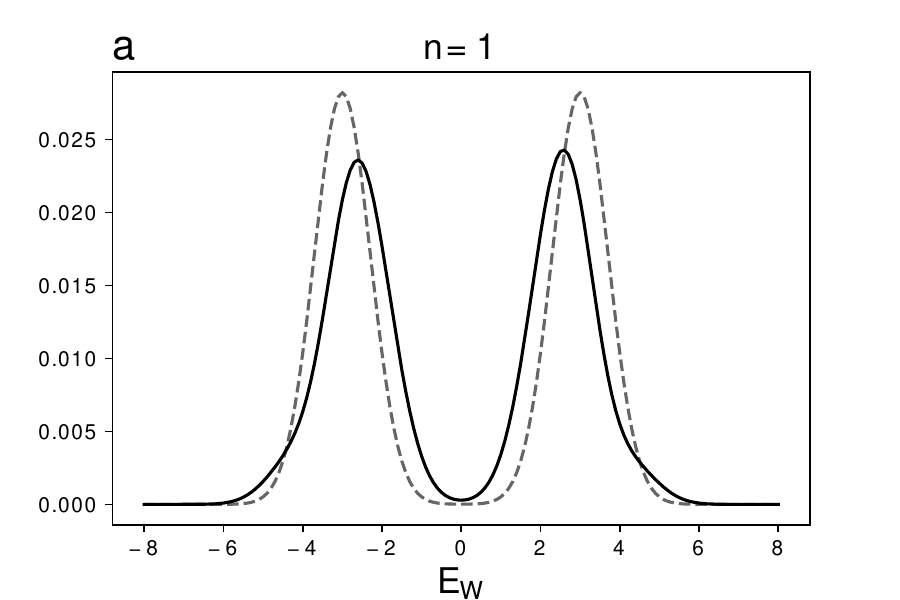}
    \includegraphics[width = 0.3 \textwidth] {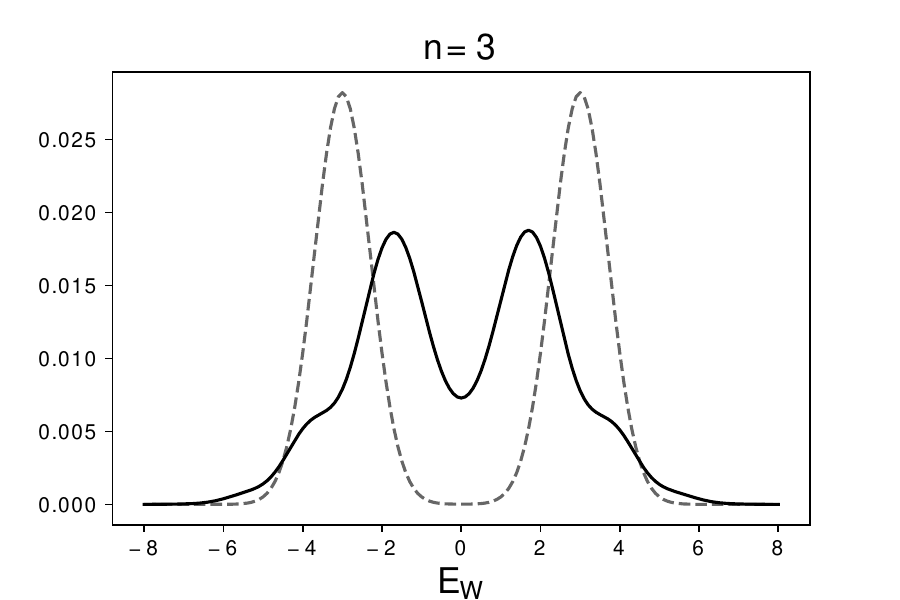}
    \includegraphics[width = 0.3 \textwidth] {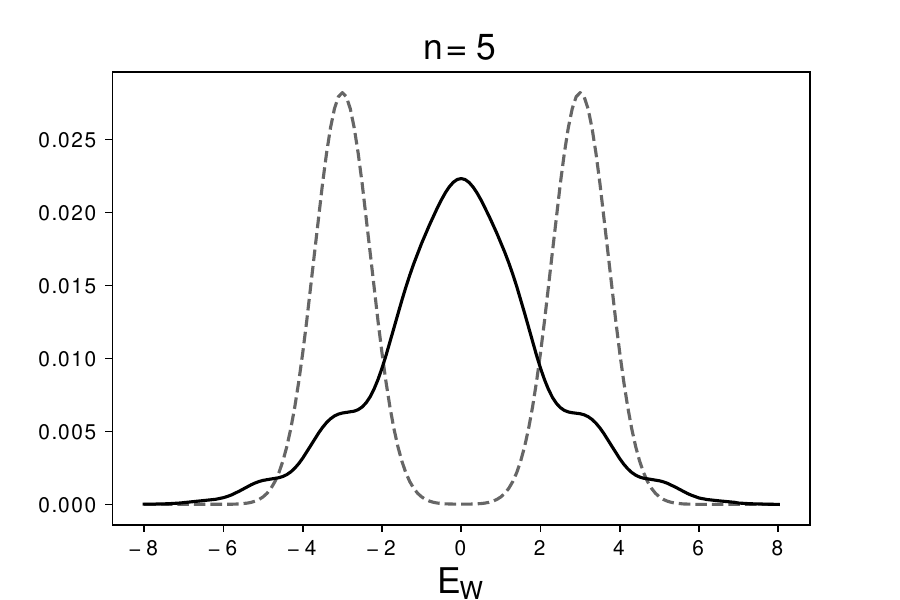}
    \includegraphics[width = 0.3 \textwidth] {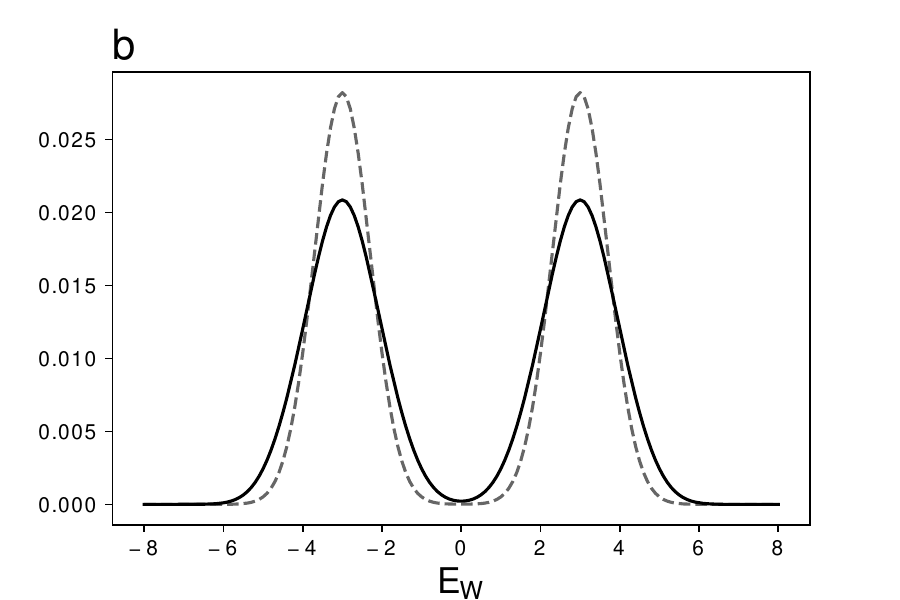}
    \includegraphics[width = 0.3 \textwidth] {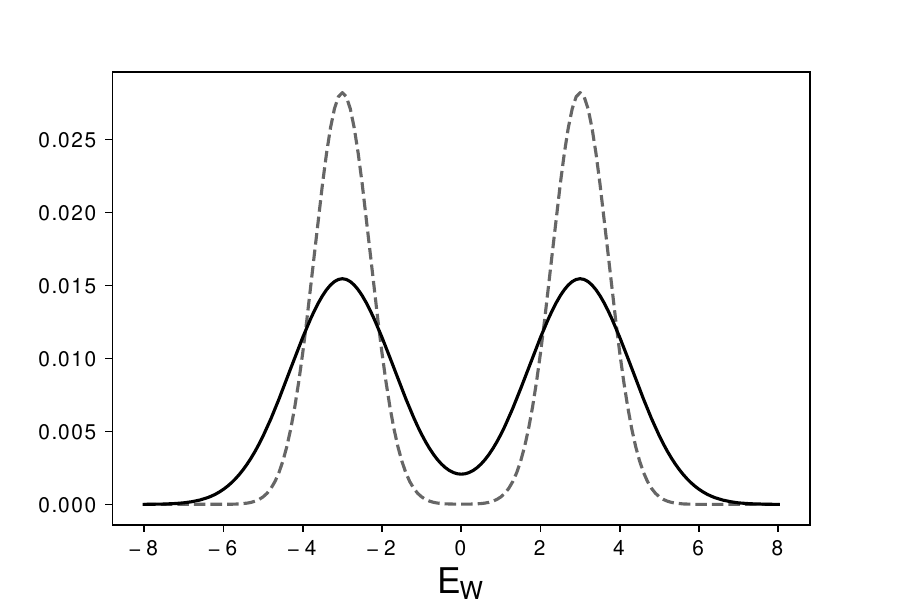}
    \includegraphics[width = 0.3 \textwidth] {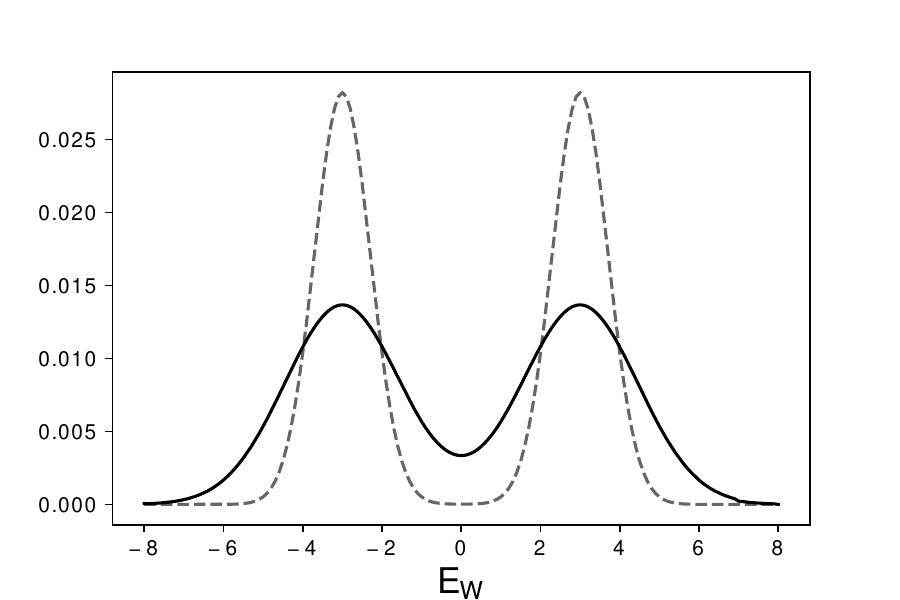}

    \caption{\emph{Subsequent reduction of energy fluctuations.} The energy distribution $f_n(E) = \bra{E} \hat \rho_W^{(n)} \ket E$ of a weight state obtained through a subsequent coupling with a system, i.e., $\hat \rho_W^{(n)} = \Tr_S[\hat U_n \hat \rho_S \otimes \hat \rho_W^{(n-1)} \hat U_n^\dag]$, where the initial state $\hat \rho_W^{(0)} = \dyad{\phi}_W$ is a cat state with a wave function $\phi_{3,1}$ \eqref{cat_state}. Top panel (\textbf{a}) is for the coherent ``plus'' state $\hat \rho_S = \dyad{+}_S$ \eqref{plus_state}, and bottom panel (\textbf{b}) for the incoherent state $\hat \rho_S = \frac{1}{2} (\dyad{0}_S + \dyad{1}_S)$. In both cases the evolution operator $\hat U_n = \hat S^\dag \hat V^{(n)}_S \hat S$ is for a unitary $\hat V_S^{(n)}$ which minimizes the variance gain given by Eq. \eqref{variance_minimum}. The solid line represents the probability density function $f_n(E)$ for $n=1, 3, 5$, whereas the dashed line is a reference distribution for $n=0$. It is seen that for a coherent state (\textbf{a}), the quantum interference leads to sequential reduction of the weight's energy variance via collapsing to each other both peaks of the cat state. On the contrary, if the system is incoherent (\textbf{b}), the process only leads to broadening of each peaks, and hence to increasing of the energy variance, as it is expected for an arbitrary semi-classical state. Results are presented in energy units given by a multiple of the qubit's gap $\omega$. }
    \label{reducing_variance}
\end{figure}

\subsubsection{Non-classical states with reflection symmetry}
As an example of the non-classical states, we consider the weight's wave function with reflection symmetry, i.e. $\psi(E) = \psi(-E)$, for which the Wigner function is symmetric : $W(E,t) = W(-E, -t)$. Due to this symmetry, we observe (by a simple change of the variables) that the dephasing factor $\gamma$ is a real number, and hence $\eta = 0$. 

Next, one should notice that applying the relation $\varepsilon_1 = 1 - \varepsilon_0$ one can rewritten the $\xi$ function in the form:
\begin{equation}
\begin{split}
\xi  &= \frac{1}{\gamma \omega} \int dt \int dE \ E \ W(E,t) \ e^{i \omega t} - \varepsilon_0 \eta,
\end{split}
\end{equation}
what finally gives us the expression:
\begin{equation}
    \xi = \frac{1}{\omega} \frac{\int dt \int dE \ E \ e^{i \omega t} \ W(E,t) }{\int dt \int dE \ e^{i \omega t} \ W(E,t)}.
\end{equation}
Finally, one should observe that $\xi$ is purely imaginary, i.e., $\xi^* = -\xi$, such that we have
\begin{equation}
    R = \sqrt{1 + 4 (1-2p)^2 |\xi|^2}.
\end{equation}
We see that for states with reflection symmetry the phase space of the variables $(\Delta E_W, \Delta \sigma_W^2)$ is exactly the same as for the semi-classical states given by Eq. \eqref{semi_classical_phase_space}, but with the radius $R\ge1$ (where for semi-classical states $R=1$). 

\emph{Cat states.} As a particular example of space-symmetric states, we take the so-called ``cat state'', with the wave function:  
\begin{equation} \label{cat_state}
    \phi_{\mu,\nu} (E) = \frac{\psi_{\mu,\nu, \frac{1}{\sqrt{2}}} (E) + \psi_{-\mu,-\nu, \frac{1}{\sqrt{2}}} (E)}{\sqrt{2(1+e^{-\mu^2 - \nu^2})}},  
\end{equation}
where $\psi_{\mu,\nu, 1/\sqrt{2}} $ is a Gaussian wave packet \eqref{gaussian_states} with $\sigma = 1/\sqrt{2}$. For this we derive an exact analytical formula for the radius $R$, i.e.:
\begin{equation}
\begin{split}
    &R = \sqrt{1+\frac{4(1-2p)^2\left( \nu (1 - e^{2  \mu})- 2  \mu \kappa_{ \mu,  \nu} \sin( \nu)\right)^2}{ \left(1+ e^{2  \mu} + 2 \kappa_{ \mu,  \nu} \cos( \nu)\right)^2}}, \\
    &\kappa_{ \mu,  \nu} = e^{ \nu^2 +  \mu^2 +  \mu}.
\end{split}
\end{equation}
One should notice that $R=1$ either if $\mu = 0$ or $\nu = 0$. 

\emph{Coherent ``plus" state.} We would like to derive a phase space for a coherent initial state of the system $\hat \rho_S = \dyad{+}_S$, where
\begin{equation} \label{plus_state}
\ket{+}_S = \frac{1}{\sqrt{2}} (\ket{0}_S + \ket{1}_S).    
\end{equation}
For this we need to show that $\varepsilon_0 = \varepsilon_1 = 1/2$. Indeed, the transition $\hat \rho_S \to \hat \sigma_S$ is a decoherence process, thus  $\langle \hat H_S \rangle_{\hat \sigma_S} = \langle \hat H_S \rangle_{\hat \rho_S}$ for arbitrary state of the weight. Then,  we have a set of equalities (see Eq. \eqref{sum_of_epsilons}):
\begin{align*}
        &p \varepsilon_0 + (1-p) \varepsilon_1 = \frac{1}{2}, \\
        &\varepsilon_0 +  \varepsilon_1 = 1.
\end{align*}
However, this should be satisfied for any value of $p \le \frac{1}{2}$ (which depends on the state $\hat \rho_W$), thus the only possible solution is $\varepsilon_0 = \varepsilon_1 = 1/2$. Finally, we put it to Proposition \ref{qubit_proposition}, and get:
\begin{equation} \label{coherent_phase_space}
\begin{split}
    f(w) &= - w^2 + \eta w + \frac{1}{2}, \\
    h(w) &= R \sqrt{\frac{1}{4} - \frac{w^2}{(1-2p)^2}}. \\
\end{split}
\end{equation}

\emph{Reducing of the energy dispersion.} The most interesting feature of non-classical states is the possibility of reducing the variance in the final weight state. We analyze this process of squeezing for a particular coherent initial state $\hat \rho_S = \dyad{+}_S$ (introduced above) and pure state of the weight with reflection symmetry (with $\eta = 0$ and $R \ge 1$). From Eq. \eqref{coherent_phase_space} one can calculate the maximal possible drop of the variance, which is achieved for the point $w = 0$, i.e., when no work is extracted. The minimum is given by: 
\begin{equation} \label{variance_minimum}
    \min[\Delta \sigma_W^2] = f(0) - h(0) = \frac{\omega}{2} (1 - R).
\end{equation}
We observe that for $R > 1$, the change of the variance is negative. 

Fig. \ref{reducing_variance} presents how the cat state of the weight changes according to the process of such variance reduction. Moreover, we show that one can decrease the variance several times if subsequent protocols with the same state $\hat \rho_S$ are introduced. We compare it with the same evolution but with the fully dephased state $D[\hat \rho_S]$ (i.e., without coherences). It is seen that as long as the classical state only broadens both peaks of the cat state (which is a universal feature of all semi-classical states, see Corollary \ref{positive_variance}), a quantum interference can collapse those peaks to each other, such that the energy dispersion is reduced. However, we stress that such a subsequent reduction eventually saturates and cannot be repeated forever. Indeed, we have previously shown that a Gaussian wave packet also belongs to the semi-classical states; hence, no squeezing of its width is possible. It suggests that the presence of two peaks (i.e., with $\mu > 0$) is essential for the process of the variance reduction, but still, the reason for that is a quantum interference between the qubit and the weight. Especially, as it was discussed before, the cat state has $R > 1$, and hence $\Delta \sigma_W^2 < 0$, only if $\mu > 0$ and $\nu > 0$.   

\subsection{Coherent vs. Incoherent work extraction: Gaussian state and qubit}
In the section \ref{bounds_section} we discussed the bounds for an energy dispersion if the incoherent (coherent) part of ergotropy is extracted. Let us illustrate this behavior quantitatively for a particular example.

We consider an arbitrary Gaussian wave packet of the weight in the form \eqref{gaussian_states} and a system given by a qubit in a state:
\begin{eqnarray} \label{qubit_state}
\hat \rho_S = \frac{1}{2} (\mathbb{1} + \vec x \cdot \vec{\hat \sigma}),
\end{eqnarray}
where $\vec x = (x, y, z)$ and $\vec {\hat \sigma}$ is a vector of Pauli matrices. We are interested in the final dispersion $\sigma_E^{(f)}$ of the weight after a protocol, if either is extracted the maximal incoherent work given by $R_I \equiv R_I(\hat \rho_S)$ or the coherent ergotropy $R_C \equiv R_C(\hat \rho_S)$.

For a qubit, we have the following expressions (in units of the energy gap $\omega$):
\begin{eqnarray}
    R_I &=& \frac{1}{2} \left(|z| - z \right), \label{incoherent_qubit_ergotropy} \\
    R_C &=& \frac{1}{2} \left(\sqrt{|\gamma(\omega)|^2 \alpha^2 + z^2} - |z|\right) \label{coherent_qubit_ergotropy}.
\end{eqnarray}
where we put $\alpha^2 = x^2 + y^2 $. As it follows from Theorem \eqref{fluctuation_decoherence_theorem}, the dumping factor $\gamma(\omega)$ is the characteristic function of the weight's time states (given by Eq. \eqref{gamma_function}). Hence, we see that the maximal coherent work depends explicitly on the initial state of the energy-storage device. For a Gaussian state, the function $\gamma(\omega)$ takes a simple exponential form (see Appendix \ref{coherent_incoherent_appendix}), which together with HUR relation \eqref{HUR} leads us to the following result: If via the protocol the coherent ergotropy $R_C$ is extracted, the final dispersion of energy is bounded by:
\begin{eqnarray} \label{coherent_bound_qubit}
\sigma_E^{(f)} &\ge& \frac{1}{2 \sqrt{\log \left[\frac{\alpha^2}{4 R_C(R_C + |z|)} \right]} }.
\end{eqnarray}
On the contrary, if the process is solely incoherent (either because of $\alpha = 0$ or $\hat W_S = \hat W_S^I$), then extracting the maximal work $R_I$ results in the bound (see Corollary \ref{incoherent_work_corollary}):
\begin{eqnarray} \label{incoherent_bound_qubit}
\sigma_E^{(f)} &\ge&  \sqrt{1 - R_I^2}.
\end{eqnarray}

For a specific initial state $\hat \rho_S$, the incoherent ergotropy $R_I$, and therefore also the bound \eqref{incoherent_bound_qubit}, is fixed and finite. However, the coherent part lies in the set $R_C \in \left[0, \frac{1}{2} \left(|\vec x|- |z|\right) \right]$ (depending on the initial state of the weight). In Fig. \ref{coherent_bound_plot} we plot how the minimal final dispersion changes with respect to the value of $R_C$ within this set. In particular, it is seen that the dispersion diverges if the maximal coherent work is extracted in accordance with Corollary \ref{dispersion_divergence_corollary}. 

\begin{figure}[ht]
    \centering
    \includegraphics[width = 0.5 \textwidth] {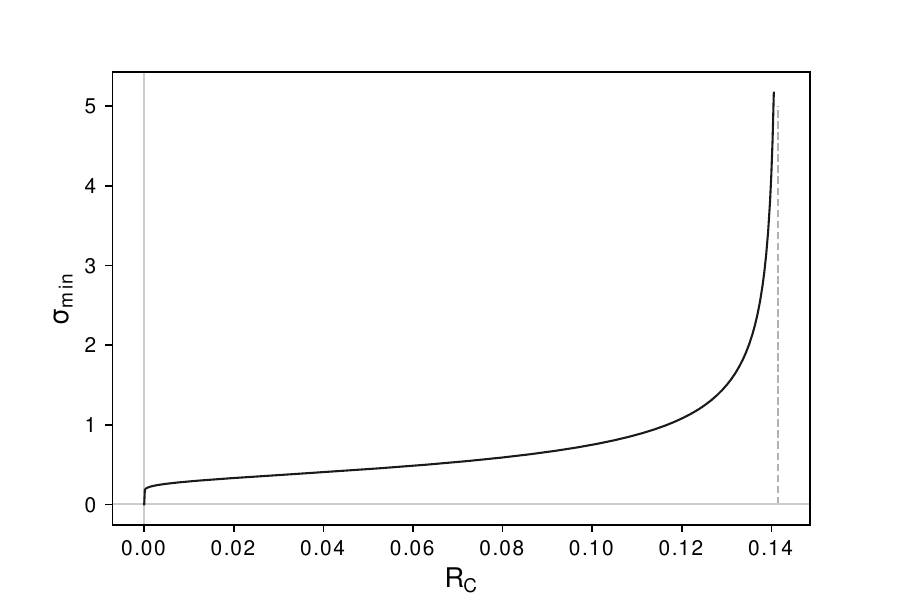}
    \caption{\emph{The lower bound for the energy dispersion.} The function (solid line) represents the minimal final dispersion of the weight for an arbitrary initial Gaussian state \eqref{gaussian_states} and state of a qubit \eqref{qubit_state} with $\alpha = 0.5$ and $z = 0.3$, if the coherent ergotropy $R_C$ \eqref{coherent_qubit_ergotropy} is extracted. The dashed line correspond to the maximal value $R_C = \frac{1}{2} \left(|\vec x| - |z|\right)$ for which the minimal variance diverges. Results are presented in energy units given by a multiple of the qubit frequency $\omega$.}
    \label{coherent_bound_plot}
\end{figure}

\section{Summary and Discussion} \label{summary_section}

In this paper, we contribute to the long-standing problem of micro-scale thermodynamics, i.e., what role play quantum coherences in the process of work extraction? We have shown that to answer this question correctly, it is not enough to construct a quasi-distribution with a proper classical limit but also to introduce an energy-storage device as an independent quantum system (as expected for fully autonomous thermal machines).  

We consider a comprehensive study of work fluctuations within such a fully quantum setup. Primarily, we have shown the significance of an autonomous approach in Theorem \ref{work_variance_theorem}, where the $F$-term, involving interference between the system and the weight, appears as an additional contribution to changes in the energy variance (i.e., representing fluctuations of the extracted work). This contribution is crucial since the $F$-term for quantum systems can lead to a new qualitative phenomenon of reducing work fluctuations, in contrast to semi-classical states, for which the variance always increases. However, we stress that not every state with coherences manifests those features, e.g., coherent Gaussian wave packets, behave as a semi-classical system as expected. 

Besides, we derive the fluctuation-decoherence relation showing that the process of a work-locking caused by decoherence is related to energy dispersion of the work reservoir. In particular, we reveal that unlocking the total coherent ergotropy always results in divergence of the work fluctuations. In general, this observation points out the main difference between the extraction of coherent and incoherent work: The former can decrease the variance, but its absolute value diverges if more and more energy is extracted, whereas for the latter, the gain is always non-negative, but a total (incoherent) ergotropy can be extracted with finite work fluctuations.

Presented here framework opens a bunch of new research areas. Firstly, one can ask what the formulas are for higher cumulants of the quasi-distribution. In particular, identifying, similar to the $F$-term, quantum contributions in higher cumulants shall answer the question of the impact of quantum coherences on the work extraction process. Moreover, in this paper, we only concentrate on product states, remaining an open question of what role quantum correlations like entanglement can play. We also do not address the issue of possible degeneracies in the weight spectrum. Furthermore, the introduced quasi-distribution is interesting on its own; due to its robustness to the invasiveness of the measurements and physical interpretation of cumulants, it can be successfully applied to the analysis of other fluctuation theorems (e.g., involving heat transfer).

Finally, we stress that our framework is a step toward formulating the necessary conditions for a work reservoir diagonal part transition. Indeed, by derived theorems, we are able to characterize a work-variance phase space of a qubit completely; hence a natural extension is to do the same thing for higher dimensions and cumulants. In this sense, a complete characterization of the energy statistics evolution can be understood as an ultimate formulation of the fluctuation theorems.

\section*{Acknowledgements}
The author thanks  Anthony J. Short and Micha{\l} Horodecki for helpful and inspiring discussions.  This research was supported by the National Science Centre, Poland, through grant  SONATINA 2 2018/28/C/ST2/00364.

\bibliographystyle{vancouver}
\bibliography{bib}

\begin{thebibliography}{10}

\bibitem{Bochkov1977}
Bochkov GN, Kuzovle YE.
\newblock General theory of thermal fluctuations in nonlinear systems.
\newblock Sov Phys JETP. 1977;45:125.

\bibitem{Jarzynski1997}
Jarzynski C.
\newblock Nonequilibrium Equality for Free Energy Differences.
\newblock Phys Rev Lett. 1997 Apr;78:2690--2693.
\newblock Available from:
  \url{https://link.aps.org/doi/10.1103/PhysRevLett.78.2690}.

\bibitem{Crooks1999}
Crooks GE.
\newblock Entropy production fluctuation theorem and the nonequilibrium work
  relation for free energy differences.
\newblock Phys Rev E. 1999 Sep;60:2721--2726.
\newblock Available from:
  \url{https://link.aps.org/doi/10.1103/PhysRevE.60.2721}.

\bibitem{Esposito2009}
Esposito M, Harbola U, Mukamel S.
\newblock Nonequilibrium fluctuations, fluctuation theorems, and counting
  statistics in quantum systems.
\newblock Rev Mod Phys. 2009 Dec;81:1665--1702.
\newblock Available from:
  \url{https://link.aps.org/doi/10.1103/RevModPhys.81.1665}.

\bibitem{Campisi2011}
Campisi M, H\"anggi P, Talkner P.
\newblock Colloquium: Quantum fluctuation relations: Foundations and
  applications.
\newblock Rev Mod Phys. 2011 Jul;83:771--791.
\newblock Available from:
  \url{https://link.aps.org/doi/10.1103/RevModPhys.83.771}.

\bibitem{Talkner2007}
Talkner P, Lutz E, H\"anggi P.
\newblock Fluctuation theorems: Work is not an observable.
\newblock Phys Rev E. 2007 May;75:050102.
\newblock Available from:
  \url{https://link.aps.org/doi/10.1103/PhysRevE.75.050102}.

\bibitem{Jarzynski2004}
Jarzynski C, W\'ojcik DK.
\newblock Classical and Quantum Fluctuation Theorems for Heat Exchange.
\newblock Phys Rev Lett. 2004 Jun;92:230602.
\newblock Available from:
  \url{https://link.aps.org/doi/10.1103/PhysRevLett.92.230602}.

\bibitem{deffner2011}
Deffner S, Lutz E.
\newblock Nonequilibrium Entropy Production for Open Quantum Systems.
\newblock Phys Rev Lett. 2011 Sep;107:140404.
\newblock Available from:
  \url{https://link.aps.org/doi/10.1103/PhysRevLett.107.140404}.

\bibitem{Manzano2018}
Manzano G, Horowitz JM, Parrondo JMR.
\newblock Quantum Fluctuation Theorems for Arbitrary Environments: Adiabatic
  and Nonadiabatic Entropy Production.
\newblock Phys Rev X. 2018 Aug;8:031037.
\newblock Available from:
  \url{https://link.aps.org/doi/10.1103/PhysRevX.8.031037}.

\bibitem{Ito2019}
Ito K, Talkner P, Venkatesh BP, Watanabe G.
\newblock Generalized energy measurements and quantum work compatible with
  fluctuation theorems.
\newblock Phys Rev A. 2019 Mar;99:032117.
\newblock Available from:
  \url{https://link.aps.org/doi/10.1103/PhysRevA.99.032117}.

\bibitem{Debarba_2019}
Debarba T, Manzano G, Guryanova Y, Huber M, Friis N.
\newblock Work estimation and work fluctuations in the presence of non-ideal
  measurements.
\newblock New Journal of Physics. 2019 nov;21(11):113002.
\newblock Available from: \url{https://doi.org/10.1088/1367-2630/ab4d9d}.

\bibitem{Micadei2020}
Micadei K, Landi GT, Lutz E.
\newblock Quantum Fluctuation Theorems beyond Two-Point Measurements.
\newblock Phys Rev Lett. 2020 Mar;124:090602.
\newblock Available from:
  \url{https://link.aps.org/doi/10.1103/PhysRevLett.124.090602}.

\bibitem{Yukava2000}
Yukawa S.
\newblock A Quantum Analogue of the Jarzynski Equality.
\newblock Journal of the Physical Society of Japan. 2000;69(8):2367--2370.
\newblock Available from: \url{https://doi.org/10.1143/JPSJ.69.2367}.

\bibitem{Allahverdyan2004work}
Allahverdyan AE, Nieuwenhuizen TM. The second law and fluctuations of work: The
  case against quantum fluctuation theorems. arXiv; 2004.
\newblock Available from: \url{https://arxiv.org/abs/cond-mat/0408697}.

\bibitem{Perarnau2017}
Perarnau-Llobet M, B\"aumer E, Hovhannisyan KV, Huber M, Acin A.
\newblock No-Go Theorem for the Characterization of Work Fluctuations in
  Coherent Quantum Systems.
\newblock Phys Rev Lett. 2017 Feb;118:070601.
\newblock Available from:
  \url{https://link.aps.org/doi/10.1103/PhysRevLett.118.070601}.

\bibitem{Baumer2018}
B{\"a}umer E, Lostaglio M, Perarnau-Llobet M, Sampaio R.
\newblock In: Binder F, Correa LA, Gogolin C, Anders J, Adesso G, editors.
  Fluctuating Work in Coherent Quantum Systems: Proposals and Limitations.
  Cham: Springer International Publishing; 2018. p. 275--300.
\newblock Available from: \url{https://doi.org/10.1007/978-3-319-99046-0_11}.

\bibitem{Allahverdyan2014}
Allahverdyan AE.
\newblock Nonequilibrium quantum fluctuations of work.
\newblock Phys Rev E. 2014 Sep;90:032137.
\newblock Available from:
  \url{https://link.aps.org/doi/10.1103/PhysRevE.90.032137}.

\bibitem{Solinas2015}
Solinas P, Gasparinetti S.
\newblock Full distribution of work done on a quantum system for arbitrary
  initial states.
\newblock Phys Rev E. 2015 Oct;92:042150.
\newblock Available from:
  \url{https://link.aps.org/doi/10.1103/PhysRevE.92.042150}.

\bibitem{Solinas2016}
Solinas P, Gasparinetti S.
\newblock Probing quantum interference effects in the work distribution.
\newblock Phys Rev A. 2016 Nov;94:052103.
\newblock Available from:
  \url{https://link.aps.org/doi/10.1103/PhysRevA.94.052103}.

\bibitem{Miller2017}
Miller HJD, Anders J.
\newblock Time-reversal symmetric work distributions for closed quantum
  dynamics in the histories framework.
\newblock New Journal of Physics. 2017 jun;19(6):062001.
\newblock Available from: \url{https://doi.org/10.1088/1367-2630/aa703f}.

\bibitem{Lostaglio2018}
Lostaglio M.
\newblock Quantum Fluctuation Theorems, Contextuality, and Work
  Quasiprobabilities.
\newblock Phys Rev Lett. 2018 Jan;120:040602.
\newblock Available from:
  \url{https://link.aps.org/doi/10.1103/PhysRevLett.120.040602}.

\bibitem{Levy2020}
Levy A, Lostaglio M.
\newblock Quasiprobability Distribution for Heat Fluctuations in the Quantum
  Regime.
\newblock PRX Quantum. 2020 Sep;1:010309.
\newblock Available from:
  \url{https://link.aps.org/doi/10.1103/PRXQuantum.1.010309}.

\bibitem{Korzekwa2016}
Korzekwa K, Lostaglio M, Oppenheim J, Jennings D.
\newblock The extraction of work from quantum coherence.
\newblock New Journal of Physics. 2016 feb;18(2):023045.
\newblock Available from:
  \url{https://doi.org/10.1088%2F1367-2630%2F18%2F2%2F023045}.

\bibitem{Lobejko2021}
{\L}obejko M.
\newblock The tight Second Law inequality for coherent quantum systems and
  finite-size heat baths.
\newblock Nature Communications. 2021 Feb;12(1):918.
\newblock Available from: \url{https://doi.org/10.1038/s41467-021-21140-4}.

\bibitem{Aberg2018}
\AA{}berg J.
\newblock Fully Quantum Fluctuation Theorems.
\newblock Phys Rev X. 2018 Feb;8:011019.
\newblock Available from:
  \url{https://link.aps.org/doi/10.1103/PhysRevX.8.011019}.

\bibitem{Alhambra2016}
Alhambra AM, Masanes L, Oppenheim J, Perry C.
\newblock Fluctuating Work: From Quantum Thermodynamical Identities to a Second
  Law Equality.
\newblock Phys Rev X. 2016 Oct;6:041017.
\newblock Available from:
  \url{https://link.aps.org/doi/10.1103/PhysRevX.6.041017}.

\bibitem{Brunner2012}
Brunner N, Linden N, Popescu S, Skrzypczyk P.
\newblock Virtual qubits, virtual temperatures, and the foundations of
  thermodynamics.
\newblock Phys Rev E. 2012 May;85:051117.
\newblock Available from:
  \url{https://link.aps.org/doi/10.1103/PhysRevE.85.051117}.

\bibitem{Skrzypczyk2014}
Skrzypczyk P, Short AJ, Popescu S.
\newblock Work extraction and thermodynamics for individual quantum systems.
\newblock Nature Communications. 2014;5(1):4185.
\newblock Available from: \url{https://doi.org/10.1038/ncomms5185}.

\bibitem{Bartosik2021}
Lipka-Bartosik P, Mazurek P, Horodecki M.
\newblock Second law of thermodynamics for batteries with vacuum state.
\newblock {Quantum}. 2021 Mar;5:408.
\newblock Available from: \url{https://doi.org/10.22331/q-2021-03-10-408}.

\bibitem{Lobejko2020}
{\L{}}obejko M, Mazurek P, Horodecki M.
\newblock Thermodynamics of {M}inimal {C}oupling {Q}uantum {H}eat {E}ngines.
\newblock {Quantum}. 2020 Dec;4:375.
\newblock Available from: \url{https://doi.org/10.22331/q-2020-12-23-375}.

\bibitem{Pusz1978}
Pusz W, Woronowicz SL.
\newblock Passive states and KMS states for general quantum systems.
\newblock Comm Math Phys. 1978;58(3):273--290.
\newblock Available from:
  \url{https://projecteuclid.org:443/euclid.cmp/1103901491}.

\bibitem{Allahverdyan2004}
Allahverdyan AE, Balian R, Nieuwenhuizen TM.
\newblock Maximal work extraction from finite quantum systems.
\newblock Europhysics Letters ({EPL}). 2004 aug;67(4):565--571.
\newblock Available from: \url{https://doi.org/10.1209%2Fepl%2Fi2004-10101-2}.

\bibitem{Perarnau2015}
Perarnau-Llobet M, Hovhannisyan KV, Huber M, Skrzypczyk P, Tura J, Ac\'{\i}n A.
\newblock Most energetic passive states.
\newblock Phys Rev E. 2015 Oct;92:042147.
\newblock Available from:
  \url{https://link.aps.org/doi/10.1103/PhysRevE.92.042147}.

\bibitem{Horodecki2013}
Horodecki M, Oppenheim J.
\newblock Fundamental limitations for quantum and nanoscale thermodynamics.
\newblock Nature Communications. 2013;4(1):2059.
\newblock Available from: \url{https://doi.org/10.1038/ncomms3059}.

\bibitem{Streltsov2017}
Streltsov A, Adesso G, Plenio MB.
\newblock Colloquium: Quantum coherence as a resource.
\newblock Rev Mod Phys. 2017 Oct;89:041003.
\newblock Available from:
  \url{https://link.aps.org/doi/10.1103/RevModPhys.89.041003}.

\bibitem{Hudson1974}
Hudson RL.
\newblock When is the wigner quasi-probability density non-negative?
\newblock Reports on Mathematical Physics. 1974;6(2):249--252.
\newblock Available from:
  \url{https://www.sciencedirect.com/science/article/pii/003448777490007X}.

\bibitem{Lostaglio2015}
Lostaglio M, Jennings D, Rudolph T.
\newblock Description of quantum coherence in thermodynamic processes requires
  constraints beyond free energy.
\newblock Nature Communications. 2015 Mar;6(1):6383.
\newblock Available from: \url{https://doi.org/10.1038/ncomms7383}.

\bibitem{Lostaglio2015_time}
Lostaglio M, Korzekwa K, Jennings D, Rudolph T.
\newblock Quantum Coherence, Time-Translation Symmetry, and Thermodynamics.
\newblock Phys Rev X. 2015 Apr;5:021001.
\newblock Available from:
  \url{https://link.aps.org/doi/10.1103/PhysRevX.5.021001}.

\bibitem{Francica2020}
Francica G, Binder FC, Guarnieri G, Mitchison MT, Goold J, Plastina F.
\newblock Quantum Coherence and Ergotropy.
\newblock Phys Rev Lett. 2020 Oct;125:180603.
\newblock Available from:
  \url{https://link.aps.org/doi/10.1103/PhysRevLett.125.180603}.

\bibitem{Ushakov1997}
Ushakov NG.
\newblock Lower and upper bounds for characteristic functions.
\newblock Journal of Mathematical Sciences. 1997 Apr;84(3):1179--1189.
\newblock Available from: \url{https://doi.org/10.1007/BF02398431}.

\bibitem{Rudnicki2016}
Rudnicki L, Tasca DS, Walborn SP.
\newblock Uncertainty relations for characteristic functions.
\newblock Phys Rev A. 2016 Feb;93:022109.
\newblock Available from:
  \url{https://link.aps.org/doi/10.1103/PhysRevA.93.022109}.

\end{thebibliography}

\newpage

\appendix

\section{Proof of Theorem \ref{work_variance_theorem}}
\subsection*{Part I ($\Delta E_W$ and $\Delta \sigma_W^2$)}

We want to show the following equalities:
\begin{equation}
\begin{split}
    \Delta E_W &= \langle \hat U^\dag \hat H_W \hat U\rangle_{\hat \rho} - \langle \hat H_W \rangle_{\hat \rho} = \langle \hat W_S \rangle_{\hat \sigma}, \\
    \Delta \sigma_W^2 &= \Var_{\hat \rho}[\hat U^\dag \hat  H_W \hat U] - \Var_{\hat \rho}[\hat H_W] = \Var_{\hat \sigma}[\hat W_S] + 2 \Cov_{\hat \sigma} [\hat H_S - \hat H_W, \hat H_S'],
\end{split}
\end{equation}
where
\begin{eqnarray}
    \hat \rho &= \hat \rho_S \otimes \hat \rho_W, \ \ \hat \sigma = \hat S \hat \rho \hat S^\dag, \ \ \hat S = e^{-i \hat H_S \otimes \hat \Delta_W }, \\  \hat U &= \hat S^\dag \hat V_S \hat S, \ \ \hat W_S = \hat H_S - \hat H_S', \ \ \hat H_S' = \hat V_S^\dag \hat H_S \hat V_S.
\end{eqnarray}
Notice that operators with subscribe $S$ or $W$ acts entirely on the system or the weight Hilbert space, respectively, whereas operators without subscribe acts on both.  

The proof is solely based on the following change of the basis:
\begin{eqnarray}
    \langle \hat A \rangle_{\hat \rho} = \Tr[\hat A \hat \rho] = \Tr[\hat S \hat A \hat S^\dag \hat S \hat \rho \hat S^\dag] = \Tr[\hat S \hat A \hat S^\dag \hat \sigma ] = \langle \hat S \hat A \hat S^\dag \rangle_{\hat \sigma} , \ \ \langle \hat A \rangle_{\hat \sigma} = \langle \hat S^\dag \hat A \hat S \rangle_{\hat \rho},
\end{eqnarray}
alongside with the commutation relations:
\begin{eqnarray}
 [\hat H_S, \hat S] = 0, \ \  [\hat H_W, \hat S] = \hat S \hat H_S.
\end{eqnarray}
The first commutator is obvious, whereas the second comes from the canonical commutation relation: $[\hat H_W, \hat \Delta_W] = i$. Using these we simply derive the following equalities:
\begin{eqnarray}
\langle \hat U^\dag \hat H_W \hat U \rangle_{\hat \rho} &=& \langle \hat H_W \rangle_{\hat \sigma} -\langle \hat H_S' \rangle_{\hat \sigma}, \label{UHW_appendix}\\
\langle \hat U^\dag \hat  H_W^2 \hat U \rangle_{\hat \rho} &=& \langle \hat H_W^2 \rangle_{\hat \sigma} + \langle \hat H_S'^2 \rangle_{\hat \sigma} - 2 \langle \hat H_W \hat H_S' \rangle_{\hat \sigma}, \label{UHW2_appendix} \\
\langle \hat H_W \rangle_{\hat \sigma} &=&  \langle \hat H_W \rangle_{\hat \rho} + \langle  \hat H_S  \rangle_{\hat \rho}, \label{HW_appendix} \\
\langle \hat H_W^2 \rangle_{\hat \sigma} &=&  \langle \hat H_W^2 \rangle_{\hat \rho} + \langle  \hat H_S^2  \rangle_{\hat \rho} - 2 \langle  \hat H_S  \rangle_{\hat \rho} \langle  \hat H_W  \rangle_{\hat \rho}, \label{HW2_appendix} \\
\langle \hat H_S \rangle_{\hat \sigma} &=& \langle \hat H_S \rangle_{\hat \rho},  \ \ \langle \hat H_S^2 \rangle_{\hat \sigma} = \langle \hat H_S^2 \rangle_{\hat \rho}, \ \  \Var_{\hat \rho}[\hat H_S] = \Var_{\hat \sigma}[\hat H_S], \label{HS_appendix} \\
\langle \hat H_S \hat H_W \rangle_{\hat \sigma} &=& \langle \hat H_S^2 \rangle_{\hat \rho} + \langle \hat H_S \rangle_{\hat \rho}\langle \hat H_W \rangle_{\hat \rho}, \label{HSHW_appendix}
\end{eqnarray}
and then we have
\begin{eqnarray}
    \Cov_{\hat \sigma}[\hat H_S, \hat H_W] &=& \langle \hat H_S \hat H_W \rangle_{\hat \sigma} - \langle \hat H_S \rangle_{\hat \sigma} \langle \hat H_W \rangle_{\hat \sigma} \nonumber \\ 
    &\stackrel{(\ref{HW_appendix}, \ref{HS_appendix}, \ref{HSHW_appendix})}{=}& \langle \hat H_S^2 \rangle_{\hat \rho} + \langle \hat H_S \rangle_{\hat \rho}\langle \hat H_W \rangle_{\hat \rho} - \langle \hat H_S \rangle_{\hat \rho} (\langle \hat H_W \rangle_{\hat \rho} + \langle \hat H_S \rangle_{\hat \rho}) = \Var_{\hat \sigma}[\hat H_S] \label{Cov_HS_HW_appendix} \\
    \Var_{\hat \rho}[\hat U^\dag \hat  H_W \hat U] &=& \langle \hat U^\dag \hat  H_W^2 \hat U \rangle_{\hat \rho} - \langle \hat U^\dag \hat  H_W \hat U \rangle_{\hat \rho} \langle \hat U^\dag \hat  H_W \hat U \rangle_{\hat \rho} \nonumber \\
    &\stackrel{(\ref{UHW_appendix}, \ref{UHW2_appendix})}{=}& \Var_{\hat \sigma} [\hat H_W] + \Var_{\hat \sigma} [\hat H_S'] - 2 \Cov_{\hat \sigma}[\hat H_S', \hat H_W] \label{varUHW_appendix} \\
    \Var_{\hat \rho}[\hat  H_W] &=& \langle \hat  H_W^2 \rangle_{\hat \rho} - \langle  \hat  H_W  \rangle_{\hat \rho} \langle  \hat  H_W \rangle_{\hat \rho} \nonumber \\
    &\stackrel{(\ref{HW_appendix}, \ref{HW2_appendix})}{=}& \Var_{\hat \sigma} [\hat H_W] + \Var_{\hat \sigma} [\hat H_S] - 2 \Cov_{\hat \sigma} [\hat H_S, \hat H_W] \label{varHW_appendix}
\end{eqnarray} 
From the definition of $\hat W_S$ we also have
\begin{equation} \label{WS_appendix}
    \Var_{\hat \sigma}[\hat W_S] = \Var_{\hat \sigma}[\hat H_S] + \Var_{\hat \sigma}[\hat H_S'] - 2 \Cov_{\hat \sigma}[\hat H_S, \hat H_S'].
\end{equation}
Finally, we are ready to apply above relations and proof the theorem:
\begin{equation}
\begin{split}
    \Delta E_W &= \langle \hat U^\dag \hat H_W \hat U \rangle_{\hat \rho} - \langle \hat H_W \rangle_{\hat \rho} \stackrel{(\ref{UHW_appendix}, \ref{HW_appendix})}{=} \langle  \hat H_S \rangle_{\hat \sigma} -  \langle \hat H_S' \rangle_{\hat \sigma} = \langle  \hat W_S \rangle_{\hat \sigma} \\
    \Delta \sigma_W^2 &= \Var_{\hat \rho}[\hat U^\dag \hat  H_W \hat U] - \Var_{\hat \rho}[\hat H_W] \\
    &\stackrel{(\ref{varUHW_appendix}, \ref{varHW_appendix})}{=} \Var_{\hat \sigma}[\hat H_S'] - \Var_{\hat \sigma}[\hat H_S] + 2 \Cov_{\hat \sigma}[\hat H_S-\hat H_S', \hat H_W] \\
    &\stackrel{\eqref{WS_appendix}}{=} \Var_{\hat \sigma}[\hat W_S] - 2 \Var_{\hat \sigma}[\hat H_S] + 2 \Cov_{\hat \sigma}[\hat H_S, \hat H_S'] +  2 \Cov_{\hat \sigma}[\hat H_S-\hat H_S', \hat H_W] \\
    &\stackrel{\eqref{Cov_HS_HW_appendix}}{=} \Var_{\hat \sigma}[\hat W_S] + 2 \Cov_{\hat \sigma}[\hat H_S -\hat H_W, \hat H_S'] 
\end{split}
\end{equation}

\subsection*{Part II (equivalent forms of the $F$-term)}
Now, we want to show an equivalence of different forms of the $F$-term, namely
\begin{align} \label{F_appendix}
    F &= \Cov_{\hat \sigma} [\hat H_S - \hat H_W, \hat H_S'] \nonumber \\
    &= \int dt \Tr[\hat W_S e^{-i \hat H_S t} \hat \rho_S e^{i \hat H_S t}] \int dE \ (E -  \langle \hat H_W \rangle_{\hat \rho}) \ W(E,t) = -i \Tr[\hat W_S \hat \xi'_S (0)]
\end{align}
where $\hat \xi(s)$ is given by Eq. \eqref{xi_state}.
Here, we introduce the Wigner function:
\begin{equation}
    W(E,t) = \frac{1}{2\pi} \int d\omega \ e^{i \omega t} \bra{E - \frac{\omega}{2}  } \hat \rho_W \ket{E + \frac{\omega}{2}}.
\end{equation}
and, in particular, we will use the following identities:
\begin{align}
    \rho(E-\frac{\omega_{ij}}{2} , E+\frac{\omega_{ij}}{2}) &= \int dt \ e^{-i\omega_{ij} t} \ W(E,t), \\
    \int dE  \ f(E) \ \rho(E-\varepsilon_i, E-\varepsilon_j) &=  \int dE  f[E + \frac{1}{2} (\epsilon_i + \epsilon_j)] \rho(E-\frac{\omega_{ij}}{2} , E+\frac{\omega_{ij}}{2}) \nonumber \\
    &= \int dE \ f[E + \frac{1}{2} (\epsilon_i + \epsilon_j)] \int dt \ e^{-i\omega_{ij} t} \ W(E,t) \label{Wigner_iden},
\end{align}
where $\omega_{ij} = \epsilon_i - \epsilon_j$ and $\rho(E,E') = \bra{E} \hat \rho_W \ket{E'}$. We also introduce the matrix element $\rho_{ij} = \bra{\epsilon_i} \hat \rho_S \ket{\varepsilon_j}$. Then, using the relation $\hat S \ket{\epsilon_i, E} = \ket{\epsilon_i, E+\epsilon_i}$, the control state is equal to:
\begin{equation} \label{control_state_appendix}
\begin{split}
    \hat \sigma = \hat S \hat \rho_S \otimes \hat \rho_W \hat S^\dag &= \int dE dE' \sum_{i,j} \rho_{ij} \dyad{\epsilon_i}{\epsilon_j} \otimes \rho(E, E') \dyad{E+\epsilon_i}{E'-\epsilon_j} \\
    &= \int dE dE' \sum_{i,j} \rho_{ij} \dyad{\epsilon_i}{\epsilon_j} \otimes \rho(E-\epsilon_i, E'-\epsilon_j) \dyad{E}{E'}.
\end{split}
\end{equation}
According to this representation, we have:
\begin{equation} \label{VS_HS_HW_appendix}
    \begin{split}
        \langle \hat H_S' \hat H_W \rangle_{\hat \sigma} &= \Tr[ \hat H_S' \hat H_W \sum_{i,j} \int dE dE' \rho_{ij} \rho(E-\epsilon_i, E' - \epsilon_j) \dyad{\epsilon_i, E}{ \epsilon_j, E'}] \\
        &= \sum_{i,j} \rho_{ij} \int dE dE' \ E \ \rho(E-\epsilon_i, E' - \epsilon_j) \Tr[ \hat H_S'  \dyad{\epsilon_i, E}{ \epsilon_j, E'}] \\
        &= \sum_{i,j} \rho_{ij} \int dE \ E \ \rho(E-\epsilon_i, E - \epsilon_j) \Tr[ \hat H_S'  \dyad{\epsilon_i}{ \epsilon_j}] \\
        &\stackrel{(\ref{Wigner_iden})}{=} \sum_{i,j} \rho_{ij} \int dE \ [E+\frac{1}{2}(\epsilon_i + \epsilon_j)] \int dt e^{-i \omega_{ij} t} W(E,t) \Tr[ \hat H_S'  \dyad{\epsilon_i}{ \epsilon_j}] \\
        &= \sum_{i,j} \rho_{ij} \int dE \ [E+\frac{1}{2}(\epsilon_i + \epsilon_j)] \int dt W(E,t) \Tr[ \hat H_S'  e^{-i \hat H_S t} \dyad{\epsilon_i}{ \epsilon_j} e^{i \hat H_S t} ] \\
        &=  \int dE \int dt \ W(E,t) \big[E \Tr[ \hat H_S'  e^{-i \hat H_S t} \hat \rho_S e^{i \hat H_S t} ] \\
        &+ \frac{1}{2} \Tr[ \hat H_S' \{ e^{-i \hat H_S t} \hat \rho_S e^{i \hat H_S t}, \hat H_S \} ] \big] \\
        &= \int dE \int dt \ W(E,t) E \Tr[ \hat H_S'  e^{-i \hat H_S t} \hat \rho_S e^{i \hat H_S t} ] + \Tr[ \hat H_S' \{ \hat \sigma_S, \hat H_S \} ] \\
        &= \int dE \int dt \ W(E,t) E \Tr[ \hat H_S'  e^{-i \hat H_S t} \hat \rho_S e^{i \hat H_S t} ] + \frac{1}{2} \Tr[\{ \hat H_S', \hat H_S \} \hat \sigma_S ]
    \end{split}
\end{equation}
Finally, we are ready to show the first equality of \eqref{F_appendix}:
\begin{equation} \label{F_derived_appendix}
    \begin{split}
        F &= \Cov_{\hat \sigma} [\hat H_S', \hat H_S] - \Cov_{\hat \sigma} [\hat H_S', \hat H_W] \\
        &= \frac{1}{2} \langle \{ \hat H_S',  \hat H_S \} \rangle_{\hat \sigma} -  \langle \hat H_S' \rangle_{\hat \sigma}  \langle \hat H_S \rangle_{\hat \sigma} - \langle \hat H_S' \hat H_W \rangle_{\hat \sigma} + \langle \hat H_W \rangle_{\hat \sigma} \langle \hat H_S' \rangle_{\hat \sigma} \\
        &\stackrel{\eqref{HW_appendix}}{=} \frac{1}{2} \langle \{ \hat H_S',  \hat H_S \} \rangle_{\hat \sigma} - \langle \hat H_S' \hat H_W \rangle_{\hat \sigma} + \langle \hat H_W \rangle_{\hat \rho} \langle \hat H_S' \rangle_{\hat \sigma} \\
        &\stackrel{\eqref{VS_HS_HW_appendix}}{=} -\int dE \int dt \ W(E,t) E \Tr[ \hat H_S'  e^{-i \hat H_S t} \hat \rho_S e^{i \hat H_S t} ] + \langle \hat H_W \rangle_{\hat \rho} \langle \hat H_S' \rangle_{\hat \sigma} \\
        &= \int dE \int dt \ W(E,t) \left(\langle \hat H_W \rangle_{\hat \rho} - E \right) \Tr[ \hat H_S'  e^{-i \hat H_S t} \hat \rho_S e^{i \hat H_S t} ] \\
        &=  \int dE \int dt \ W(E,t) \left(E - \langle \hat H_W \rangle_{\hat \rho} \right) \Tr[ \hat W_S  e^{-i \hat H_S t} \hat \rho_S e^{i \hat H_S t} ].
    \end{split}
\end{equation}
To derive the second equality and finish the proof, we calculate a derivative:
\begin{equation}
\begin{split}
\hat \xi'_S (s) &= \frac{d}{d s} \left( \frac{\int dt \int dE \ e^{i E s} \ W(E,t) \ e^{-i \hat H_S t} \hat \rho_S e^{i \hat H_S t}}{\Tr[ e^{i \hat H_W s} \hat \rho_W]} \right)  \\
&= \frac{\int dt \int dE \ i E e^{i E s} \ W(E,t) \ e^{-i \hat H_S t} \hat \rho_S e^{i \hat H_S t} \Tr[ e^{i \hat H_W s} \hat \rho_W]}{\Tr[ e^{i \hat H_W s} \hat \rho_W]^2} \\
&- \frac{\int dt \int dE \ e^{i E s} \ W(E,t) \ e^{-i \hat H_S t} \hat \rho_S e^{i \hat H_S t} \Tr[i \hat H_W e^{i \hat H_W s} \hat \rho_W] }{\Tr[ e^{i \hat H_W s} \hat \rho_W]^2}.
\end{split}
\end{equation}
Now, putting $s=0$, we have
\begin{equation}
\begin{split}
\hat \xi'_S (0) &= \int dt \int dE \ i E \ W(E,t) \ e^{-i \hat H_S t} \hat \rho_S e^{i \hat H_S t}   \\ &- \int dt \int dE \ W(E,t) \ e^{-i \hat H_S t} \hat \rho_S e^{i \hat H_S t} \Tr[i \hat H_W \hat \rho_W] \\
&= i \int dt \int dE \ W(E,t) \ e^{-i \hat H_S t} \hat \rho_S e^{i \hat H_S t} \ \left( E  - \langle \hat H_W \rangle_{\hat \rho} \right),
\end{split}
\end{equation}
and it is seen that the expression $-i \Tr[\hat W_S \hat \xi'_S (0)]$ corresponds to Eq. \eqref{F_derived_appendix}.

\section{Proof of Proposition 2}

\subsection*{Part I ($\Delta \sigma_W^2$ for a qubit in terms of $g_{ij}$ coefficients)}

We write the control-marginal state in a diagonal form:
\begin{equation}
    \hat \sigma_S = p \dyad{\psi_0} + (1-p) \dyad{\psi_1},
\end{equation}
where without loss of generality we assume that $p\le \frac{1}{2}$. Next, the work operator can be express as:
\begin{equation}
    \hat W_S = - \frac{w}{1-2p} \dyad{\psi_0} +  \frac{w}{1-2p} \dyad{\psi_1} + \beta \dyad{\psi_0}{\psi_1} + \beta^* \dyad{\psi_1}{\psi_0},
\end{equation}
where we use the fact that it is a Hermitian and that $\Tr[\hat W_S] = 0$. Here, $w$ is real and $\beta$ complex. The square of the operator is equal to:
\begin{eqnarray}
    \hat W_S^2 = \left(\frac{w^2}{(1-2p)^2} + |\beta|^2 \right) \left(\dyad{\psi_0} +  \dyad{\psi_1} \right) - \frac{w \beta}{1-2p} \dyad{\psi_0}{\psi_1} + \frac{w \beta^*}{1-2p} \dyad{\psi_1}{\psi_0}.
\end{eqnarray}
Consequently, we have
\begin{eqnarray}
    \langle \hat W_S \rangle_{\hat \sigma_S} &=& - \frac{p w}{1-2p} + \frac{(1-p) w}{1-2p}  = w, \\ 
     \langle \hat W_S^2 \rangle_{\hat \sigma_S} &=& \frac{w^2}{(1-2p)^2} + |\beta|^2,  \\
    \Var_{\hat \sigma_S}[\hat W_S] &=& \frac{4 p (1-p)}{(1-2p)^2} w^2 + |\beta|^2,
\end{eqnarray}
where
\begin{equation} \label{beta_appendix}
\begin{split}
    |\beta|^2 &= |\bra{\psi_0} \hat W_S \ket{\psi_1}|^2 = |\bra{\psi_0} (\hat H_S - \hat H_S') \ket{\psi_1}|^2\\
    &= |\bra{\psi_0} \hat H_S \ket{\psi_1}|^2 +  |\bra{\psi_0} \hat H_S' \ket{\psi_1}|^2 \\ 
    &- \bra{\psi_0} \hat H_S \ket{\psi_1} \bra{\psi_1} \hat H_S' \ket{\psi_0} - \bra{\psi_0} \hat H_S' \ket{\psi_1} \bra{\psi_1} \hat H_S \ket{\psi_0}.
\end{split}
\end{equation}

Next, we calculate the $F$-term. For simplicity we assume that $\langle \hat H_W \rangle_{\hat \rho} = 0$, such that we have:
\begin{equation} \label{F_term_appendix}
\begin{split}
    F &= \Cov_{\hat \sigma} [\hat H_S', \hat H_S] - \Cov_{\hat \sigma} [\hat H_S', \hat  H_W] \\
    &= \frac{1}{2} \langle \{ \hat H_S',  \hat H_S \} \rangle_{\hat \sigma} -  \langle \hat H_S' \rangle_{\hat \sigma}  \langle \hat H_S \rangle_{\hat \sigma} - \langle \hat H_S' \hat H_W \rangle_{\hat \sigma} + \langle \hat H_W \rangle_{\hat \sigma} \langle \hat H_S' \rangle_{\hat \sigma} \\
    &= \frac{1}{2} \langle \{ \hat H_S',  \hat H_S \} \rangle_{\hat \sigma} -  \langle \hat H_S' \rangle_{\hat \sigma}  \langle \hat H_S \rangle_{\hat \sigma} - \langle \hat H_S' \hat H_W \rangle_{\hat \sigma} + (\langle \hat H_W \rangle_{\hat \rho} + \langle \hat H_S \rangle_{\hat \sigma}) \langle \hat H_S' \rangle_{\hat \sigma} \\
    &= \frac{1}{2} \langle \{ \hat H_S',  \hat H_S \} \rangle_{\hat \sigma} - \langle \hat H_S' \hat H_W \rangle_{\hat \sigma},
\end{split}
\end{equation}
and consequently
\begin{eqnarray} \label{delta_sigma_appendix}
        \Delta \sigma_W^2 = \Var_{\hat \sigma_S}[\hat W_S] + 2F = \Var_{\hat \sigma_S}[\hat W_S] + \langle \{ \hat H_S',  \hat H_S \} \rangle_{\hat \sigma} - 2\langle \hat H_S' \hat H_W \rangle_{\hat \sigma}.
\end{eqnarray}

We assume that $\hat H_S = \omega \dyad{\epsilon_1}_S$ (with some $\omega >0$), and we put
\begin{eqnarray}
    \bra{\psi_0} \hat H_S \ket{\psi_0} &= \varepsilon_0, \ \bra{\psi_1} \hat H_S \ket{\psi_1} = \varepsilon_1.
\end{eqnarray}
Then, we derive a bunch of the following equalities:
\begin{eqnarray}
        \bra{\psi_0} \hat H_S' \ket{\psi_0} &=& \bra{\psi_0} (\hat H_S - \hat W_S) \ket{\psi_0} =\varepsilon_0 + \frac{w}{1-2p},\\
        \bra{\psi_1} \hat H_S' \ket{\psi_1} &=& \bra{\psi_1} (\hat H_S - \hat W_S) \ket{\psi_1} =\varepsilon_1 - \frac{w}{1-2p}, \\
        |\bra{\psi_0} \hat H_S \ket{\psi_1}|^2 &=& \bra{\psi_0} \hat H_S \ket{\psi_1} \bra{\psi_1} \hat H_S \ket{\psi_0} = \omega^2 \bra{\psi_0} \dyad{\epsilon_1} \ket{\psi_1} \bra{\psi_1} \dyad{\epsilon_1} \ket{\psi_0} \nonumber \\
        &=& \bra{\psi_0} \hat H_S \ket{\psi_0} \bra{\psi_1} \hat H_S \ket{\psi_1} = \varepsilon_0 \varepsilon_1, \\
        |\bra{\psi_0} \hat H_S' \ket{\psi_1}|^2 &=&  \bra{\psi_0} \hat H_S' \ket{\psi_0} \bra{\psi_1} \hat H_S' \ket{\psi_1} = (\varepsilon_0 + \frac{w}{1-2p})(\varepsilon_1 - \frac{w}{1-2p}), \\
        \bra{\psi_0} \hat H_S \hat H_S' \ket{\psi_0} &=& \varepsilon_0 (\varepsilon_0 + \frac{w}{1-2p}) + \bra{\psi_0} \hat H_S \dyad{\psi_1} \hat H_S' \ket{\psi_0}, \\
        \bra{\psi_1} \hat H_S \hat H_S' \ket{\psi_1} &=& \varepsilon_1 (\varepsilon_1 - \frac{w}{1-2p}) + \bra{\psi_1} \hat H_S \dyad{\psi_0} \hat H_S' \ket{\psi_1}.
\end{eqnarray}
Using the above expression we rewritten Eq. \eqref{beta_appendix} in the form:
\begin{multline}
   \bra{\psi_0} \hat H_S \ket{\psi_1} \bra{\psi_1} \hat H_S' \ket{\psi_0} + \bra{\psi_0} \hat H_S' \ket{\psi_1} \bra{\psi_1} \hat H_S \ket{\psi_0} = \\ = \varepsilon_0 \varepsilon_1 + (\varepsilon_0 + \frac{w}{1-2p})(\varepsilon_1 - \frac{w}{1-2p}) - |\beta|^2,    
\end{multline}
and then, we calculate the first term of Eq. \eqref{F_term_appendix}:
\begin{equation}
\begin{split}
     \langle \{ \hat H_S', \hat H_S \} \rangle_{\hat \sigma_S} &= p \bra{\psi_0} \hat H_S \hat H_S' + \hat V_S^\dag \hat H \hat V_S \hat H_S \ket{\psi_0} + (1-p) \bra{\psi_1} \hat H_S \hat H_S' + \hat H_S' \hat H_S \ket{\psi_1} \\
     & = 2p \varepsilon_0 \left(\varepsilon_0 + \frac{w}{1-2p} \right) + 2(1-p) \varepsilon_1 \left(\varepsilon_1 - \frac{w}{1-2p} \right) \\
     &+ \bra{\psi_0} \hat H_S \ket{\psi_1} \bra{\psi_1} \hat H_S' \ket{\psi_0} + \bra{\psi_0} \hat H_S' \ket{\psi_1} \bra{\psi_1} \hat H  \ket{\psi_0}  \\
     &=  2p \varepsilon_0 \left(\varepsilon_0 + \frac{w}{1-2p} \right) + 2(1-p) \varepsilon_1 \left(\varepsilon_1 - \frac{w}{1-2p} \right) \\
     &+  \varepsilon_0\varepsilon_1 +  (\varepsilon_0 + \frac{w}{1-2p})(\varepsilon_1 - \frac{w}{1-2p})  - |\beta|^2.
\end{split}
\end{equation}
We are ready to calculate first two terms of Eq. \eqref{delta_sigma_appendix}, i.e.: 
\begin{equation}
\begin{split}
    &\Var_{\hat \sigma_S}[\hat W] + \langle \{ \hat H_S', \hat H_S \} \rangle_{\hat \sigma_S} = \\
    &= \frac{4 p (1-p)}{(1-2p)^2} w^2 + |\beta|^2 + 2 p \varepsilon_0 \left(\varepsilon_0 + \frac{w}{1-2p} \right) + 2 (1-p) \varepsilon_1 \left(\varepsilon_1 - \frac{w}{1-2p} \right) + \varepsilon_0\varepsilon_1  \\
    &+  \left(\varepsilon_0 + \frac{w}{1-2p} \right) \left(\varepsilon_1 - \frac{w}{1-2p} \right)  - |\beta|^2  \\
    &= \frac{4 p (1-p)}{(1-2p)^2} w^2 + 2 p \varepsilon_0 \left(\varepsilon_0 + \frac{w}{1-2p} \right) + 2 (1-p) \varepsilon_1 \left(\varepsilon_1 - \frac{w}{1-2p} \right) + \varepsilon_0\varepsilon_1 \\
    &+  \left(\varepsilon_0 + \frac{w}{1-2p} \right) \left(\varepsilon_1 - \frac{w}{1-2p} \right)  \\
    &= \frac{4 p - 4p^2 - 1}{(1-2p)^2} w^2 + 2 p \varepsilon_0 \left(\varepsilon_0 + \frac{w}{1-2p} \right) + 2 (1-p) \varepsilon_1 \left(\varepsilon_1 - \frac{w}{1-2p} \right) \\
    &+ 2\varepsilon_0\varepsilon_1 + \frac{w}{1-2p}(\varepsilon_1 - \varepsilon_0) \\
    &= - w^2 + 2 p_0 \varepsilon_0 \left(\varepsilon_0 + \frac{w}{p_1-p_0} \right) + 2 p_1 \varepsilon_1 \left(\varepsilon_1 - \frac{w}{p_1 - p_0} \right) + 2\varepsilon_0\varepsilon_1 + \frac{w}{p_1-p_0}(\varepsilon_1 - \varepsilon_0) \\
    &= - w^2 + \frac{1}{p_1-p_0}\left(\varepsilon_0 (2 p_0 - 1) - \varepsilon_1 (2 p_1 -1)\right) w + 2 (p_0 \varepsilon_0^2 + p_1 \varepsilon_1^2 + \varepsilon_0\varepsilon_1),
\end{split}
\end{equation}
where we put $p_0 = p$ and $p_1 = 1-p$ to express the result in a symmetric form. In order to calculate the final term $\langle \hat H_S' \hat H_W \rangle_{\hat \sigma}$ we define the matrix element:
\begin{eqnarray}
        \varrho_{ij}(E,E) = \bra{\psi_i, E} \hat \sigma \ket{\psi_j, E}
\end{eqnarray}
and the following coefficient:
\begin{equation}
    g_{ij} = \int dE  \ E \ \varrho_{ij}(E,E),
\end{equation}
such that 
\begin{equation}
\begin{split}
    \langle \hat H_S' \hat H_W \rangle_{\hat \sigma} &= \sum_{i,j} \int dx  \ x \ \rho_{ij}(x,x) \bra{\psi_j} \hat H' \ket{\psi_i} \\
    & = g_{00} (\varepsilon_0 + \frac{w}{1-2p}) + g_{11} (\varepsilon_1 - \frac{w}{1-2p}) + g_{01} \bra{\psi_1} \hat H' \ket{\psi_0} + g_{10} \bra{\psi_0} \hat H' \ket{\psi_1}.
\end{split}
\end{equation}
Let us put $g_{01} \bra{\psi_1} \hat H' \ket{\psi_0} = |g_{01} \bra{\psi_1} \hat H' \ket{\psi_0}| e^{i \phi}$ with some phase $\phi$. Hence, the sum of last two terms is equal to:
\begin{equation}
\begin{split}
         g_{01} \bra{\psi_1} \hat H' \ket{\psi_0} + g_{10} \bra{\psi_0} \hat H' \ket{\psi_1} &= 2 \text{Re}(g_{01} |\bra{\psi_1} \hat H' \ket{\psi_0}) = 2 |g_{01}| |\bra{\psi_1} \hat H' \ket{\psi_0}| \cos{\phi} \\
         &= 2 |g_{01}| \sqrt{(\varepsilon_0 + \frac{w}{1-2p})(\varepsilon_1 - \frac{w}{1-2p})} \cos{\phi},
\end{split}
\end{equation}
where 
\begin{eqnarray} \label{phi_phase_appendix}
\cos \phi = \frac{\text{Re}[g_{01} \bra{\psi_1} \hat H' \ket{\psi_0}]}{|g_{01} \bra{\psi_1} \hat H' \ket{\psi_0}|}.
\end{eqnarray}
Finally, we get
\begin{equation}
\begin{split}
    \langle \hat H_S' \hat H_W \rangle_{\hat \sigma} &= \frac{1}{1-2p} (g_{00} - g_{11}) w + g_{00} \varepsilon_0 + g_{11} \varepsilon_1 \\
    &+ 2 |g_{01}| \sqrt{(\varepsilon_0 + \frac{w}{1-2p})(\varepsilon_1 - \frac{w}{1-2p})} \cos{\phi}.
\end{split}
\end{equation}
and
\begin{equation} \label{variance_gain_appendix}
\begin{split}
    \Delta \sigma_W^2 &= - w^2 + \frac{1}{1-2p}\left[\varepsilon_0 (2 p_0 - 1) - 2g_{00} - \varepsilon_1 (2 p_1 -1) + 2g_{11} \right] w \\ &
    + 2 [\varepsilon_0 (p_0 \varepsilon_0 - g_{00}) + \varepsilon_1 (p_1 \varepsilon_1 - g_{11}) + \varepsilon_0\varepsilon_1 ] \\
    &+ 4 |g_{01}| \sqrt{(\varepsilon_0 + \frac{w}{1-2p})(\varepsilon_1 - \frac{w}{1-2p})} \cos{\phi}.
\end{split}
\end{equation}

\subsection*{Part II ($\hat \sigma_S$ and $\hat \rho_S$ relations)}
To complete the proof we want to express the coefficients $g_{ij}$ in terms of the parameters $\varepsilon_0, \varepsilon_1, p$ and integrals:
\begin{eqnarray}
        \eta &=& \frac{1}{\omega^2} \int dt \int dE \ E \ W(E,t) \ (e^{i \omega t} \frac{1}{\gamma} + e^{-i \omega t} \frac{1}{\gamma^*}), \\
        \xi  &=& \frac{1}{\omega^2} \int dt \int dE \ E \ W(E,t) \ (e^{i \omega t} \frac{\varepsilon_1}{\gamma } - e^{-i \omega t} \frac{\varepsilon_0}{\gamma^*} ), \\
        \gamma &=& \int dt \int dE \ W(E,t) \ e^{i \omega t}. \label{gamma_appendix}
\end{eqnarray}

Since we want to express the final result in terms of the eigenvectors of the control-marginal state (i.e., $\ket{\psi_0}, \ket{\psi_1}$), first, we need to derive its relation to the initial state $\hat \rho_S$. Let us represent the initial density matrix $\hat \rho_S$ in the energetic basis $\ket{\epsilon_0} \equiv \ket 0, \ket{\epsilon_1} \equiv \ket 1 $ by real value $p_0$ (diagonal) and complex parameter $\alpha$ (off-diagonal). Then, we consider a map:
\begin{equation}
\hat \rho_S = 
    \begin{pmatrix}
    p_0 & \alpha \\
    \alpha^*  & 1-p_0
    \end{pmatrix}
    \to
    \begin{pmatrix}
    p_0 & \gamma \alpha \\
    \gamma^* \alpha^*  & 1-p_0
    \end{pmatrix} = \hat \sigma_S
\end{equation}
where $\gamma$ is given by Eq. \eqref{gamma_appendix}. By representing the vectors $\ket{\psi_0}, \ket{\psi_1}$ in the energetic basis and equations:
\begin{eqnarray} \label{eigenvalue_problem_appendix}
        \hat \sigma_S \ket{\psi_0} = p, \ \ \hat \sigma_S \ket{\psi_1} = 1-p. 
\end{eqnarray}
we are able to find relations between parameters $p, \varepsilon_0, \varepsilon_1, p_0$ and $\alpha$. 

First, from a definition of $\varepsilon_i$, we have:
\begin{equation}
    \varepsilon_i = \bra{\psi_i} \hat H_S \ket{\psi_i} = \omega |\braket{\psi_i}{1}|^2 
\end{equation}
and since 
\begin{equation}
    \varepsilon_0 + \varepsilon_1 = \Tr[\hat H_S (\dyad{\psi_0} + \dyad{\psi_1})] = \Tr[\hat H_S] = \omega,
\end{equation}
we can write the following:
\begin{equation}
  |\braket{\psi_i}{1}| = \sqrt{\frac{\varepsilon_i}{\omega}}, \ \ |\braket{\psi_i}{0}| =\sqrt{1-\frac{\varepsilon_i}{\omega}}.
\end{equation}
Then, by introducing the relative phases $\phi_0$ and $\phi_1$ we have
\begin{eqnarray}
    \ket{\psi_0} &=& \braket{0}{\psi_0} \ket{0} + \braket{1}{\psi_0} \ket{1} = \sqrt{\frac{\varepsilon_1}{\omega}} \ket{0} + e^{i \phi_0} \sqrt{\frac{\varepsilon_0}{\omega}} \ket{1}, \\
    \ket{\psi_1} &=& \braket{0}{\psi_1} \ket{0} + \braket{1}{\psi_1} \ket{1} =  \sqrt{\frac{\varepsilon_0}{\omega}} \ket{0} + e^{i \phi_1} \sqrt{\frac{\varepsilon_1}{\omega}} \ket{1}.
\end{eqnarray}
In accordance, the first equation of \eqref{eigenvalue_problem_appendix}, in the matrix form, is given by:
\begin{equation}
    \begin{pmatrix}
    p_0 & \gamma \alpha \\
    \gamma^* \alpha^*  & 1-p_0
    \end{pmatrix}
    \begin{pmatrix}
     \sqrt{\varepsilon_1}  \\
     e^{i \phi_0} \sqrt{\varepsilon_0}
    \end{pmatrix}
    = p
    \begin{pmatrix}
      \sqrt{\varepsilon_1}  \\
    e^{i \phi_0} \sqrt{\varepsilon_0}
    \end{pmatrix}.
\end{equation}
This results in condition:
\begin{eqnarray} 
    e^{i \phi_0} = \frac{1}{\gamma \alpha} \sqrt{\frac{\varepsilon_1}{\varepsilon_0}}(p-p_0).
\end{eqnarray}
Furthermore, the energy $\Tr[\hat H_S \hat \sigma_S]$, expressed in two different basis, leads us to the equality: $\omega (1-p_0) = p \varepsilon_0 + (1-p) \varepsilon_1$, such that
\begin{eqnarray}
    p_0 = 1-p - (1-2p) \frac{\varepsilon_1}{\omega}
\end{eqnarray}
and finally we have
\begin{eqnarray} \label{phase_appendix}
    e^{i \phi_0} = - \frac{\sqrt{\varepsilon_0 \varepsilon_1}(1-2p)}{\omega \gamma \alpha}.
\end{eqnarray}
In the same way, using the second equation \eqref{eigenvalue_problem_appendix}, we obtain $e^{i \phi_1} = - e^{i \phi_0}$. Finally, we arrive with formulas:
\begin{equation} \label{overlaps_appendix}
    \begin{split}
        &\braket{0}{\psi_0} = \sqrt{\frac{\varepsilon_1}{\omega}}, \ \ \braket{1}{\psi_0} = - \frac{(1-2p) \sqrt{\varepsilon_0 \varepsilon_1}}{\omega \gamma \alpha} \sqrt{\frac{\varepsilon_0}{\omega}}, \\
        &\braket{0}{\psi_1} = \sqrt{\frac{\varepsilon_0}{\omega}}, \ \ \braket{1}{\psi_1} = \frac{(1-2p) \sqrt{\varepsilon_0\varepsilon_1}}{\omega \gamma \alpha } \sqrt{\frac{\varepsilon_1}{\omega}},
    \end{split}
\end{equation}
and since $|e^{i \phi_{0,1}}| = 1$, we have also
\begin{equation} \label{absolute_value_appendix}
    \left|\frac{(1-2p) \sqrt{\varepsilon_0 \varepsilon_1}}{\omega \gamma \alpha} \right| = 1.
\end{equation}

\subsection*{Part III (expressions for $g_{ij}$ coefficients)}
Now, we are ready to finish the proof. First, we put the expression \eqref{control_state_appendix} to the definition of the element $\varrho_{ij}(E,E)$: 
\begin{equation}
\begin{split}
    \varrho_{ij}(E,E) &= \bra{\psi_i, E} \hat \sigma \ket{\psi_j, E} =  \sum_{k,l} \rho_{kl} \rho(E-\epsilon_k, E-\epsilon_l) \braket{\psi_i}{\epsilon_k} \braket{\epsilon_l}{\psi_j} 
\end{split}
\end{equation}
which, as a consequence, brings us the formula:
\begin{equation} \label{g_ij_appendix}
\begin{split}
    g_{ij} &= \int dE  \ E \ \varrho_{ij}(E,E) =  \sum_{k,l} \rho_{kl}  \braket{\psi_i}{\epsilon_k} \braket{\epsilon_l}{\psi_j} \ \int dE  \ E \ \rho(E-\epsilon_k, E-\epsilon_l) \\
    &= \sum_{k,l} \rho_{kl}  \braket{\psi_i}{\epsilon_k} \braket{\epsilon_l}{\psi_j} \ \int dE \ [E + \frac{1}{2} (\epsilon_k + \epsilon_l)] \int dt \ e^{-i\omega_{kl} t} \ W(E,t) \\
    &= \int dE \int dt \ E \ W(E,t) \bra{\psi_i} e^{-i \hat H_S t} \hat \rho_S e^{i \hat H_S t} \ket{\psi_j} \\
    &+ \frac{1}{2} \int dE \int dt \ W(E,t) \bra{\psi_i} \{\hat H_S, e^{-i \hat H_S t} \hat \rho_S e^{i \hat H_S t}  \} \ket{\psi_j} \\
    &= \int dE \int dt \ E \ W(E,t) \bra{\psi_i} e^{-i \hat H_S t} \hat \rho_S e^{i \hat H_S t} \ket{\psi_j} + \frac{1}{2}  \bra{\psi_i} \{\hat H_S, \hat \sigma_S \} \ket{\psi_j}.
\end{split}
\end{equation}
To simplify notation, we put 
\begin{equation}
\begin{split}
        g_{ij} &= k_{ij} + h_{ij}, \\
        k_{ij} &= \frac{1}{2}  \bra{\psi_i} \{\hat H_S, \hat \sigma_S \} \ket{\psi_j}, \\
        h_{ij} &= \int dE \int dt \ E \ W(E,t) \bra{\psi_i} e^{-i \hat H_S t} \hat \rho_S e^{i \hat H_S t} \ket{\psi_j}.
\end{split}
\end{equation}
Putting the expression $\hat \sigma_S = p_0 \dyad{\psi_0} + p_1 \dyad{\psi_1}$ (where $p_0 = p$ and $p_1 = 1-p$), we have
\begin{equation}
\begin{split}
        k_{ij} &= \frac{1}{2} \bra{\psi_i} \left(\sum_{k=0,1} p_k \dyad{\psi_k} \hat H_S + p_k \hat H_S \dyad{\psi_k} \right) \ket{\psi_j} \\
        &= \frac{1}{2} \sum_{k=0,1} p_k \left(\delta_{i,k} \bra{\psi_k}\hat H_S \ket{\psi_j} + \delta_{j,k} \bra{\psi_i}\hat H_S \ket{\psi_k}  \right)  = \frac{1}{2} (p_i + p_j) \bra{\psi_i}\hat H_S \ket{\psi_j},
\end{split}
\end{equation}
such that
\begin{equation}
    \begin{split}
        k_{ii} = p_i \varepsilon_i, \ \ k_{01} &= \frac{1}{2} \bra{\psi_0} \hat H_S \ket{\psi_1} = \frac{\omega}{2} \braket{\psi_0}{1} \braket{1}{\psi_1} \\
        &=  - \frac{1}{2} \sqrt{\varepsilon_0 \varepsilon_1}  \left|\frac{(1-2p) \sqrt{\varepsilon_0 \varepsilon_1}}{\omega \gamma \alpha} \right|^2 = - \frac{1}{2} \sqrt{\varepsilon_0 \varepsilon_1},
    \end{split}
\end{equation}
where we used Eq. \eqref{overlaps_appendix} and \eqref{absolute_value_appendix}.

Next, we consider the second contribution to $g_{ij}$, namely 
\begin{equation}
\begin{split}
    h_{ij} &= \int dE \int dt \ E \ W(E,t) \bra{\psi_i} e^{-i \hat H_S t} \hat \rho_S e^{i \hat H_S t} \ket{\psi_j} \\
    &= \int dt \int dE \ E \ W(E, t) (\alpha e^{i \omega t} \braket{\psi_i}{0} \braket{1}{\psi_j} + \alpha^* e^{-i \omega t} \braket{\psi_i}{1} \braket{0}{\psi_j}).
\end{split}
\end{equation}
Substituting results from Eq. \eqref{overlaps_appendix}, we have 
\begin{equation}
\begin{split}
    h_{ii} &= \int dt \int dE \ E \ W(E, t) (\alpha e^{i \omega t} \braket{\psi_i}{0} \braket{1}{\psi_i} + \alpha^* e^{-i \omega t} \braket{\psi_i}{1} \braket{0}{\psi_i}  ) \\
    &= \frac{\sqrt{\varepsilon_0 \varepsilon_1}}{\omega} \int dt \int dE \ E \ W(E, t) (\alpha e^{i \omega t} e^{i \phi_i} + \alpha^* e^{-i \omega t} e^{-i \phi_i}), \\
    h_{01} &= \frac{1}{\omega} \int dt \int dE \ E \ W(E, t) (\alpha e^{i \omega t} e^{i\phi_1} \varepsilon_1 + \alpha^* e^{-i \omega t} e^{-i\phi_0} \varepsilon_0).
\end{split}    
\end{equation}
Notice that since $e^{i \phi_0} = -e^{i \phi_1}$, then $h_{00} = - h_{11}$. Incorporating this and Eq. \eqref{phase_appendix}, we get the following formulas:
\begin{equation}
\begin{split}
        h_{11} &= - h_{00} = \frac{(1-2p) \varepsilon_0 \varepsilon_1}{\omega^2} \int dt \int dE \ E \ W(E, t) (e^{i \omega t} \frac{1}{\gamma} + e^{-i \omega t} \frac{1}{\gamma^*}) = (1-2p) \varepsilon_0 \varepsilon_1  \eta, \\
       h_{01} &= \frac{(1-2p)\sqrt{\varepsilon_0\varepsilon_1}}{\omega^2} \int dt \ s(t) (e^{i \omega t} \frac{\varepsilon_1}{\gamma } - e^{-i \omega t} \frac{\varepsilon_0}{\gamma^*} ) = (1-2p)\sqrt{\varepsilon_0\varepsilon_1} \xi.
\end{split}
\end{equation}

Then, we put the derived expressions for $g_{ij}$ to the formula \eqref{variance_gain_appendix}, namely
\begin{equation}
\begin{split}
    \Delta \sigma_W^2 &= - w^2 + \frac{1}{1-2p}\left[\varepsilon_0 (2 p_0 - 1) - 2(k_{00}+h_{00}) - \varepsilon_1 (2 p_1 -1) + 2(k_{11}+h_{11}) \right] w  \\
    &+2 [\varepsilon_0 (p_0 \varepsilon_0 - k_{00}-h_{00}) + \varepsilon_1 (p_1 \varepsilon_1 - k_{11}-h_{11}) + \varepsilon_0\varepsilon_1 ] \\
    &+ 4 |k_{01}+h_{01}| \sqrt{(\varepsilon_0 + \frac{w}{1-2p})(\varepsilon_1 - \frac{w}{1-2p})} \cos{\phi} \\
    &= - w^2 + \frac{1}{1-2p}\left[ \varepsilon_1 - \varepsilon_0  - 2h_{00} + 2h_{11} \right] w -2 [\varepsilon_0 h_{00} + \varepsilon_1 h_{11} - \varepsilon_0\varepsilon_1 ] \\
    &+ 4 |k_{01}+h_{01}| \sqrt{(\varepsilon_0 + \frac{w}{1-2p})(\varepsilon_1 - \frac{w}{1-2p})} \cos{\phi} \\
     &= - w^2 + \frac{1}{1-2p}\left[ \varepsilon_1 - \varepsilon_0  + 4 (1-2p) \varepsilon_0 \varepsilon_1 \eta \right] w -2 \varepsilon_0 \varepsilon_1 [(\varepsilon_1 - \varepsilon_0)(1-2p)\eta - 1] \\
     &+ 2|1 - 2(1-2p) \xi| \sqrt{\varepsilon_0 \varepsilon_1(\varepsilon_0 + \frac{w}{1-2p})(\varepsilon_1 - \frac{w}{1-2p})} \cos{\phi} \\
     &= - w^2 + \left(\frac{\varepsilon_1 - \varepsilon_0}{1-2p} + 4 \varepsilon_0 \varepsilon_1 \eta \right) w + 2 \varepsilon_0 \varepsilon_1 [1 - (\varepsilon_1 - \varepsilon_0)(1-2p)\eta] \\
     &+ 2|1 - 2(1-2p) \xi| \sqrt{\varepsilon_0 \varepsilon_1(\varepsilon_0 + \frac{w}{1-2p})(\varepsilon_1 - \frac{w}{1-2p})} \cos{\phi}. \\
\end{split}
\end{equation}
Finally, since $\cos \phi \in [-1,1]$ we proved that
\begin{equation} \label{variance_set_appendix}
\begin{split}
     \Delta \sigma_W^2 \in& - w^2 + \left(\frac{\varepsilon_1 - \varepsilon_0}{1-2p} + 4 \varepsilon_0 \varepsilon_1 \eta \right) w + 2 \varepsilon_0 \varepsilon_1 [1 - (\varepsilon_1 - \varepsilon_0)(1-2p)\eta] \\
 &\pm 2R \sqrt{\varepsilon_0 \varepsilon_1(\varepsilon_0 + \frac{w}{1-2p})(\varepsilon_1 - \frac{w}{1-2p})} 
\end{split}
\end{equation}
where $R = |1 - 2(1-2p) \xi|$. 

\subsection*{Part IV (range of $\Delta E_W$ and $\Delta \sigma_W^2$)}
Let us now prove that for a fixed state $\hat \sigma_S$, a possible values of $w = \Tr[\hat W_S \hat \sigma_S]$ belongs to a set $[-(1-2p) \varepsilon_0, (1-2p) \varepsilon_1]$. Notice that the maximal value of $w$ corresponds to ergotropy of a state $\hat \sigma_S$. To calculate the ergotropy, commonly the density matrix is diagonalized, i.e., one searches for a minimal energy state with fixed spectrum, which is called the passive state. However, since here we work in a basis $\{ \ket{\psi_i} \}$, such that a state $\hat \sigma_S$ is already diagonal, it is easier to diagonalize the final Hamiltonian $\hat H_S' = \hat V_S^\dag \hat H_S \hat V_S$ in this basis. Moreover, since we take the Hamiltonian in a form $\hat H_S = \omega \dyad{1}_S$, we already know its spectrum (i.e., $0$ and $\omega$ are the eigenvalues). Consequently, since we assume that $p\le \frac{1}{2}$, the minimal (maximal) value $w$ is extracted if $\hat H_S' = \omega \dyad{\psi_1}_S$ ($\hat H_S' = \omega \dyad{\psi_0}_S$), namely
\begin{eqnarray}
w_{max} &=& \Tr[\hat W_S \hat \sigma_S] = \Tr[\hat H_S \hat \sigma_S] - \Tr[\hat H_S' \hat \sigma_S] \nonumber \\
&=& p \varepsilon_0 + (1-p) \varepsilon_1 - p (\varepsilon_0 + \varepsilon_1) = (1-2p) \varepsilon_1, \\
w_{min} &=& \Tr[\hat W_S \hat \sigma_S] = \Tr[\hat H_S \hat \sigma_S] - \Tr[\hat H_S' \hat \sigma_S] \nonumber \\
&=& p \varepsilon_0 + (1-p) \varepsilon_1 - (1-p) (\varepsilon_0 + \varepsilon_1) = -(1-2p) \varepsilon_0,
\end{eqnarray}
where we used $\omega = \varepsilon_0 + \varepsilon_1$. Furthermore, the unitary operator $\hat V_S$ can be parameterized by a continuous variables (i.e., angles and phases), hence, there always exist a unitary operator $\hat V_S$, such that $\Tr[\hat W_S \hat \sigma_S] = w$ for arbitrary $w \in [-(1-2p) \varepsilon_0, (1-2p) \varepsilon_1]$. 

Next, we prove that for a fixed $w$, there is a unitary $\hat V_S$, such that $\cos \phi$ takes an arbitrary value from a set $[-1,1]$, i.e., the gain of a variance $\Delta \sigma_W^2$ covers the whole set \eqref{variance_set_appendix}.   Let us consider a unitary operator $\hat V_S(t) = \hat V_S e^{i \dyad{\psi_0} t}$ for arbitrary real $t$, where $\hat V_S$ is also a unitary operator. Then, we observe that for arbitrary $t_1, t_2$, we have 
\begin{eqnarray}
\Tr[\hat W_S(t_1) \hat \sigma_S] = \Tr[\hat W_S(t_2) \hat \sigma_S] \equiv w,
\end{eqnarray}
where $\hat W_S(t) = \hat H_S - \hat V_S^\dag(t) \hat H_S \hat V_S(t)$, since  $[e^{i \dyad{\psi_0} t}, \hat \sigma_S] = 0$. Next, we have also
\begin{eqnarray}
\bra{\psi_1} \hat H' \ket{\psi_0} = \bra{\psi_1} e^{-i \dyad{\psi_0} t} \hat V_S^\dag \hat H_S \hat V_S e^{i \dyad{\psi_0} t} \ket{\psi_0} = e^{i t} \bra{\psi_1} \hat V_S^\dag \hat H_S \hat V_S \ket{\psi_0}.
\end{eqnarray}
Putting it into the definition $\eqref{phi_phase_appendix}$, we get $\phi = t + \delta$, where $\delta$ is a phase of $g_{01} \bra{\psi_1} \hat V_S^\dag \hat H_S \hat V_S \ket{\psi_0}$, and since $t$ is arbitrary, it completes the proof.  


\section{Coherent vs Incoherent work extraction: Gaussian wave packet and Qubit} \label{coherent_incoherent_appendix}
For the arbitrary Gaussian state of the weight in the form \eqref{gaussian_states}, the characteristic function \eqref{gamma_function} is given by:
\begin{eqnarray} \label{characteristic_exponential_appendix}
\gamma (\omega) = \int dt \ g(t) \ e^{i \omega t} = e^{-\frac{1}{2} \sigma_t^2 \omega^2 - i \omega \langle t \rangle}.
\end{eqnarray}
where $\sigma_t^2$ is a variance and $\langle t \rangle$ is an expected value of the distribution $g(t)$. Next, we consider a qubit in a state:
\begin{eqnarray}
\hat \rho_S = \frac{1}{2} (\mathbb{1} + \vec x \cdot \vec{\hat \sigma}).
\end{eqnarray}
Its total ergotropy is equal to (see e.g., \cite{Lobejko2020}):
\begin{eqnarray}
R = \frac{\omega}{2} \left(|\vec x| - z \right).
\end{eqnarray}
From this, we also immediately obtain the incoherent part putting $x=y=0$, namely
\begin{equation}
    R_I = \frac{\omega}{2} \left(|z| - z \right),
\end{equation}
and then the coherent contribution by a difference, i.e.:
\begin{eqnarray}
R_C = R - R_I = \frac{\omega}{2} \left(|\vec x| - |z| \right).
\end{eqnarray}
Next, the control-marginal state $\hat \sigma_S$ represented in the Pauli matrix basis has a vector:
\begin{eqnarray}
\vec x_C = \left(|\gamma(\omega)| (x \cos(\omega \langle t \rangle) - y \sin(\omega \langle t \rangle)), |\gamma(\omega)| (x \cos(\omega \langle t \rangle) + y \sin(\omega \langle t \rangle)), z \right)
\end{eqnarray}
such that 
\begin{eqnarray}
|\vec x_C| = \sqrt{|\gamma(\omega)|^2(x^2+y^2) + z^2}.
\end{eqnarray}
Consequently the coherent ergotropy of the control-marginal state is equal to:
\begin{equation}
     R_C = \frac{\omega}{2} \left(\sqrt{|\gamma(\omega)|^2 \alpha^2 + z^2} - |z|\right)
\end{equation}
where we put $x^2+y^2 = \alpha^2$. From this, we have 
\begin{equation}
  |\gamma(\omega)|^2 =\frac{4 R_C(R_C + |z| \omega)}{\omega^2 \alpha^2}, 
\end{equation}
and then putting Eq. \eqref{characteristic_exponential_appendix} we get
\begin{equation}
 \sigma_t = \frac{1}{\omega} \sqrt{\log[\frac{\omega^2 \alpha^2}{4 R_C(R_C + |z| \omega)}]}.
\end{equation}
Finally, we apply HUR \eqref{HUR} for a final state of the weight, where $\sigma_t = \sigma_t^{(i)} = \sigma_t^{(f)}$, i.e.,
\begin{eqnarray}
\sigma_E^{(f)} \sigma_t \ge \frac{1}{2},
\end{eqnarray}
which results in inequality:
\begin{eqnarray}
\sigma_E^{(f)} \ge \frac{\omega}{4 \sqrt{\log[\frac{\omega^2 \alpha^2}{4 R_C(R_C + |z| \omega)}]}}.
\end{eqnarray}

Now, let us derive a lower bound for $\sigma_E^{(f)}$ if the process is incoherent. In this case, we have
\begin{eqnarray}
\sigma_E^{(f)} \ge \sqrt{\Var_{\hat \rho_S} [\hat W_S]}.
\end{eqnarray}
Next, we assume that total incoherent ergotropy is extracted, i.e., $\langle \hat W_S \rangle_{\hat \rho_S} = R_I$. This however is done only through the incoherent operation that swap the ground level with excited level (i.e., $\hat V_S = \hat \sigma_x$). Then,
\begin{eqnarray}
\hat W_S = \hat H_S - \hat V_S^\dag \hat H_S \hat V_S = \hat H_S - \hat \sigma_x \hat H_S \hat \sigma_x = \omega \left(\dyad{1}_S - \dyad{0}_S \right) = - \omega \hat \sigma_z
\end{eqnarray}
such that $\hat W_S^2 = \omega^2$. Finally, we have
\begin{eqnarray}
\sigma_E^{(f)} \ge \sqrt{\Var_{\hat \rho_S} [\hat W_S]} = \sqrt{\langle \hat W_S^2 \rangle_{\hat \rho_S} - \langle \hat W_S \rangle_{\hat \rho_S}^2} = \sqrt{\omega^2 - R_I^2}.
\end{eqnarray}


\end{document}